\documentclass[longbibliography]{amsart}
\usepackage[unicode=true,pdfusetitle, bookmarks=true,bookmarksnumbered=false,bookmarksopen=false, breaklinks=false,pdfborder={0 0 0},backref=false,colorlinks=false]{hyperref}
\hypersetup{colorlinks,linkcolor=myurlcolor,citecolor=myurlcolor,urlcolor=myurlcolor}
\usepackage{graphics,epstopdf,graphicx,amsthm,amsmath,amssymb,mathptmx,braket,colortbl,color,bm,framed,mathrsfs}
\usepackage[up]{subfigure}
\usepackage{tikz}
\usepackage{multirow}
\usepackage{enumerate}
\usepackage{float}
\usepackage{cite}

\usepackage{arydshln}

\usepackage{tabularx}
\usepackage{stmaryrd}

\usepackage{booktabs} 
\let\hbar\undefined
\usepackage{fouriernc} 

\makeatletter
\def\l@section{\@tocline{1}{0pt}{1pc}{}{}}
\def\l@subsection{\@tocline{2}{0pt}{1pc}{4.6em}{}}
\def\l@subsubsection{\@tocline{3}{0pt}{1pc}{7.6em}{}}
\renewcommand{\tocsection}[3]{%
  \indentlabel{\@ifnotempty{#2}{\makebox[2.3em][l]{%
    \ignorespaces#1 #2.\hfill}}}#3}
\renewcommand{\tocsubsection}[3]{%
  \indentlabel{\@ifnotempty{#2}{\hspace*{2.3em}\makebox[2.3em][l]{%
    \ignorespaces#1 #2.\hfill}}}#3}
\renewcommand{\tocsubsubsection}[3]{%
  \indentlabel{\@ifnotempty{#2}{\hspace*{4.6em}\makebox[3em][l]{%
    \ignorespaces#1 #2.\hfill}}}#3}
\makeatother
\setcounter{tocdepth}{3}

\definecolor{myurlcolor}{rgb}{0,0,0.9}

\newcommand{\proj}[1]{| #1\rangle\!\langle #1 |}

\newcommand{\ep}[1]{\langle #1 \rangle}

\DeclareMathOperator{\trace}{Tr}

\theoremstyle{plain}
\newtheorem{thm}{Theorem}
\newtheorem{lem}[thm]{Lemma}
\newtheorem{prop}[thm]{Proposition}
\newtheorem{cor}[thm]{Corollary}

\newtheorem{Def}[thm]{Definition}
\newtheorem{Rem}[thm]{Remark}

\newtheorem{Examp}[thm]{Example}

\newcommand*{\myproofname}{Proof}

\def\ot{\otimes}
\def\complex{\mathbb{C}}
\def\CN{\mathbb{C}}

\def\Z{\mathbb{Z}}

\def\cG{\mathcal{G}}

\def\cC{\mathcal{C}}

\newcommand{\CCC}{\mathcal C}

\newcommand{\CEE}{\mathcal E}

\newcommand{\BFF}{\mathbb F}

\newcommand{\CGG}{\mathcal G}



\newcommand{\CNN}{\mathcal N}

\newcommand{\CSS}{\mathcal S}

\newcommand{\be}{\begin{equation}}
\newcommand{\ee}{\end{equation}}

\renewcommand{\ge}{\geqslant}
\renewcommand{\geq}{\geqslant}
\renewcommand{\leq}{\leqslant}
\renewcommand{\le}{\leqslant}

\newcommand{\rank}{\operatorname{rank}}
\newcommand{\row}{\operatorname{row}}

\DeclareMathAlphabet{\mathcal}{OMS}{cmsy}{m}{n}

\makeatother

 \title{ Quantum locally recoverable code with intersecting recovery sets}

\author{Kaifeng Bu}
\author{Weichen Gu}
\author{Xiang Li}

\address{Department of Mathematics, Ohio State University, Columbus, OH 43210, USA}

\begin{document}

\maketitle

\begin{abstract}
We introduce the concept of quantum locally recoverable codes (qLRCs) with intersecting recovery sets. We derive a singleton-like bound for these codes by leveraging the additional information provided by the intersecting recovery sets. Furthermore, we provide a construction for qLRCs with intersecting recovery sets by introducing a variation of the hypergraph product. Finally, we apply our qLRC methods to obtain improved results for classical LRCs. These results may provide new insights into  the locality of
quantum error correction code.

\end{abstract}

\maketitle

\setcounter{tocdepth}{2}
\tableofcontents

\section{Introduction}

In classical computation, locally recoverable codes (LRCs) have become increasingly important in modern data storage systems due to their efficiency in handling errors and failures, particularly in distributed storage environments. A key feature of these codes is their ability to recover from any single error by accessing only a small, localized subset of the stored data~\cite{PD2014,TB14,TBIEEE2014,CM15,WZ2014}. This localized recovery significantly reduces the bandwidth and latency associated with the recovery process, a critical advantage for large-scale systems such as cloud storage and data centers. 

Quantum error correction codes (QECCs), analogous to their classical counterparts, have been developed to protect quantum information from errors arising from decoherence and other forms of quantum noise. In recent years, the locality properties of these codes have garnered significant attention, leading to the development of quantum low-density parity-check codes (qLPDCs) and quantum locally testable codes (qLTCs).

For example, qLDPCs are designed to correct errors in quantum data by using a small number of parity checks, which reduces the complexity of error correction procedures and makes the implementation more feasible with current technology~\cite{Panteleev2022asymptotically,Leverrier2022quantum,BEPRQ21,BEIEEE21,PKIEEE22,HHO21}.
Furthermore, qLTCs are quantum codes that can determine, by acting on only a small, ideally constant-sized, subset of qubits, whether a given state is a valid codeword~\cite{AE15,EH2017,Cross2024quantum,Wills2024npj,Wills25}. They are closely related to quantum checkable probabilistic proof (qPCP) conjecture~\cite{aharonov2013guest}.

Beyond these two types of quantum codes with locality constraints, Golowich and Guruswami introduced a quantum version of locally recoverable codes (qLRCs)~\cite{GG23}. These codes enable the recovery of a single-qudit error within a codeword by accessing only a small set of local qudits. This approach offers a new avenue for investigating locality properties in quantum codes.

Further research has expanded on the concept of qLRCs, exploring variations such as qLRCs within the CSS framework~\cite{Luo2023bounds}, constructions based on suitable polynomials from classical coding theory~\cite{Sharma2024quantum}, and generalizations addressing simultaneous failures of multiple devices~\cite{Galindo2024quantum}.

In this work, we investigate the qLRCs with multiple recovery sets, which means that there is more than one local subset of qudits capable of recovering an erased qudit.
Employing multiple recovery sets has the potential to enhance the performance and properties of qLRCs. Furthermore, these recovery sets are not required to be disjoint; indeed, Theorem 74 in~\cite{GG23} demonstrates that qLRCs with disjoint recovery sets are trivial. Therefore, we focus on qLRCs with intersecting recovery sets.

\subsection{Summary of main results}
We introduce the concept of qLRCs with intersecting recovery sets, formally defined in Definition \ref{241130def1}. We denote these codes as $(r,t,s)$-qLRC, where 
each qudit has $t$ recovery sets, the size of each recovery set is bounded by $r$, and the size of the intersection of any two recovery sets
for the same qudit is bounded by $s$. We derive a singleton-like bound for these codes and demonstrate its improvement over the previous bound presented in Ref. \cite{GG23}, leveraging the additional information provided by the intersecting recovery sets. Furthermore, we propose a construction for qLRCs with intersecting recovery sets by introducing a variation of the hypergraph product. Finally, we apply our qLRC methods to obtain improved results for classical LRCs.

\section{Preliminary}
In this section, let us introduce some basic knowledge about QECCs and qLRCs.
Here, we consider a quantum system with Hilbert space $\mathcal{H}$, which could represent a qubit system
$\complex^2$ or  a qudit system $\complex^d$. 
The $n$-fold tensor product of this system, denoted $\mathcal{H}^{\ot n}$, represents an $n$-qubit or $n$-qudit system.

\begin{Def}
    A quantum code $\mathcal{C}$ with parameters $[n,k,d]$ encodes $k$ logical qubits into $n$ physical qubits, defining 
    a $2^k$-dimensional subspace within the n-qubit Hilbert space. This code is 
    capable of correcting errors acting on $\lfloor (d-1)/2\rfloor$ qubits.
    \end{Def}

A key result in QECCs is the quantum Singleton bound, which establishes a fundamental trade-off between the number of physical qubits used for encoding, the number of encoded logical qubits, and the code's error-correction capability. This bound sets fundamental limits on code performance and guides code design.

\begin{lem}[Quantum Singleton bound]
Any QECC with parameters $[n,k,d]$ satisfies the following bound:
    \begin{align}
        k\leq n-2(d-1).
    \end{align}

\end{lem}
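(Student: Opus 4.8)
The plan is to prove the bound by a dimension-counting argument on the subsystems of the code, exploiting the fact that a $[n,k,d]$ code corrects erasures on any $d-1$ qubits. First I would recall the standard reformulation: a quantum code with distance $d$ can correct the \emph{erasure} of any set of $d-1$ qubits (this is the erasure-correction form of the Knill--Laflamme conditions, which is a factor of two stronger in the number of locations than general error correction). Concretely, if $\cC \subseteq \cH^{\otimes n}$ is the codespace with $\dim\cC = 2^k$, then for any subset $E$ of qubits with $|E| = d-1$, the erased qubits carry no information about the logical state: the reduced state on $E^c$ determines the logical state, and moreover the reduced state on $E$ is independent of the codeword.

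The key steps, in order, are as follows. Split the $n$ qubits into three blocks $A$, $B_1$, $B_2$ where $|B_1| = |B_2| = d-1$ and $|A| = n - 2(d-1)$ (assuming $n \ge 2(d-1)$; otherwise the bound is only interesting when it is, and one handles the degenerate case separately). Purify the maximally mixed encoded state by introducing a reference system $R$ with $\dim R = 2^k$, obtaining a pure state $\ket{\Psi}$ on $R A B_1 B_2$ that is maximally entangled between $R$ and the code. Since the code corrects erasure of $B_1$, the reference $R$ is decoupled from $B_1$, i.e.\ the reduced state $\rho_{R B_1} = \rho_R \otimes \rho_{B_1}$; equivalently, in terms of von Neumann entropies, $S(R B_1) = S(R) + S(B_1)$, which by purity of $\ket\Psi$ gives $S(A B_2) = S(R) + S(B_1)$. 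Symmetrically, correcting erasure of $B_2$ gives $S(A B_1) = S(R) + S(B_2)$. Now I would apply subadditivity and the Araki--Lieb inequality to the system $A B_1 B_2$: combining $S(AB_2) \ge S(R)$ with the bound $S(AB_2) \le S(A) + S(B_2) \le S(A) + (d-1)$ and $S(R) = k$, together with $S(A) \le |A| = n - 2(d-1)$, yields $k \le n - 2(d-1)$ after collecting terms. (Equivalently, one can run the cleaner argument: from $S(AB_2) = S(R)+S(B_1)$ and strong subadditivity / Araki--Lieb one extracts $2(d-1) \ge 2 S(R)/\ldots$; the arithmetic is routine once the two decoupling identities are in hand.)

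The main obstacle is setting up the decoupling identities correctly and making sure the entropic inequalities are combined in the right direction — in particular, being careful that $S(B_i) \le d-1$ (each block has $d-1$ qubits) is used as an upper bound while $S(R) = k$ is the quantity we want to lower-bound, and that purity of the global state is invoked to trade $S(AB_2)$ for $S(RB_1)$ and vice versa. A secondary point worth stating explicitly is \emph{why} erasure correction of $d-1$ qubits holds for a distance-$d$ code: this follows because an erasure at a known location is a strictly easier task than correcting an unknown error, and the Knill--Laflamme conditions for correcting all errors supported on $d-1$ qubits are exactly the conditions for correcting the erasure of those $d-1$ qubits. I would cite this as standard (Knill--Laflamme, or the textbook treatment) rather than reprove it. The whole argument is the quantum analogue of the classical Singleton bound proof, with entropies playing the role of dimension counts and the factor $2$ coming from the fact that quantum erasure correction "costs" two syndromes' worth of redundancy per corrected location.
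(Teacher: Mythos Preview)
Your approach is essentially identical to the paper's: the paper does not prove the Singleton bound separately but obtains it from a more general lemma (Lemma~\ref{241028lem2}), whose proof is exactly the purification-plus-entropy argument you describe, with the paper's $(A_1,A_2,W)$ playing the role of your $(B_1,B_2,A)$. One caution on your arithmetic: the single chain $k = S(R) \le S(AB_2) \le S(A)+S(B_2) \le (n-2(d-1))+(d-1)$ that you spell out only yields $k\le n-(d-1)$; as you note parenthetically, you must add it to its $B_1\leftrightarrow B_2$ twin so that the $S(B_i)$ terms cancel and $S(R)\le S(A)\le n-2(d-1)$ follows cleanly.
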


One important family of QECCs is called stabilizer codes. 
To define these codes, let us first introduce the Pauli operators.
The 1-qubit Pauli operators are generated by the Pauli X and Z operators, defined as:
\begin{align}
    &X:\ket{0}\to \ket{1}, \ket{1}\to \ket{0};\\
    &Z:\ket{0}\to \ket{0}, \ket{1}\to -\ket{1}.
\end{align}
The $n$-qubit Pauli group, denoted $\mathcal{P}_n$, is then generated by:
\begin{align}
    \mathcal{P}_n=\set{\pm 1, \pm i}\times \set{X^{\vec a}Z^{\vec b}: \vec a=(a_1, a_2,...,a_n), \vec b=(b_1, b_2, ...,b_n)\in \set{0,1}^{n}},
\end{align}
where $X^{\vec a}:=X^{a_1}\ot X^{a_2}\ot...\ot X^{a_n}$, and $Z^{\vec b}:=Z^{b_1}\ot Z^{b_2}\ot...\ot Z^{b_n}$.

A stabilizer group $G$ is an abelian subgroup of the Pauli 
group $\mathcal{P}_n$ (i.e., all elements commute) that does not include 
$-I$. Formally, this group  $G$ can be generated 
by $k$ Pauli operators that commute with each other, expressed as $G=\langle g_1,..., g_k\rangle$, where each 
$g_i\in \mathcal{P}_n$ and the commutator  $[g_i,g_j]=0, \forall i,j\in [k]$.

\begin{Def}
A quantum code $\mathcal{C}$ is a stabilizer code  if  its code space is the joint +1 
 eigenspace of some stabilizer group $G$. 
Formally, 
\begin{align}
    \mathcal{C}=\set{\ket{\psi}: g\ket{\psi}=\ket{\psi}, \forall g\in G}.
\end{align}
\end{Def}

An important class of stabilizer code is the Calderbank-Shor-Steane (CSS) code~\cite{CS96,Ste96}. In CSS codes,  the generators of the stabilizer group consist of either X-type (i.e., $X^{\vec a}$) or
Z-type (i.e., $Z^{\vec b}$) Pauli operators.
The structure of the stabilizer generator matrix for CSS codes is characterized as:

\begin{equation}
 \left( \begin{array}{cc}
    0   & H_Z \\
     H_X  & 0 
  \end{array}
  \right),
\end{equation}
where $H_XH^T_Z=0$.  
This constraint ensures that the rows of $H_X$ and $H_Z$ in the generator matrices, which correspond, respectively, to X-type and Z-type Pauli operators, are mutually orthogonal, guaranteeing commutation among all elements of the stabilizer group.
We will denote this stabilizer code as $CSS(H_X,H_Z)$.
Note that the above concepts, including Pauli operators, stabilizer codes, and CSS codes, can be easily generalized to qudit systems.

\begin{Def}[\cite{GG23}]
A qLRC with parameter $[n,k,d,r]$ encodes $k$ logical qubits/qudits into $n$ physical qubits/qudits
has a code distance $d$, and is equipped with a set of local recovery channels   $\mathrm{Rec}_j$ for each $j\in [n] $, satisfying the following properties: 

1. For each $j\in [n]$, there  exists a set $I_j\subseteq [n]$  of size $|I_j |\le r+1$ with $j\in I_j$, such that $\mathrm{Rec}_j$ is a quantum channel supported on qudits in $I_j $.

2. For every code state $\psi \in \CCC $ we have
$$\mathrm{Rec}_j  \left(\trace _j(\proj{\psi} )\right) =\proj{\psi}. 
$$

\end{Def}

The parameters in qLRC satisfy the following bound  \cite{GG23} :
\begin{align}\label{241226eq1}
k \le n- 2(d - 1) - \left\lfloor \frac{n-(d-1)}{r+1} \right\rfloor -  \left\lfloor \frac{n-2(d-1) -  \left\lfloor \frac{n-(d-1)}{r+1} \right\rfloor}{r+1} \right\rfloor,
\end{align}
which is tighter than the quantum Singleton bound
$k\le n- 2(d-1).$

\section{Quantum locally recoverable code with intersecting recovery sets}
In this section, we consider qLRCs with intersecting recovery sets and provide the following formal definition.
Let us first recall the definitions of classic $(r,t,s)$-LRCs \cite{kruglik2017one}.
Recall a classic $(r,t,s)$-LRC is a code $\CCC$ satisfying: for every $j\in [n]$, there exist $t$ subsets $\Gamma_1(j),...,\Gamma_t(j) \subseteq [n]$ such that,
(a)  $\forall j\in \Gamma_l(j)$ and $|\Gamma_l(j)|= r+1$, $\forall l \in\set{1,2,...,t}$;
(b) for every two code words $c_1,c_2\in \CCC$, if $c_1(j) \neq c_2(j)$, then
$c_1|_{\Gamma_l(j) \setminus \{j\} } \neq c_2|_{\Gamma_l(j) \setminus \{j\} }$, for $l\in \set{1,2,...,t}$;
(c) for every different $l_1,l_2\in [t]$, 
$|\Gamma_{l_1}(j)\cap \Gamma_{l_2}(j)| \le s+1$.

Now, let us give the formal definition of $(r,t,s)$-qLRCs.

\subsection{qLRCs with intersecting recovery sets}
\begin{Def}\label{241130def1}
An $(r,t,s)$-qLRC is a quantum code $\CCC$ satisfying the following properties:

1. Each qudit $j\in [n] $ is assigned with $t$ subsets $\Gamma_1(j),...,\Gamma_t(j) \subseteq [n]$,
each $\Gamma_l(j)$ has size $|\Gamma_l(j)|\le r+1$ and $j\in \Gamma_l(j)$, $\forall l\in \set{1,2,\dots,t}$;

2. For each $j\in [n] $, there exist $t$ local recovery channels $\mathrm{Rec}_{1,j},...,\mathrm{Rec}_{t,j}$,
such that $\mathrm{Rec}_{i,j}$ is a quantum channel supported on qudits in $\Gamma_i(j)$;

3. For any code state $\psi \in \CCC $ and any qudit $j\in [n]$,  we have
$$\mathrm{Rec}_{i,j}  \left(\trace _j(\psi )\right) =\psi,\quad \forall 1\le i\le t; 
$$

4. For any different $l_1,l_2\in [t]$, 
\[ |\Gamma_{l_1}(j)\cap \Gamma_{l_2}(j)| \le s+1.\]

\end{Def}

Note that, we use $\psi$ to denote the density matrix $\proj{\psi}$ for simplicity. 
One important result in proving the quantum Singleton bound is the following lemma, see~\cite{Huber2020quantum,GHW_IEEE22,GG23}. 

\begin{lem}\label{241028lem2}
For any quantum code $\CCC$ encoding $k$ logical qubits/qudits in  $n$ physical qubits/qudits.
If there are two disjoint sets $A_1,A_2\subseteq [n]$ such that
$\CCC$ can  correct the errors support on $A_1$, and can also correct the errors support on $A_2$. 
Then 
\[k\le n- |A_1| - |A_2|.\]
\end{lem}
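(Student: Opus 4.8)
The plan is to reduce this to a statement about the existence of a ``complementary'' recovery property and then invoke the standard no-cloning / information-theoretic argument used for the quantum Singleton bound. First I would recall the Knill--Laflamme characterization of error correction: a quantum code $\CCC$ can correct erasures on a set $A \subseteq [n]$ if and only if the reduced state $\trace_A(\psi)$ determines $\psi$, equivalently the complement $A^c$ carries full information about the logical system and $A$ carries none, i.e. $\trace_{A^c}(\psi)$ is independent of the encoded state. The key classical fact to import is that correctability of erasures on $A_1$ and on $A_2$ (with $A_1,A_2$ disjoint) forces, via a purification/decoupling argument, that the logical information can be recovered from $[n]\setminus(A_1\cup A_2)$ while being decoupled from a reference in two independent ways.

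The cleanest route is a dimension-counting (entropy) argument. Introduce a maximally entangled reference system $R$ of dimension $d^k$ and let $\ket{\Psi}_{R[n]}$ be the purified encoded state. Correctability of erasures on $A_i$ is equivalent to $\trace_{R}$-reduced decoupling: $I(R:A_i)=0$, and simultaneously $S(R)=S(R\,|\, A_i^c)$ contributions force $S(R) \le S(A_i^c) - S(A_i)$-type relations. More concretely, I would use the known equivalence ``$\CCC$ corrects erasure on $A$'' $\iff$ ``$\rho_{RA}=\rho_R\otimes\rho_A$'' and the fact that for a pure state on $R[n]$ this gives $S(RA) = S(R)+S(A)$ and hence $S(A^c) = S(R) + S(A)$. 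Applying this to $A_1$ and to $A_2$ and combining with subadditivity $S(A_1 \cup A_2) \le S(A_1)+S(A_2)$ and the pure-state identity $S((A_1\cup A_2)^c) = S(R (A_1\cup A_2))$, one extracts $2S(R) \le S(A_1)+S(A_2) + \big(\text{correction terms}\big)$ and then bounds the entropies by $\log d$ per qudit. Since $S(R) = k\log d$ and there are $n$ qudits total, rearranging yields $k \le n - |A_1| - |A_2|$.

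Concretely I would carry it out in this order: (i) purify and fix notation; (ii) translate ``corrects errors on $A_i$'' into the decoupling condition $\rho_{RA_i}=\rho_R\otimes\rho_{A_i}$ (this is exactly the content underlying the cited references \cite{Huber2020quantum,GHW_IEEE22,GG23}, so it may be quoted); (iii) derive $S(A_i^c) - S(A_i) = S(R)$ for $i=1,2$ from the decoupling condition plus purity; (iv) add the two equations, use $S(A_1)+S(A_2) \ge S(A_1\cup A_2)$ (subadditivity) and $S(A_1^c)+S(A_2^c)$ handled via $A_1 \subseteq A_2^c$, $A_2 \subseteq A_1^c$ and the complement/purity identities, to get $2S(R) \le S(A_1^c \cap A_2^c) + \dots \le (n-|A_1|-|A_2|)\log d + \text{(vanishing terms)}$; (v) substitute $S(R)=k\log d$ and divide by $\log d$.

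The main obstacle I anticipate is bookkeeping in step (iv): making the combination of the two decoupling identities tight enough that the correction terms genuinely cancel rather than leaving a slack of order $|A_1 \cap A_2|\log d$ — but here $A_1$ and $A_2$ are \emph{disjoint}, which is precisely what makes $A_1 \subseteq A_2^c$ and $A_2 \subseteq A_1^c$ available and lets the inequalities close exactly. An alternative to avoid entropy entirely is the operational no-cloning argument: from correctability on $A_1$ and on $A_2$ one builds, using the two recovery maps, two ``copies'' of the logical state supported on the disjoint regions $A_1^c$ and $A_2^c$ overlapping only on $[n]\setminus(A_1\cup A_2)$; a counting of degrees of freedom then forces $2k\log d \le 2(n-|A_1|-|A_2|)\log d + \text{(shared region)}$, which again gives the bound. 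I would present whichever of these the earlier references set up most directly, most likely the decoupling/entropy version since Lemma \ref{241028lem2} is stated in exactly the form those papers prove.
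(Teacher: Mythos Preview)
Your approach is essentially the paper's own proof: purify with a reference $R$, use Knill--Laflamme/decoupling to get $S(R)+S(A_i)=S(RA_i)=S(A_i^c)$, and then apply subadditivity to $A_i^c = A_j \cup W$ (with $W=[n]\setminus(A_1\cup A_2)$, $\{i,j\}=\{1,2\}$) to obtain $S(R)+S(A_i)\le S(A_j)+S(W)$; adding the two inequalities gives $k=S(R)\le S(W)\le n-|A_1|-|A_2|$. The bookkeeping worry in your step~(iv) disappears once you apply subadditivity to $A_i^c$ rather than to $A_1\cup A_2$, and the $S(A_1),S(A_2)$ terms cancel exactly---no residual ``correction terms'' arise.
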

\begin{proof}
For self-containing, we provide the proof.
First, by the Knill-Laflamme conditions~\cite{KL_97}, the reduced density matrix $\rho_{A_1}$
is identical for all code states $\rho\in\CCC$. Similarly, the reduced density matrix  $\rho_{A_2}$ is also identical.

Let $W = [n]\setminus (A_1\cup A_2)$.
Let $R$ be an ancilla system with $k$ qubits/qudits, and 
define the maximal entangled state $\ket{\phi}$ between the 
the ancilla system $R$ and the code space $\CCC$ as 
follows
\begin{align}
\ket{\phi} = \frac{1}{\sqrt{d^k}} \sum_{x\in \Z_d^k} \ket{x}_R\otimes  \ket{\mathrm{Enc}(x)} ,
\end{align}  
where $\mathrm{Enc} $
 is the encoding map of the code space $\CCC$.

Let us denote $S(A)$ to be the von Neumann entropy of the reduced state 
$\rho_A$ in the code state $\psi$. 
Then
\begin{align}\label{241027eq1}
S(R) + S(A_1) = S(R A_1) = S(A_2W) \le S(A_2) + S(W),
\end{align}
where the first equality  follows from the fact that 
 all code states $\ket{\psi}\in\CCC$ have the
same reduced density matrix $\rho_{A_1}$, 
and the second equality follows from the fact that $\psi$ is pure, 
and the inequality follows from the subadditivity of entropy.
Similarly, we also have 
\begin{align}\label{241027eq2}
S(R) + S(A_2) = S(R A_2) = S(A_1W) \le S(A_1) + S(W).
\end{align}
Combining equations \eqref{241027eq1} and \eqref{241027eq2}, we have
\[k= S(R)  \le   S(W) \le n-|A_1|-|A_2|, \]
which completes the proof.

\end{proof}

The following theorem provides an upper bound of the local dimension $k$ involving the parameters related to local recoverability. 
To state it, we need the following notation:
\begin{align}\label{241208eq1}
p_2(r,t,s) = \sum_{1\le L \le t, \; L \equiv 1\bmod 2} \binom{t}{L} \frac{1}{\overline{N} (r,L,s)} -\sum_{1\le L \le t, \; L \equiv 0\bmod 2} \binom{t}{L} \frac{1}{\underline{N} (r,L,s)},
\end{align}
where 
\begin{align}
 &\underline{N}(r, L, s)  \triangleq   \frac{(2 r-(x-1) s)}{2} x+1,\\
&\overline{N}(r, L, s) \triangleq L r+1,
\end{align}
and  $x=\min \set{L,\lfloor r / s\rfloor+1}$ for any $L\in [t]$.
Here $p_2$ will be used to estimate the probability of certain event.

\begin{thm}\label{241208thm1}
If $\CCC$ is an $[n,k]$ quantum qudit/qubit $(r,t,s)$-LRC,
then
\[k\le   n- \left\lceil n p_2(r,t,s)\right\rceil - \left\lceil (n- \left\lceil n p_2(r,t,s)\right\rceil)  p_2(r,t,s)\right\rceil,\]
where $ p_2(r,t,s) $ is defined as in \eqref{241208eq1}. 
\end{thm}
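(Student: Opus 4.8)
The plan is to reduce Theorem~\ref{241208thm1} to Lemma~\ref{241028lem2} by producing, from the $(r,t,s)$-qLRC structure, two disjoint sets $A_1, A_2 \subseteq [n]$ on which $\CCC$ can correct errors, with $|A_1| = |A_2| \ge \lceil n\, p_2(r,t,s)\rceil$ (and more precisely with the iterated-floor/ceiling structure appearing in the statement). The guiding principle, borrowed from the classical LRC bound and from \cite{GG23}, is that if a set $W \subseteq [n]$ is such that every qudit $j \notin W$ has \emph{one} of its recovery sets $\Gamma_l(j)$ entirely contained in $W \cup \{j\}$, then errors on $[n]\setminus W$ can be corrected by running the local recovery channels: the complement of a sufficiently ``closed'' set is correctable. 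So the combinatorial core is a lower bound on how large a correctable set $A$ can be made, i.e.\ an upper bound on the minimal size of such a ``closure'' $W$.

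First I would set up the probabilistic/counting argument that explains the factor $p_2(r,t,s)$. Consider choosing a uniformly random subset $W \subseteq [n]$ (or processing qudits in a random order); for a fixed qudit $j$, estimate the probability that at least one of its $t$ recovery sets $\Gamma_l(j)$ has its $r$ non-$j$ elements all fall in $W$. Because the recovery sets intersect in at most $s+1$ qudits, an inclusion–exclusion over which of the $t$ sets are ``good'' can be controlled: for a family of $L$ of the recovery sets, the union $\bigcup \Gamma_l(j) \setminus\{j\}$ has size between $\underline N(r,L,s)-1$ and $\overline N(r,L,s)-1 = Lr$, where the lower bound comes from the fact that beyond $x=\min\{L,\lfloor r/s\rfloor+1\}$ sets the overlaps are forced to saturate. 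Summing $\binom tL$ with alternating signs over odd/even $L$, weighted by $1/\overline N$ and $1/\underline N$ respectively, yields exactly $p_2(r,t,s)$ as a (Bonferroni-type) lower bound on this probability. Hence in expectation at least $n\,p_2(r,t,s)$ qudits can be added to a correctable set, and one extracts a deterministic set $A_1$ with $|A_1| \ge \lceil n\, p_2(r,t,s)\rceil$ on which $\CCC$ corrects errors.

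Next I would iterate: having removed $A_1$, the remaining $n - |A_1|$ qudits, \emph{with their recovery sets still intact inside the ambient code}, admit the same argument, producing a second set $A_2 \subseteq [n]\setminus A_1$ with $|A_2| \ge \lceil (n - \lceil n\, p_2(r,t,s)\rceil)\, p_2(r,t,s)\rceil$ on which errors are also correctable; by construction $A_1$ and $A_2$ are disjoint. Applying Lemma~\ref{241028lem2} to $A_1, A_2$ gives $k \le n - |A_1| - |A_2| \le n - \lceil n\, p_2 \rceil - \lceil (n - \lceil n\, p_2\rceil)\, p_2 \rceil$, which is the claimed inequality.

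The main obstacle I anticipate is making the ``closure'' step rigorous while keeping the bound as sharp as $p_2$: one must argue that the recovery channels can be composed in some order so that the whole of $A_1$ (resp.\ $A_2$) is recovered simultaneously from $[n]\setminus A_1$, not merely one qudit at a time — this is where the intersecting-recovery-set hypothesis and the quantum (rather than classical) setting require care, since recovery channels act on overlapping supports and one needs a consistent sequential decoding, together with the Knill–Laflamme characterization, to certify that $A_1$ is genuinely a correctable set. A secondary technical point is the inclusion–exclusion estimate itself: verifying that truncating at odd $L$ gives a valid lower bound requires the union-size bounds $\underline N \le |\bigcup_{l\in S}\Gamma_l(j)\setminus\{j\}|+1 \le \overline N$ to interact correctly with the alternating sum, i.e.\ a careful Bonferroni argument rather than a naive one. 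I would isolate both of these as separate lemmas before assembling the final estimate.
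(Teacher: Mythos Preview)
Your proposal is essentially the paper's proof: reduce to Lemma~\ref{241028lem2}, build each correctable set by a probabilistic argument giving the factor $p_2(r,t,s)$, recover the set sequentially via the local channels, and iterate on the complement for the second set. The one point to sharpen is the probabilistic device: the paper takes a uniformly random total ordering $\pi$ on $[n]$ (not a random subset $W$), puts $j\in U$ when $j$ is $\pi$-minimal in some $\Gamma_\alpha(j)$, so that $\Pr\bigl(\bigcap_{\alpha\in S}E_\alpha\bigr)=1/\bigl|\bigcup_{\alpha\in S}\Gamma_\alpha(j)\bigr|$ and exact inclusion--exclusion together with $\underline N\le\bigl|\bigcup_{\alpha\in S}\Gamma_\alpha(j)\bigr|\le\overline N$ yields $p_2$; the resulting $U$ is then correctable by recovering its $\pi$-largest element first and peeling downward, which only requires $\Gamma_{\alpha_k}(j_k)\cap\{j_1,\dots,j_{k-1}\}=\emptyset$ --- strictly weaker than your ``closure'' condition, which $U$ will generally not satisfy.
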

\begin{proof}
By Lemma \ref{241028lem2}, we only need to construct two disjoint sets $A\subseteq [n]$ and $B\subseteq [n]$ such that 
$|A|=\left\lceil n p_2(r,t,s) \right\rceil$, $|B|=\left\lceil (n- \left\lceil n p_2(r,t,s)\right\rceil)  p_2(r,t,s)\right\rceil$, 
and the errors  supported on $A$ or $B$ can be corrected. 

To do this, we use the idea from ~\cite{kruglik2017one} to construct 
a sequence of numbers, $j_1,...,j_N\in [n]$, where  
\[N = \left\lceil n p_2(r,t,s) \right\rceil, \]
such that,
for every $1\le k\le N$ there is at least one $\alpha$ satisfying
\begin{align} 
\Gamma_\alpha(j_k) \cap \{j_{1},j_{2},...,j_{k-1}\} = \emptyset.
\end{align} 

First, it has been shown in ~\cite{kruglik2017one} that
if $L\in [t]$ and $x=\min \{L,\lfloor r / s\rfloor+1\}$,  then
\begin{align}\label{eq:key_lem}
    \underline{N}(r, L, s) \leq\left|\bigcup_{l=1}^L \Gamma_l(j) \right| \leq \overline{N}(r, L, s).
\end{align}

Let us choose uniformly a total ordering $\pi$ on the set $[n]$,  
and let $U $ be the subset of $[n]$ defined as follows: 
a number $j$ belongs to $U$ iff there is at least one $\alpha $ such that $\pi(i)>\pi(j)$ for every $i\in \Gamma_\alpha(j)\setminus\{j\}$.
Then the numbers in $U$ ordered in $\pi$ (from small to large) form a  sequence $j_1,...j_m$ satisfying the condition that, for every $1\le k\le m$, there is at least one $\alpha$ such that
\[ \Gamma_\alpha(j_k) \cap \{j_{1},j_{2},...,j_{k-1}\} = \emptyset. \]
Now we need to find a $U$ with a large size.

For a given $j\in [n]$, let $E_\alpha$ be the event that $\pi(i)>\pi(j)$ for every  $i\in \Gamma_\alpha(j)\setminus\{j\}$.
From \eqref{eq:key_lem},
for any subset $S\subseteq [t]$ the probability of  all the events $\{E_\alpha:  \; \alpha\in S\}$ occur simultaneously can be estimated as follows
\begin{align*}
\frac1{\overline{N}(r, |S|, s)} \le \mathrm{Pr}\left( \bigcap_{\alpha\in S}E_\alpha\right) \le \frac1{\underline{N}(r, |S|, s)}.
\end{align*}
Hence
\begin{align*}
\mathrm{Pr}(j\in U) =& \mathrm{Pr}\left( \bigcup_{\alpha=1}^t E_\alpha\right)\\
=& \sum_{l=1}^t (-1)^{l-1} \sum_{S\subseteq [t],\; |S|=l} \mathrm{Pr}\left( \bigcap_{\alpha\in S}E_\alpha\right) \\
\ge &  \sum_{1\le l \le t, \; l \equiv 1\bmod 2} \binom{t}{l} \frac{1}{\overline{N} (r,l,s)} -\sum_{1\le l \le t, \; l \equiv 0\bmod 2} \binom{t}{l} \frac{1}{\underline{N} (r,l,s)}\\
=& p_2(r,t,s) .
\end{align*}

Let $X = |U|$ be the size of $U$, and let $X_j$ be the indicator random variable for $j\in U$, i.e., $X_j=1$ iff $j\in U$, and otherwise $X_j=0$.
For each $j\in [n]$, 
\[\mathbb{E}(X_j) = \mathrm{Pr}(j\in U) \ge p_2(r,t,s). \]
Therefore
\[\mathbb{E}(X ) = \sum_{j=1}^n \mathbb{E}(X_j) \ge n p_2(r,t,s).\]
Hence there must be a specific ordering $\pi$ with
\[|U| \ge   N.\]
Now we pick $N$ numbers in this $U$, and order them in the ordering $\pi$ (from small to large) to form a sequence $j_1,...j_N$.
From the construction of $U$, it satisfies the condition that, for every $1\le k\le N$, there is an $\alpha_k$ such that
\[ \Gamma_{\alpha_k}(j_k) \cap \{j_{1},j_{2},...,j_{k-1}\} = \emptyset. \]

Now, let us show that any Pauli error support on the set
\[A = \{ j_1,...,j_M\},\]
can be recovered.
Assume the Pauli error is given by
\[E= E_{j_1}  E_{j_2}  \cdots  E_{j_M},\]
where each $E_{j_k}$ is a Pauli error acting on the $j_k$-th qudit.
Since $\mathrm{Rec}_{\alpha_M,j_M}$ is a quantum channel supported in $\Gamma_{\alpha_M}(j_M)$ and $j_1,...,j_{M-1} \not\in \Gamma_{\alpha_1}(j_M)$, 
for every codeword $\ket{\psi}$ we have
\begin{align*}
&\mathrm{Rec}_{\alpha_M,j_M} 
\left(  E_{j_1}  E_{j_2}  \cdots  E_{j_M}   \ket{\psi}\bra{\psi}  E_{j_M}^\dag \cdots E_{j_2}^\dag E_{j_1}^\dag \right) \\
=&      E_{j_1}  E_{j_2}  \cdots E_{j_{M-1}}  \cdot\mathrm{Rec}_{\alpha_M,j_M} 
\left( E_{j_M}   \ket{\psi}\bra{\psi}  E_{j_M}^\dag \right) E_{j_{M-1}}^\dag \cdots E_{j_2}^\dag E_{j_1}^\dag   \\
=&    E_{j_1}  E_{j_2}  \cdots   E_{j_{M-1}}  \ket{\psi}\bra{\psi}   E_{j_{M-1}}^\dag \cdots E_{j_2}^\dag E_{j_1}^\dag.
\end{align*}
Similarly, since $j_{1},...,j_{k-1} \not\in \Gamma_{\alpha_k}(j_k)$ for $1\le k\le M$, we have
\begin{align*}
 &  \mathrm{Rec}_{\alpha_1,j_1} 
 \circ \cdots \circ \mathrm{Rec}_{\alpha_M,j_M} 
 \left( E_{j_1}  E_{j_2}  \cdots  E_{j_M}   \ket{\psi}\bra{\psi}  E_{j_M}^\dag \cdots E_{j_2}^\dag E_{j_1}^\dag  \right)
 = \ket{\psi}\bra{\psi}.
\end{align*}

In addition,
we can also find a sequences of numbers, $i_1,...,i_M\in [n]\setminus A$, where  
\[M= \left\lceil (n-|A|) p_2(r,t,s) \right\rceil, \]
such that,
for every $1\le k\le M$ there is at least one $\alpha$ satisfying
\begin{align} 
\Gamma_\alpha(i_k) \cap \{i_{1},i_{2},...,i_{k-1}\} = \emptyset.
\end{align} 

Let us choose uniformly a total ordering $\pi$ on the set $[n]$,  
and we let $U $ be the subset of $[n]\setminus A$ defined as follows: 
a number$j\in [n]\setminus A$ belongs to $U$ iff there is at least one $\alpha $ such that $\pi(i)>\pi(j)$ for every $i\in \Gamma_\alpha(j)\setminus\{j\}$.
Similarly,
for any $j\in [n]\setminus A$,
\begin{align*}
\mathrm{Pr}(j\in U) \ge p_2(r,t,s) .
\end{align*}
Therefore
\[\mathbb{E}(|U| ) = \sum_{j\in [n]\setminus A}  \mathrm{Pr}(j\in U) \ge (n-|A|) p_2(r,t,s).\]
Hence there must be a specific ordering $\pi$ with
\[|U| \ge   M.\]
Now we pick $M$ numbers in this $U$, and order them in the ordering $\pi$ (from small to large) to form a sequence $i_1,...,i_M$.
From the construction of $U$, it satisfies the condition that, for every $1\le k\le N$, there is an $\alpha_k$ such that
\[ \Gamma_{\alpha_k}(i_k) \cap \{i_{1},i_{2},...,i_{k-1}\} = \emptyset. \]

Due to the similar reason, we can  show that any Pauli error support on the set
\[B = \{ i_1,...,i_M\},\]
can be recovered. Therefore, we finish the proof.

\end{proof}

In Figure \ref{fig1}, we compare the upper bounds on $k$ in Theorem \ref{241208thm1} and Equation \eqref{241226eq1}, where $r,t,s$ are fixed.
Note that although the Theorem \ref{241208thm1} does not involve $d$,  it often provides a tighter upper bound for $k$ 
compared to Equation \eqref{241226eq1} in many examples. 
\begin{figure}[htbp]
    \centering
    \begin{minipage}{0.49\textwidth}
        \centering
        \includegraphics[width=\textwidth]{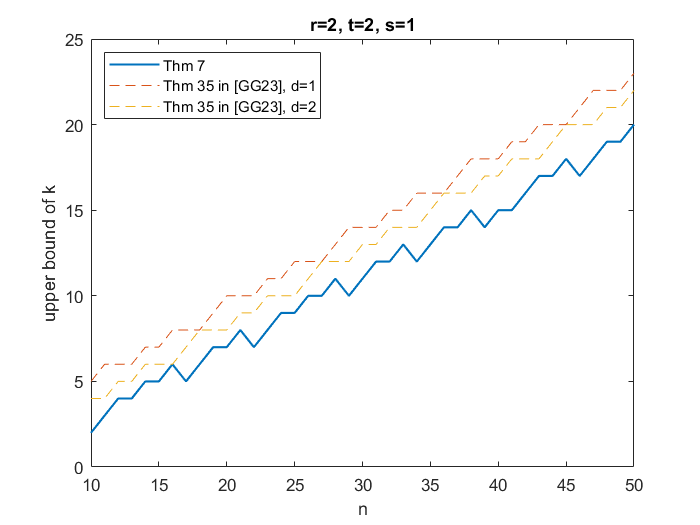}
    \end{minipage}
    \begin{minipage}{0.49\textwidth}
        \centering
        \includegraphics[width=\textwidth]{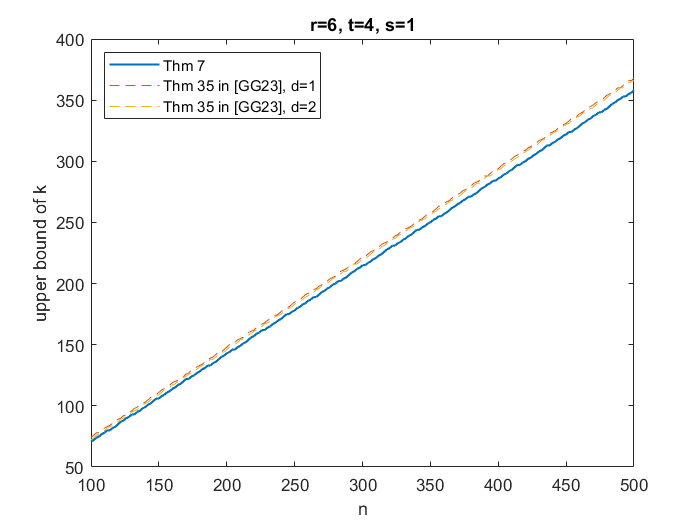}
    \end{minipage}
    \caption{Comparison of the upper bounds of $k$ from Theorem \ref{241208thm1} and Equation \eqref{241226eq1}.}
    \label{fig1}
\end{figure}

\subsection{qLRCs satisfying the exact $(r,t,s)$-condition}

In the previous section, we discussed qLRCs with intersecting recovery sets. We now consider a special case with additional constraints on the recovery sets,
which will make it easy to construct such codes.

\begin{Def}
Given $n,t,r,s \ge 1$.
A family $ \left(\Gamma_l(j)\right)_{1\le l\le t,\; 1\le j\le n}$ of subsets of $[n]$  is said to satisfy the exact $(r,t,s)$-condition if:

1. $|\Gamma_l(j)|= r+1$ and $j\in \Gamma_l(j)$ for all $j$ and $l$;

2. For any $j \in[n]$, $l_1,l_2\in [t]$ with $l_1\neq l_2$, 
\[ |\Gamma_{l_1}(j)\cap \Gamma_{l_2}(j)| = s+1;\]

3. For any distinct $l_1,l_2,l_3\in [t]$, 
\[\Gamma_{l_1}(j)\cap \Gamma_{l_2}(j) \cap \Gamma_{l_3}(j) = \{j\}.\]
\end{Def}

\begin{Def}\label{241130def2}
An exact $(r,t,s)$-qLRC is a quantum code $\CCC\subseteq (\CN^d )^{\otimes n}$ assigned with a family  $ \left(\Gamma_l(j)\right)_{1\le l\le t,\; 1\le j\le n}$ of subsets of $[n]$ satisfying exact $(r,t,s)$-condition, such that 
for each $j\in [n] $ there exist $t$ local recovery channels $\mathrm{Rec}_{1,j},...,\mathrm{Rec}_{t,j}$,
where each $\mathrm{Rec}_{l,j}$ is a quantum channel supported on qudits in $\Gamma_l(j)  $, and for every code state $\psi \in \CCC $ and every qudit $j\in [n]$ we have
$$\mathrm{Rec}_{l,j}  \left(\trace _j(\psi )\right) =\psi,\quad \forall 1\le l\le t. 
$$
\end{Def}


Similarly, we can also define the classic exact $(r,t,s)$-LRCs as follows. 
A classic exact $(r,t,s)$-LRC is a code $\CCC\subseteq \BFF_q^n$ assigned with a family  $ \left(\Gamma_l(j)\right)_{1\le l\le t,\; 1\le j\le n}$ of subsets of $[n]$ satisfying the exact $(r,t,s)$-condition, such that for every two code words $c_1,c_2\in \CCC$, if $c_1(j) \neq c_2(j)$, then
\[c_1|_{\Gamma_l(j) \setminus \{j\} } \neq c_2|_{\Gamma_l(j) \setminus \{j\} }, \quad l\in \set{1,2,...,t}.\]

Note that for quantum or classic exact $(r,t,s)$-LRCs, we always have
$r \ge s (t-1)$. 
In what follows, we investigate the parameters of these codes. First, we present the following lemma, which provides tighter control over the size of the union of the recovery sets.

\begin{lem}\label{241208lem6}
For any family $ \left(\Gamma_l(j)\right)_{1\le l\le t,\; 1\le j\le n}$ of subsets of $[n]$ which satisfies the exact $(r,t,s)$-condition, if $j\in [n]$ and $L\in [t]$,
then
\begin{align}\label{241209eq1}
    \left| \bigcup_{l=1}^L \Gamma_l(j) \right| = N_e (r,L,s)  \triangleq \frac{2r-(L-1)s}{2}L+1.   
\end{align} 
\end{lem}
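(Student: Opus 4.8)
The plan is to prove the identity \eqref{241209eq1} by induction on $L$, using inclusion–exclusion together with the exactness hypotheses. For $L=1$ the claim is immediate since $|\Gamma_1(j)| = r+1 = N_e(r,1,s)$. For the inductive step, assume the formula holds for the union of any $L-1$ of the sets $\Gamma_l(j)$, and consider
\[
\left|\bigcup_{l=1}^{L}\Gamma_l(j)\right| = \left|\bigcup_{l=1}^{L-1}\Gamma_l(j)\right| + |\Gamma_L(j)| - \left|\Gamma_L(j)\cap\bigcup_{l=1}^{L-1}\Gamma_l(j)\right|.
\]
The first two terms are known: $N_e(r,L-1,s) = \frac{2r-(L-2)s}{2}(L-1)+1$ and $r+1$. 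So everything reduces to computing the size of the overlap $\Gamma_L(j)\cap\bigcup_{l=1}^{L-1}\Gamma_l(j)$.

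The key step is to evaluate that overlap. Write $\Delta_l = \bigl(\Gamma_l(j)\cap\Gamma_L(j)\bigr)\setminus\{j\}$ for $l=1,\dots,L-1$; by condition~2 each $\Delta_l$ has exactly $s$ elements. I claim the $\Delta_l$ are pairwise disjoint: if some element $x\neq j$ lay in $\Delta_{l_1}\cap\Delta_{l_2}$ with $l_1\neq l_2$, then $x\in\Gamma_{l_1}(j)\cap\Gamma_{l_2}(j)\cap\Gamma_L(j)$, contradicting condition~3 (since $l_1,l_2,L$ are distinct and the triple intersection is $\{j\}$). Hence
\[
\Gamma_L(j)\cap\bigcup_{l=1}^{L-1}\Gamma_l(j) = \{j\}\cup\bigcup_{l=1}^{L-1}\Delta_l,
\]
a disjoint union, so its size is $1 + (L-1)s$. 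Substituting back,
\[
\left|\bigcup_{l=1}^{L}\Gamma_l(j)\right| = \left(\frac{2r-(L-2)s}{2}(L-1)+1\right) + (r+1) - \bigl(1+(L-1)s\bigr),
\]
and a routine algebraic simplification gives $\frac{2r-(L-1)s}{2}L+1 = N_e(r,L,s)$, completing the induction.

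The only genuine obstacle is the disjointness of the $\Delta_l$, and this is exactly what condition~3 of the exact $(r,t,s)$-condition is designed to give; one must be a little careful that the induction is applied to $\Gamma_1(j),\dots,\Gamma_{L-1}(j)$ as a valid sub-family satisfying the same exact condition (which it clearly does, being a sub-family). I would also remark that since the final answer is independent of which $L$ of the $t$ sets are chosen, the statement as written — referring to $\bigcup_{l=1}^L\Gamma_l(j)$ — loses no generality. Everything else is elementary arithmetic.
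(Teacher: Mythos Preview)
Your proof is correct and follows essentially the same approach as the paper: both add one set at a time and count the new elements contributed, with your argument making explicit (via the pairwise disjointness of the $\Delta_l$, which is exactly what condition~3 guarantees) why adding $\Gamma_L(j)$ contributes precisely $r-(L-1)s$ new elements beyond $\{j\}$. The paper states this increment more informally, but the underlying computation is identical.
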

\begin{proof}
When $L=1$, it is clear that  $\left|  \Gamma_1(j) \right| = 1+ r$. 
After the second set $\Gamma_2(j)$ is added, by exact $(r,t,s)$-condition, $(r-s)$ new elements are included and hence $\left| \bigcup_{l=1}^2 \Gamma_l(j) \right| = 1+ r+(r-s)$. 
Similarly, when the third set $\Gamma_3(j)$ is added, $(r-2s)$ new elements are included, so $\left| \bigcup_{l=1}^3 \Gamma_l(j) \right| = 1+ r+(r-s)+ (r-2s)$. In general, the size of the union can be expressed as follows
\[ \left| \bigcup_{l=1}^L \Gamma_l(j) \right| = 1+ r+ (r-s) + (r-2s) + \cdots + (r-(L-1)s) = 1+\frac{2r-(L-1)s}{2}L.\]

\end{proof}

We denote
\begin{align}\label{241208eq2}
p_e(r,t,s) = \sum_{1\le L\le t } (-1)^{L+1} \binom{t}{L} \frac{1}{N_e(r,L,s)}.
\end{align}
Then, for an $[n,k]$ exact $(r,t,s)$-qLRC, the following theorem establishes an upper bound for the parameter $k$.

\begin{thm}\label{241228thm2}
If $\CCC$ is an $[n,k]$ quantum qudit/qubit exact $(r,t,s)$-LRC,
then
\[k\le   n- \left\lceil n p_e(r,t,s)\right\rceil - \left\lceil (n- \left\lceil n p_e(r,t,s)\right\rceil)  p_e(r,t,s)\right\rceil,\]
where $ p_e(r,t,s) $ is defined as in \eqref{241208eq2}. 
\end{thm}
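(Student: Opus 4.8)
The plan is to mirror the proof of Theorem~\ref{241208thm1} almost verbatim, replacing the probabilistic two-sided estimate $\underline{N}/\overline{N}$ by the \emph{exact} count $N_e(r,L,s)$ supplied by Lemma~\ref{241208lem6}. By Lemma~\ref{241028lem2} it suffices to produce two disjoint sets $A,B\subseteq[n]$ with $|A|=\lceil n\,p_e(r,t,s)\rceil$ and $|B|=\lceil(n-|A|)\,p_e(r,t,s)\rceil$ such that any Pauli error supported on $A$ (respectively on $B$) can be corrected.

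First I would construct $A$. Pick a total ordering $\pi$ of $[n]$ uniformly at random, and let $U\subseteq[n]$ consist of those $j$ for which there exists at least one $\alpha\in[t]$ with $\pi(i)>\pi(j)$ for all $i\in\Gamma_\alpha(j)\setminus\{j\}$. As in the earlier proof, listing the elements of $U$ in increasing $\pi$-order yields $j_1,\dots,j_m$ with the property that for each $k$ there is $\alpha_k$ with $\Gamma_{\alpha_k}(j_k)\cap\{j_1,\dots,j_{k-1}\}=\emptyset$. For a fixed $j$, let $E_\alpha$ be the event that $\pi(i)>\pi(j)$ for all $i\in\Gamma_\alpha(j)\setminus\{j\}$; then for any $S\subseteq[t]$, since $j$ must be the $\pi$-minimum of $\bigcup_{\alpha\in S}\Gamma_\alpha(j)$, we get the \emph{exact} value $\Pr(\bigcap_{\alpha\in S}E_\alpha)=1/\bigl|\bigcup_{\alpha\in S}\Gamma_\alpha(j)\bigr|$. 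By the exact $(r,t,s)$-condition this union depends only on $|S|$: indeed the inclusion–exclusion count of $\bigl|\bigcup_{\alpha\in S}\Gamma_\alpha(j)\bigr|$ over any $|S|=L$ sets gives exactly $N_e(r,L,s)$, which is precisely Lemma~\ref{241208lem6}. Inclusion–exclusion on $\Pr(\bigcup_{\alpha=1}^t E_\alpha)=\Pr(j\in U)$ then yields \emph{exactly}
\[
\Pr(j\in U)=\sum_{L=1}^t(-1)^{L+1}\binom{t}{L}\frac{1}{N_e(r,L,s)}=p_e(r,t,s),
\]
with no inequality lost. Hence $\mathbb{E}|U|=n\,p_e(r,t,s)$, so some ordering achieves $|U|\ge\lceil n\,p_e(r,t,s)\rceil=:|A|$; pick $A$ to be $|A|$ elements of that $U$, ordered by $\pi$.

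The recoverability of errors on $A$ follows exactly as in Theorem~\ref{241208thm1}: writing a Pauli error on $A$ as $E_{j_1}\cdots E_{j_{|A|}}$ and applying $\mathrm{Rec}_{\alpha_{|A|},j_{|A|}}$ first (it acts on $\Gamma_{\alpha_{|A|}}(j_{|A|})$, which contains none of $j_1,\dots,j_{|A|-1}$, so it commutes past those Pauli factors and cleans $E_{j_{|A|}}$), then peeling off $E_{j_{|A|-1}}$, and so on, the composition $\mathrm{Rec}_{\alpha_1,j_1}\circ\cdots\circ\mathrm{Rec}_{\alpha_{|A|},j_{|A|}}$ restores $\proj{\psi}$; by Knill--Laflamme this means $\CCC$ corrects all errors supported on $A$. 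Then I would repeat the whole construction with $[n]\setminus A$ in place of $[n]$: a uniformly random ordering of $[n]$, $U\subseteq[n]\setminus A$ defined by the same local-minimum condition, $\Pr(j\in U)=p_e(r,t,s)$ again (the $\Gamma_\alpha(j)$ may reach into $A$, but that only affects which $j$ lie in $U$, not the probability estimate, which only used that $j$ be the $\pi$-minimum of the relevant union), hence $\mathbb{E}|U|\ge(n-|A|)p_e(r,t,s)$ and a good ordering gives $|B|=\lceil(n-|A|)p_e(r,t,s)\rceil$ elements $i_1,\dots,i_{|B|}$ forming $B\subseteq[n]\setminus A$ with the analogous nesting property; errors on $B$ are corrected by the same peeling argument. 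Applying Lemma~\ref{241028lem2} to the disjoint pair $A,B$ gives $k\le n-|A|-|B|$, which is the claimed bound.

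The only point requiring genuine care — and the natural main obstacle — is verifying that inclusion–exclusion applied to $\bigl|\bigcup_{\alpha\in S}\Gamma_\alpha(j)\bigr|$ over $|S|=L$ sets produces exactly $N_e(r,L,s)=\frac{2r-(L-1)s}{2}L+1$, using conditions~2 and~3 of the exact $(r,t,s)$-condition (pairwise intersections of size $s+1$, triple intersections equal to $\{j\}$); fortunately this is already packaged as Lemma~\ref{241208lem6}, so it can simply be cited, leaving the proof a routine transcription of Theorem~\ref{241208thm1} with equalities in place of the two-sided bounds. One should also note in passing that $p_e(r,t,s)\ge 0$ (so $\lceil\cdot\rceil$ is applied to a nonnegative number and $|B|$ is well-defined), which follows since $p_e$ equals the probability $\Pr(j\in U)$.
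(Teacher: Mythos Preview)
Your proposal is correct and follows essentially the same approach as the paper's own proof: reduce via Lemma~\ref{241028lem2} to constructing two disjoint correctable sets, build each one by the random-ordering argument from Theorem~\ref{241208thm1}, and replace the two-sided $\underline{N}/\overline{N}$ estimates by the exact equality $\Pr(\bigcap_{\alpha\in S}E_\alpha)=1/N_e(r,|S|,s)$ coming from Lemma~\ref{241208lem6}, so that inclusion--exclusion yields $\Pr(j\in U)=p_e(r,t,s)$ exactly. Your added remarks (that the random ordering is taken on all of $[n]$ even when building $B\subseteq[n]\setminus A$, and that $p_e\ge 0$ since it equals a probability) are correct and helpful clarifications of points the paper leaves implicit.
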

\begin{proof}
    The proof is similar to Theorem \ref{241208thm1}. 
    By Lemma~\ref{241028lem2}, we only need construct two disjoint sets $A\subseteq [n]$ and $B\subseteq [n]$ such that 
    $$|A|=\left\lceil n p_e(r,t,s)\right\rceil, \quad|B|=\left\lceil (n- \left\lceil n p_e(r,t,s)\right\rceil)  p_e(r,t,s)\right\rceil,$$
    and the errors supported on $A$ or $B$ can be corrected.

    To construct the sets $A$ and $B$, we only need to show that for any set
     $A\subseteq [n]$,
we can find a sequences of numbers, $j_1,...,j_N\in [n]\setminus A$, where  
\[N = \left\lceil (n-|A|) p_e(r,t,s) \right\rceil, \]
such that
for every $1\le k\le N$, there is at least one $\alpha$ satisfying
\begin{align} 
\Gamma_\alpha(j_k) \cap \{j_{1},j_{2},...,j_{k-1}\} = \emptyset.
\end{align} 

Let us choose uniformly a total ordering $\pi$ on the set $[n]$,  
and we let $U $ be the subset of $[n]\setminus A$ defined as follows: 
a number $j\in [n]\setminus A$ belongs to $U$ iff there is at least one $\alpha $ such that $\pi(i)>\pi(j)$ for every $i\in \Gamma_\alpha(j)\setminus\{j\}$.
Similar to  the proof of Theorem \ref{241208thm1},
for any $j\in [n]\setminus A$, by applying Lemma \ref{241208lem6}, we have
\begin{align*}
\mathrm{Pr}(j\in U) = p_e(r,t,s) .
\end{align*}
Therefore
\[\mathbb{E}(|U| ) = \sum_{j\in [n]\setminus A}  \mathrm{Pr}(j\in U) \ge (n-|A|) p_e(r,t,s).\]
Hence there must be a specific ordering $\pi$ with
\[|U| \ge   N.\]
Now we pick $N$ numbers in this $U$, and order them in the ordering $\pi$ (from small to large) to form a sequence $j_1,...j_N$.
This is the sequence needed. 

Therefore, we can get two sets $A=\set{j_1,...,j_N}$ and $B=\set{i_1,...,i_M}\subseteq [n]\setminus A$ with 
\begin{align}
  M=\left\lceil n p_e(r,t,s)\right\rceil, \quad N=\left\lceil (n- \left\lceil n p_e(r,t,s)\right\rceil)  p_e(r,t,s)\right\rceil,
\end{align}
 such that 
for every $1\le k\le M$ there is at least one $\alpha$ satisfying
\begin{align} 
\Gamma_\alpha(j_k) \cap \{j_{1},j_{2},...,j_{k-1}\} = \emptyset.
\end{align} 
and for every $1\le l\le N$ there is at least one $\alpha'$ satisfying
\begin{align} 
\Gamma_{\alpha'}(i_l) \cap \{i_{1},i_{2},...,i_{l-1}\} = \emptyset.
\end{align} 
Thus, by the similar reasons as in the proof of  Theorem~\ref{241208thm1}, the errors acting on the 
set $A$ or $B$ can be recovered.

\end{proof}

In the results above, we focused solely on the constraints imposed on the number of logical qubits/qudits 
$k$ and the number of physical qubits/qudits $n$ by the parameters related to local recoverability. Now, let us consider a quantum code with parameters 
 $[n,k,d]$  and local recoverability constraints, for which, we first 
 need the following lemmas. 

The first lemma concerns the size of a subgraph within a multigraph. Recall that a multigraph is a graph that allows multiple edges between the same pair of vertices (i.e., distinct edges sharing the same endpoints).

\begin{lem}\label{241227lem1}
Let $G= (V,E)$ be a  multigraph, and let $m\le |V|$.
 Then, there exists a subgraph $G'$ of $G$ containing $m$ vertices and at least 
\[\frac{m(m-1)}{|V|(|V|-1)} |E|.\]
edges.
\end{lem}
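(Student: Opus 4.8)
The plan is to use a probabilistic (averaging) argument, selecting the $m$ vertices uniformly at random and computing the expected number of edges in the induced subgraph. This is the standard way to prove such "dense subgraph exists" statements, and the multigraph setting changes nothing essential since we only need to count edge-endpoint incidences.

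First I would fix the multigraph $G=(V,E)$ and a target size $m\le|V|$, and choose a subset $S\subseteq V$ of size $m$ uniformly at random among all $\binom{|V|}{m}$ such subsets. Let $G'=G[S]$ denote the induced sub-multigraph (keeping all parallel edges both of whose endpoints lie in $S$). For each edge $e\in E$ with endpoints $u\ne v$, let $\mathbf{1}_e$ be the indicator that both $u,v\in S$. Then $\mathbb{E}\bigl[\mathbf{1}_e\bigr]=\Pr[u\in S,\ v\in S]=\frac{\binom{|V|-2}{m-2}}{\binom{|V|}{m}}=\frac{m(m-1)}{|V|(|V|-1)}$. By linearity of expectation, the expected number of edges of $G'$ is
\[
\mathbb{E}\bigl[|E(G')|\bigr]=\sum_{e\in E}\mathbb{E}\bigl[\mathbf{1}_e\bigr]=\frac{m(m-1)}{|V|(|V|-1)}\,|E|.
\]
Since a random variable always attains at least its expectation for some outcome, there exists a particular choice of $S$ with $|S|=m$ and $|E(G[S])|\ge \frac{m(m-1)}{|V|(|V|-1)}|E|$, which is the desired $G'$.

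The argument is essentially a one-liner once set up correctly, so there is no real obstacle; the only points requiring a moment's care are (i) handling loops/degenerate cases — if $G$ has loops one should decide whether they count, but since the displayed probability $\frac{m(m-1)}{|V|(|V|-1)}$ corresponds to choosing two \emph{distinct} endpoints, I would assume $G$ is loopless (or simply note loops only help, as a loop at $v$ survives with probability $m/|V|\ge\frac{m(m-1)}{|V|(|V|-1)}$ when $m\ge 1$), and (ii) the trivial edge cases $m\in\{0,1\}$ or $|V|\le 1$, where the claimed bound is $0$ and holds vacuously. I would state the computation of $\Pr[u,v\in S]$ explicitly via the binomial identity $\binom{|V|-2}{m-2}/\binom{|V|}{m}=\frac{m(m-1)}{|V|(|V|-1)}$ and then invoke linearity of expectation and the pigeonhole principle for expectations to conclude.
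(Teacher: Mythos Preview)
Your proposal is correct and follows essentially the same approach as the paper: choose an $m$-element vertex subset uniformly at random, compute $\Pr[e\in E']=\binom{|V|-2}{m-2}/\binom{|V|}{m}=\frac{m(m-1)}{|V|(|V|-1)}$, apply linearity of expectation, and conclude by the averaging principle. The paper's proof is exactly this argument, without the additional remarks on loops and degenerate cases.
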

\begin{proof}	
Uniformly randomly choose a subset $V'\subseteq V$ of size $m$, and let $G'=(V',E')$ be the subgraph of $G$  induced by vertex set $V'$. 
For any edge $e\in E$, the probability that $e$ is included in $E'$ is 
\[\mathrm{Pr}(e\in E') = \frac{ \binom{|V|-2}{m-2}}{\binom{|V|}{m}}.\]
Thus, the expected number of edges in $E'$ is
\[\mathbb{E} |E'| = \sum_{e\in E}\mathrm{Pr}(e\in E') =   \frac{ \binom{|V|-2}{m-2}}{\binom{|V|}{m}}|E| = \frac{m(m-1)}{|V|(|V|-1)} |E|,\]
Therefore, there exists a choice of $G'$ containing at least $\frac{m(m-1)}{|V|(|V|-1)} |E|$ edges.
\end{proof}

\begin{lem}\label{241227lem2}
Let $A_1,...,A_M$ be $M$ subsets  of  $[n]$ with $|A_m|=r+1$ for $1\leq m\leq M$. Then,
for any  $N\leq M$, there exists a subset $S\subseteq [M]$ with $|S|=N$ such that 
\[\left| \bigcup_{m\in S} A_m\right| \le N(r+1) - \frac{N(N-1)}{M(M-1)}(M(r+1)-n) .\]

\end{lem}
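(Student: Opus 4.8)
The plan is to use the probabilistic (averaging) method, mirroring the structure of Lemma~\ref{241227lem1}. First I would observe that we want to show a union is \emph{small}, so the natural random object is a uniformly random subset $S\subseteq[M]$ of size $N$, and we should compute the \emph{expected} size of $\bigl|\bigcup_{m\in S}A_m\bigr|$ and then argue some choice of $S$ beats the average. The key bookkeeping trick is inclusion–exclusion truncated at second order: for any sets one has the elementary bound
\[
\Bigl|\bigcup_{m\in S}A_m\Bigr| \;\le\; \sum_{m\in S}|A_m| \;-\;\sum_{\{m,m'\}\subseteq S}\bigl|A_m\cap A_{m'}\bigr|,
\]
wait --- that inequality goes the wrong way in general (the Bonferroni inequality says $|\bigcup|\ge \sum|A_m| - \sum|A_m\cap A_{m'}|$). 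So instead I would pass to the ``overlap graph'' picture: define a multigraph $G=(V,E)$ with vertex set $V=[n]$ and, for each pair $\{m,m'\}$ with $m\ne m'$, add $|A_m\cap A_{m'}|$ edges --- i.e.\ one edge for each element lying in at least two of the $A_m$. Then $|E|\ge \sum_{m<m'}|A_m\cap A_{m'}|$ counts repeated coverage, and one checks that
\[
\Bigl|\bigcup_{m=1}^{M}A_m\Bigr| \;\le\; \sum_{m=1}^{M}|A_m| - (\text{number of ``excess incidences''}),
\]
so that $M(r+1) - n \le$ (total excess), which is exactly the quantity appearing in the statement.

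The cleaner route, and the one I would actually carry out, is this. Let $c(i)$ be the number of indices $m\in[M]$ with $i\in A_m$, so $\sum_{i\in[n]} c(i)=\sum_m |A_m| = M(r+1)$, and for a random $N$-subset $S$ the expected contribution of element $i$ to $\bigl|\bigcup_{m\in S}A_m\bigr|$ is $\Pr[\exists m\in S: i\in A_m] = 1-\binom{M-c(i)}{N}/\binom{M}{N}$. Rather than estimating this directly (it is concave in a way that helps us, not hurts), I would instead bound the union from above by $N(r+1)$ minus the number of \emph{repeated} picks: $\bigl|\bigcup_{m\in S}A_m\bigr| = \sum_{m\in S}|A_m| - \sum_{i}\max(0,|\{m\in S: i\in A_m\}|-1)$, and then lower-bound $\mathbb E\bigl[\sum_i (|\{m\in S:i\in A_m\}|-1)^+\bigr]$ by $\mathbb E\bigl[\sum_i \binom{|\{m\in S:i\in A_m\}|}{2}\bigr]\big/ (\text{something})$... actually the truly clean identity is $|\{m\in S:i\in A_m\}| - [\,i\in\bigcup_{m\in S}A_m\,] = \sum \text{(pairwise coincidences, with multiplicity at most }\binom{c}{2}\text{ but at least }c-1)$. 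To avoid this fuss I would simply apply Lemma~\ref{241227lem1} directly: take the multigraph $G$ on $V=[n]$ with $|E| = M(r+1)-n$ edges (one edge per ``redundant'' incidence, distributed among pairs $\{m,m'\}$ so that each element $i$ covered $c(i)$ times contributes $c(i)-1$ edges, say in a path pattern across the indices covering it). Then Lemma~\ref{241227lem1} gives a vertex subset $V'\subseteq[n]$, $|V'|=m$, with at least $\frac{m(m-1)}{n(n-1)}(M(r+1)-n)$ induced edges --- but that is about choosing \emph{elements}, not \emph{sets}, so this needs to be reorganized.

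The correct organization, which I am confident works, is to build the multigraph on $V=[M]$ (the \emph{index} set): for each element $i\in[n]$ covered $c(i)\ge 2$ times, pick any $c(i)-1$ edges among the $\binom{c(i)}{2}$ pairs of indices covering $i$ forming a spanning tree / path on those indices, and add all of them; then $|E| = \sum_i (c(i)-1)^+ \ge \sum_i (c(i)-1) = M(r+1)-n$ (using $c(i)\ge 1$ since the $A_m$ cover... actually we may assume WLOG they cover $[n]$, or just note $\sum_i(c(i)-1)^+\ge \sum_i c(i) - n = M(r+1)-n$ holds regardless since $(c-1)^+\ge c-1$). Apply Lemma~\ref{241227lem1} with $m=N$: there is $S\subseteq[M]$, $|S|=N$, spanning at least $\frac{N(N-1)}{M(M-1)}(M(r+1)-n)$ edges of $G$. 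Finally, each edge of $G$ within $S$ witnesses an element of $[n]$ that is covered at least twice inside $\{A_m: m\in S\}$, and distinct edges give ``independent'' reductions in a way that yields $\bigl|\bigcup_{m\in S}A_m\bigr| \le N(r+1) - (\text{number of such edges})$ --- this last step uses that the edges attached to a fixed element $i$ were chosen to form a tree on the indices covering $i$, so if $k_i\ge 1$ of those indices land in $S$ then exactly $\le k_i-1$ of the tree-edges survive and $i$ contributes $1$ to the union while being counted $k_i$ times in $N(r+1)$, a deficit of exactly $k_i-1\ge(\text{surviving edges at }i)$. Summing over $i$ gives the bound. The main obstacle is precisely this bookkeeping in the final step --- making sure the tree/path choice of edges per element guarantees ``surviving induced edges at $i$'' $\le$ ``overcount of $i$ in $S$'', so that the Lemma~\ref{241227lem1} edge count translates into a genuine saving on the union size; everything else is routine counting.
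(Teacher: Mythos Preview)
Your proposal, after the exploratory false starts, lands on exactly the paper's argument: build a multigraph on the index set $[M]$ where each element $i\in[n]$ contributes $c(i)-1$ edges forming a tree (the paper uses a path) among the indices covering it, so $|E|\ge M(r+1)-n$; apply Lemma~\ref{241227lem1} to extract $S$ of size $N$ with many induced edges; then use the acyclicity of each color class to bound the induced edges at $i$ by $k_i-1$, which is precisely the overcount of $i$ in $\sum_{m\in S}|A_m|$. The bookkeeping you flagged as the ``main obstacle'' is handled correctly --- a forest on $k_i$ vertices has at most $k_i-1$ edges --- and this is exactly how the paper closes the argument.
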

\begin{proof}	
We construct an edge-colored graph $G=(V,E)$, where every edge is colored by one of the colors $c_1,...,c_n$.
Let $V=[M]$.
For $1\le j\le n$, let $T(j) = \{m\in [M] : j\in A_m  \}$ and assume  $m_j(1)<m_j(2)<\cdots < m_j(q_j) $ are all the elements in $T(j)$,
then we add the edges $$ \{m_j(1), m_j(2)\}, \; \{m_j(2), m_j(3)\} ,\;...,  \;\{m_j(q_j-1), m_j(q_j)\}$$ into $E$ and color them by $c_j$.
Here we complete the construction of $G$.
One property of $G$ is that, for every $j\in [n]$, the 
subgraph of $G$ consisting of edges colored with $c_j$ forms a tree, 
meaning it does not contain any cycles. 
(Figure \ref{fig:edge-colored-exm} is an example of an edge-colored graph.)
\begin{figure}[H]
    \centering
    \includegraphics[width=0.5\linewidth]{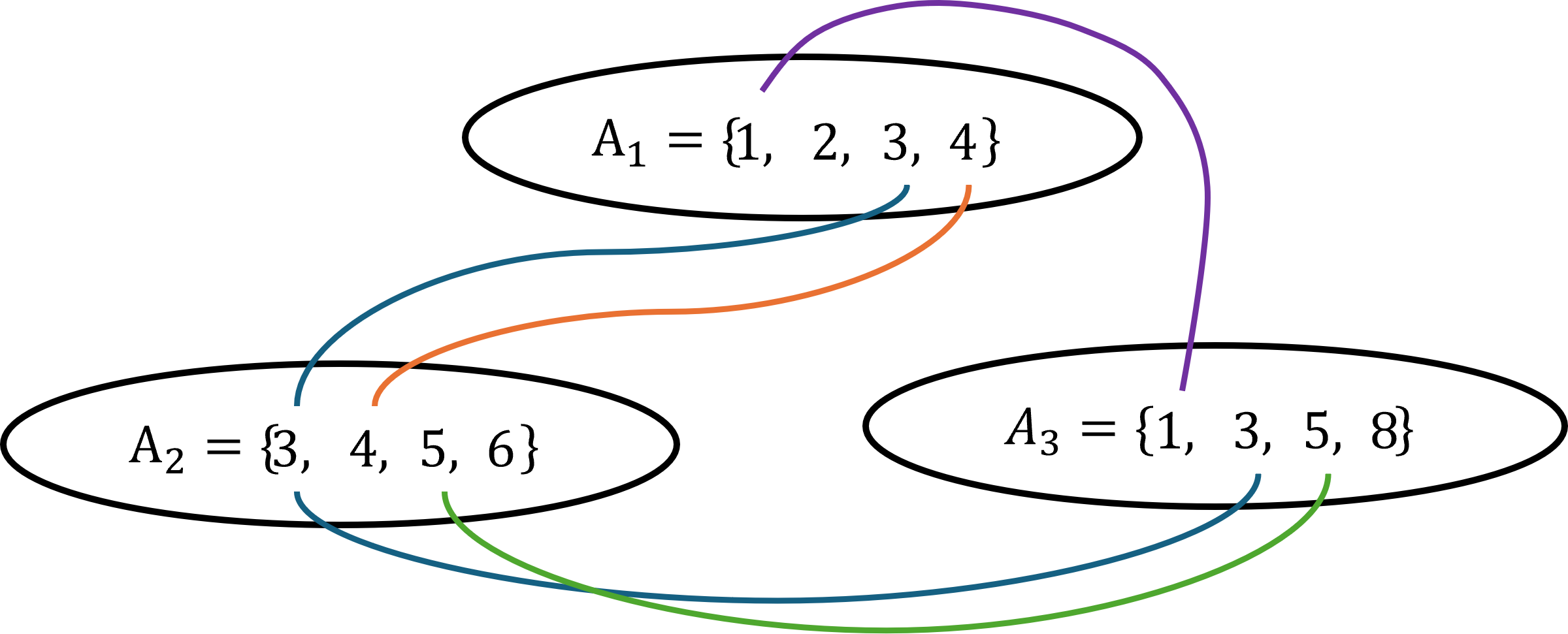}
    \caption{Example of an edge-colored graph $G(V,E)$, where $n=8$, $M=3$, and $r=3$. 
    The elements of the subsets $A_1, A_2, A_3$ are shown in the graph, and the vertice $i$ is equipped with the subset $A_{i+1}$ for $i=0,1,2$. 
    For $j = 1$, one purple edge is added; for $j = 3$, two blue edges are added; for $j = 4$, one orange edge is added; and when $j = 5$, one green edge is added.
}
    \label{fig:edge-colored-exm}
\end{figure}

Now, let us estimate the number of edges in graph $G$.
For every $j\in [n]$, assume that  $j$ appears $q_j$ times in the sets $A_1,...,A_M$. By construction, there are exactly $(q_j-1)$ edges in $G$ colored with $c_j$. 
Therefore, the total number of edges $|E|$ is given by:
\[|E| = \sum_{j\in [n]} (q_j-1) = \sum_{m\in [M]} |A_m| - \left| \bigcup_{m\in [M]} A_m \right| \ge M(r+1) -n,\]
where $|A_m|=r+1$, and $\left| \bigcup_{m\in [M]} A_m \right|\leq n$.

Moreover, by Lemma \ref{241227lem1}, we can find a subgraph $G'=(V',E')$ of $G$ induced by a subset $S\subseteq V = [M]$ with $|S|=N$ such that  
\[|E'| \ge \frac{N(N-1)}{M(M-1)} |E| \ge \frac{N(N-1)}{M(M-1)} (M(r+1) -n).\]
On the other hand,
for each $j\in [n]$, if $j$ appears $q_j'$ times in the sets $\{A_m: m\in S\}$, then there are at most $(q_j'-1)$ edges colored by $c_j$ in $G'$ as the  $c_j$-colored subgraph of $G'$ has no cycle.
Hence
\[ |E'| \le \sum_{j\in [n]} q_j'-1 =\sum_{m\in S} |A_m| - \left| \bigcup_{m\in S} A_m \right| = N(r+1) -\left| \bigcup_{m\in S} A_m \right|. \]
Therefore,
\[ \left| \bigcup_{m\in S} A_m \right| \le N(r+1) - \frac{N(N-1)}{M(M-1)}(M(r+1)-n). \]
\end{proof}

Now, let us define $N_1(n,r,d,M)$ as follows for simplicity:
If $n \ge \max\{r+1,d-1\}$, then 
\begin{align}
      N_1(n,r,d,M):=\max\left\{N : 0\le N\le M,\;   N(r+1) - \frac{N(N-1)}{M(M-1)}(M(r+1)-n) \le n-(d-1) \right\}.
\end{align}
If  $n< \max\{r+1,d-1\}$, then
\begin{align}
    N_1(n,r,d,M):=0.
\end{align}
Note that when $M \ge \frac{n}{r+1}$, we have
\[N_1(n,r,d,M) \ge \left\lfloor \frac{n-(d-1)}{r+1} \right\rfloor. \]

\begin{lem}\label{241228lem3}
Let $ \left(\Gamma_l(j)\right)_{1\le l\le t,\; 1\le j\le n}$ be a family   of subsets of $[n]$ satisfying the exact $(r,t,s)$-condition.
For any set $A\subseteq [n] $ with $n-|A| \ge \max\{r+1,d-1\}$ and $d-1<n$,
we can find a sequences of numbers, $j_1,...,j_M\in [n]\setminus A$, where  
\begin{align} 
M = (d-1) + N_1\left(n-|A|,r,d, \left\lceil (n-|A|) p_e(r,t,s) \right\rceil \right),
\end{align} 
such that
for every $d\le k\le M$ ,there is at least one $\alpha$ satisfying
\begin{align} 
\Gamma_\alpha(j_k) \cap \{j_{1},j_{2},...,j_{k-1}\} = \emptyset.
\end{align} 
\end{lem}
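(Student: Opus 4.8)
The plan is to build the sequence $j_1,\dots,j_M$ in two stages, combining the probabilistic/ordering argument used in Theorem~\ref{241228thm2} with the covering estimate in Lemma~\ref{241227lem2}. Write $n' = n-|A|$ and $W = [n]\setminus A$, so $|W| = n'$, and set $M_0 = \lceil n' p_e(r,t,s)\rceil$. By the argument in the proof of Theorem~\ref{241228thm2} (choosing a uniformly random total order $\pi$ on $[n]$ and taking $U\subseteq W$ to be those $j$ for which some $\Gamma_\alpha(j)\setminus\{j\}$ lies entirely above $j$ in $\pi$), there exists a fixed ordering $\pi$ and a subset $U\subseteq W$ with $|U|\ge M_0$ such that, listing the elements of $U$ in increasing $\pi$-order as $u_1,\dots,u_{|U|}$, for every $k$ there is an $\alpha$ with $\Gamma_\alpha(u_k)\cap\{u_1,\dots,u_{k-1}\}=\emptyset$. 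Pick $M_0$ of these, giving a sequence $j_{d},\dots,j_{d+M_0-1}$ (re-indexing) enjoying the disjointness property relative to everything earlier within this block.

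The second stage is to prepend $d-1$ arbitrary elements $j_1,\dots,j_{d-1}$ of $W$, plus then extend the ``good'' portion of the sequence to length $N_1(n',r,d,M_0)$ beyond the first $d-1$. The key point is that Lemma~\ref{241227lem2}, applied to the sets $A_m = \Gamma_{\alpha_m}(j_{d-1+m})$ for the $M_0$ chosen recovery sets inside $W$, produces a sub-collection $S$ of size $N = N_1(n',r,d,M_0)$ whose union has size at most $n'-(d-1)$; hence we can choose the $d-1$ prepended indices to lie outside $\bigcup_{m\in S}\Gamma_{\alpha_m}(j_{d-1+m})$ (there is room, since $n'-(n'-(d-1)) = d-1$). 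Re-indexing so that the members of $S$ become $j_d,\dots,j_{d+N-1}$, each of these still has a recovery set $\Gamma_{\alpha}(j_k)$ disjoint from all earlier indices: disjoint from the prepended $d-1$ by the choice just made, and disjoint from the earlier members of $S$ because that disjointness was inherited from the stage-one construction (removing indices only helps). This yields $M = (d-1) + N_1(n',r,d,M_0)$ with the claimed property for $d\le k\le M$.

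The one subtlety to handle carefully is that the disjointness established in stage one is with respect to $\{u_1,\dots,u_{k-1}\}$ in the \emph{original} $\pi$-order, and when we pass to the sub-collection $S$ from Lemma~\ref{241227lem2} and reorder, we must check that for each $j_k$ with $k\ge d$, the witnessing $\alpha$ still works against \emph{all} smaller indices in the new sequence. This is fine because the new sequence, restricted to indices $\ge d$, is a subsequence of the stage-one sequence (so $\Gamma_\alpha(j_k)$ misses all earlier stage-one elements, a superset of the earlier new ones), and the indices $j_1,\dots,j_{d-1}$ were explicitly chosen to avoid $\bigcup_{m\in S}\Gamma_{\alpha_m}(\cdot)$, which contains $\Gamma_\alpha(j_k)$ for each relevant $k$. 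I expect the main obstacle to be purely bookkeeping: making the re-indexing precise and verifying that the count $d-1 = n' - (n'-(d-1))$ genuinely leaves enough free positions in $W$ to place the prepended elements, i.e.\ that the hypothesis $n-|A|\ge\max\{r+1,d-1\}$ and $d-1<n$ are exactly what is needed for Lemma~\ref{241227lem2} to apply and for the prepending to be possible.
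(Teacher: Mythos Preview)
Your approach is essentially the same as the paper's: obtain $M_0=\lceil n' p_e(r,t,s)\rceil$ indices in $W=[n]\setminus A$ via the ordering argument of Theorem~\ref{241228thm2}, apply Lemma~\ref{241227lem2} to extract a sub-collection of size $N_1(n',r,d,M_0)$ whose recovery-set union leaves room in $W$ for $d-1$ free positions, and prepend those. Your bookkeeping about why the witnessing $\alpha$ survives re-indexing is correct.

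There is one technical point you gloss over that the paper handles explicitly. You set $A_m=\Gamma_{\alpha_m}(j_{d-1+m})$ and invoke Lemma~\ref{241227lem2} to get a union of size at most $n'-(d-1)$. But Lemma~\ref{241227lem2} requires the sets $A_m$ to be subsets of an ambient set of size $n'$, each of size exactly $r+1$. The recovery sets $\Gamma_{\alpha_m}(j)$ have size $r+1$ but are subsets of $[n]$, not of $W$; they may protrude into $A$. Applying the lemma with ambient set $[n]$ gives a bound involving $n$, not $n'=n-|A|$, which is too weak. The paper's fix is to take $\Gamma_{\alpha_m}(j)\cap W$ (which may be smaller than $r+1$) and pad it arbitrarily to a set $A_m\subseteq W$ with $|A_m|=r+1$; this is exactly where the hypothesis $n-|A|\ge r+1$ enters. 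Since $\Gamma_{\alpha_m}(j)\cap W\subseteq A_m$, the prepended $j_1,\dots,j_{d-1}$ chosen outside $\bigcup_{m\in S}A_m$ still avoid the relevant recovery sets within $W$, which is all that is needed because $j_1,\dots,j_{d-1}\in W$.
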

\begin{proof}
By the proof in Theorem \ref{241228thm2},
we can find a sequence of numbers
$i_1,...,i_{M'}\in [n]\setminus A$ with 
$$M' = \left\lceil (n-|A|) p_e(r,t,s) \right\rceil,$$
	such that,
	for every $1\le k\le M'$, there is an $\alpha_k$ satisfying
	\begin{align}\label{241023eq1}
		\Gamma_{\alpha_k}(i_k) \cap \{i_{1},i_{2},...,i_{k-1}\} = \emptyset.
	\end{align}
    
We apply Lemma \ref{241227lem2} on the sets $\{\Gamma_{\alpha_k}(i_k)\cap ([n]\setminus A): \;k=1,...,M'\}$, with 
$$N := N_1(n-|A|,r,d,M').$$
Since the set $\Gamma_{\alpha_k}(i_k)\cap ([n]\setminus A)$ may have size smaller than $r+1$,  for each $k$ we denote $A_k$ to be a set such that 
$$\Gamma_{\alpha_k}(i_k)\cap ([n]\setminus A) \subseteq A_k \subseteq [n]\setminus A, $$ and $|A_k|=r+1$.
Then by Lemma \ref{241227lem2},
we can find a subset $S\subseteq [M']$ with size $|S|=N$ such that
\[\left| \bigcup_{m\in S} A_m\right| \le N(r+1) - \frac{N(N-1)}{M'(M'-1)}(M'(r+1)-(n-|A|)) \le n-|A|-(d-1),\]
where the last inequality comes from the definition of $N_1(n-|A|,r,d,M')$.

Let $j_1,...,j_{d-1}$ be $(d-1)$ elements in $[n]\setminus \left(A \cup \bigcup_{k\in S} A_k\right)$,
then for every $k\in S$, 
$$\Gamma_{\alpha_k}(i_k) \cap \{j_{1},j_{2},...,j_{d-1}\} = \emptyset.$$
Moreover, let $ j_d,j_{d+1},..., j_M$ be all the elements in $S$ (ordered as in the sequence $i_1,...,i_{M'}$.)
Then for every $d\le k\le M$, there is at least one $\alpha$ satisfying
	\begin{align*} 
		\Gamma_\alpha(j_k) \cap \{j_{1},j_{2},...,j_{k-1}\} = \emptyset.
	\end{align*}
\end{proof}

\begin{thm}\label{241228thm4}
Suppose $n,k,d,r,t\ge 1$ are given.
If
$\CCC$ is an exact $(r,t,s)$-qLRC with parameters  $[n,k,d]$,
then
\[k\le  \max\{0, n-M_1-M_2\},\]
where 
\[M_1 =(d-1) + N_1(n,r,d, \left\lceil n p_e(r,t,s) \right\rceil) \]
and
\[M_2 =(d-1) + N_1(n-M_1,r,d, \left\lceil (n-M_1) p_e(r,t,s) \right\rceil). \]
\end{thm}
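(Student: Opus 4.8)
The plan is to reduce the bound to Lemma~\ref{241028lem2}: it suffices to produce disjoint sets $A,B\subseteq[n]$ with $|A|=M_1$ and $|B|=M_2$ such that $\CCC$ corrects every error supported on $A$ and every error supported on $B$, for then $k\le n-|A|-|B|=n-M_1-M_2$, and combining with $k\ge0$ gives $k\le\max\{0,n-M_1-M_2\}$. First I would clear away the degenerate regimes. If $k=0$ the claim is trivial, so assume $k\ge1$; then the quantum Singleton bound already forces $n\ge2(d-1)+1$, in particular $n>d-1$. If in addition $n<r+1$, then every instance of $N_1$ occurring below equals $0$ by its definition, whence $M_1=M_2=d-1$ and the asserted inequality is exactly the quantum Singleton bound $k\le\max\{0,n-2(d-1)\}$; so assume also $n\ge r+1$. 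Now construct $A$ by applying Lemma~\ref{241228lem3} with the empty set: this yields pairwise distinct $j_1,\dots,j_{M_1}\in[n]$, with $M_1=(d-1)+N_1(n,r,d,\lceil n\,p_e(r,t,s)\rceil)$, such that for every $m$ with $d\le m\le M_1$ there is an index $\alpha_m$ satisfying $\Gamma_{\alpha_m}(j_m)\cap\{j_1,\dots,j_{m-1}\}=\emptyset$; put $A=\{j_1,\dots,j_{M_1}\}$. To construct $B\subseteq[n]\setminus A$: if $n-M_1\ge\max\{r+1,d-1\}$, apply Lemma~\ref{241228lem3} once more, now with the set $A$, obtaining pairwise distinct $i_1,\dots,i_{M_2}\in[n]\setminus A$, with $M_2=(d-1)+N_1(n-M_1,r,d,\lceil(n-M_1)p_e(r,t,s)\rceil)$, having the analogous property for $d\le m\le M_2$; if instead $n-M_1<\max\{r+1,d-1\}$, then $N_1(n-M_1,r,d,\cdot)=0$, so $M_2=d-1$, and I take $B$ to be any $(d-1)$-subset of $[n]\setminus A$ when $|[n]\setminus A|\ge d-1$, and $B=[n]\setminus A$ otherwise. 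In this last sub-case $n-M_1-M_2<0$ and $A\cup B=[n]$, so the argument below together with Lemma~\ref{241028lem2} forces $k\le0=\max\{0,n-M_1-M_2\}$ and nothing further is needed.

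Next I would verify that $\CCC$ corrects every error supported on $A$ (and, by the same argument, on $B$), adapting the peeling computation from the proof of Theorem~\ref{241208thm1}. Given a Pauli error $E=E_{j_1}\cdots E_{j_{M_1}}$ supported on $A$, apply the channels $\mathrm{Rec}_{\alpha_{M_1},j_{M_1}}$, then $\mathrm{Rec}_{\alpha_{M_1-1},j_{M_1-1}}$, and so on down to $\mathrm{Rec}_{\alpha_d,j_d}$. At the moment $\mathrm{Rec}_{\alpha_m,j_m}$ is applied, the state is a codeword conjugated by $E_{j_1}\cdots E_{j_m}$; since $\Gamma_{\alpha_m}(j_m)$ is disjoint from $\{j_1,\dots,j_{m-1}\}$, this channel commutes past $E_{j_1},\dots,E_{j_{m-1}}$, and since tracing out qudit $j_m$ annihilates $E_{j_m}$, the defining identity $\mathrm{Rec}_{\alpha_m,j_m}(\trace_{j_m}\rho)=\rho$ removes $E_{j_m}$ exactly; it does no harm that $\Gamma_{\alpha_m}(j_m)$ may meet the already-cleaned qudits $j_{m+1},\dots,j_{M_1}$. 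After the final channel only a Pauli error supported on the $(d-1)$-element set $\{j_1,\dots,j_{d-1}\}$ remains, and a fixed erasure-recovery channel for that set --- whose existence is guaranteed since $\CCC$ has distance $d$ --- restores the codeword. The composition of all these maps is a single recovery channel, independent of $E$, that corrects every Pauli error supported on $A$; hence $\CCC$ corrects every error supported on $A$ (a single channel correcting all Pauli errors on $A$ corrects the erasure of $A$, by averaging over those Paulis or by the Knill--Laflamme conditions). The same reasoning applies verbatim to $B$, using for the fallback sub-cases only that an erasure of at most $d-1$ qudits is correctable.

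Finally, since $A$ and $B$ are disjoint and errors on each are correctable, Lemma~\ref{241028lem2} gives $k\le n-|A|-|B|=n-M_1-M_2$, and with $k\ge0$ this yields $k\le\max\{0,n-M_1-M_2\}$; the few degenerate sub-cases were already disposed of directly above. I expect the main obstacle to be the bookkeeping in the peeling step: checking that each local recovery channel genuinely commutes through the errors still present on the lower-indexed qudits, that the already-cleaned higher-indexed qudits cause no interference, and that exactly the $d-1$ ``free'' qudits $j_1,\dots,j_{d-1}$ are left over --- this being the single point where the distance $d$ enters, through correction of an erasure of at most $d-1$ qudits. The remaining work, namely matching the small-$n$ and small-$(n-M_1)$ edge cases with the truncation $\max\{0,\cdot\}$ and with the convention $N_1=0$, is routine.
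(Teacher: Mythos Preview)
Your proposal is correct and follows essentially the same approach as the paper: invoke Lemma~\ref{241028lem2} after using Lemma~\ref{241228lem3} twice to build the disjoint correctable sets $A$ and $B$, with the peeling argument from the proof of Theorem~\ref{241208thm1} adapted so that the last $d-1$ qudits are handled by the code distance. If anything, your treatment is more careful than the paper's one-line proof: you explicitly dispose of the degenerate ranges where the hypothesis $n-|A|\ge\max\{r+1,d-1\}$ of Lemma~\ref{241228lem3} fails, and you correctly phrase each local recovery as ``trace out qudit $j_m$, then apply $\mathrm{Rec}_{\alpha_m,j_m}$'' so that only the defining erasure identity is used.
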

\begin{proof}
The proof follows a similar approach to that of Theorem \ref{241208thm1}, utilizing Lemma \ref{241028lem3} and Lemma \ref{241228lem3}.
\end{proof}

\begin{Rem}
    Note that the bound given by Theorem \ref{241228thm4} is at least as good as the bound \eqref{241226eq1}, for any $t\ge 1$.
To see this, we first show $p_e(r,t,s) \ge \frac{1}{r+1}$.
Recall $p_e(r,t,s)$ is a probability of the certain event in the proof of Theorem \ref{241228thm2}: fix a $j\in [n]$, for $1\le \alpha \le t$ we let $E_\alpha$ be the event that $\pi(i)>\pi(j)$ for every  $i\in \Gamma_\alpha(j)\setminus\{j\}$,
then 
\[p_e(r,t,s) =  \mathrm{Pr}\left( \bigcup_{\alpha=1}^t E_\alpha\right).\]
In particular, we have
\begin{align}
p_e(r,t,s) \ge  \mathrm{Pr}\left(   E_1\right)= \frac{1}{r+1}.
\end{align}
Hence
$$N_1(n,r,d, \left\lceil n p_e(r,t,s)  \right\rceil) \ge \left\lfloor \frac{n-(d-1)}{r+1} \right\rfloor,$$
and 
$$N_1(n-M_1,r,d, \left\lceil (n-M_1) p_e(r,t,s) \right\rceil) \ge \left\lfloor \frac{n-(d-1)-M_1}{r+1} \right\rfloor.$$
Therefore, the bound in Theorem \ref{241228thm4} is at least as strong as \eqref{241226eq1}.
\end{Rem}

Now, let us explore a different method to derive another bound for exact qLRCs. 
Define
\begin{align}\label{241208eq3}
\tilde{f_e}(n,d,r,t,s) = \sum_{1\le L\le t } (-1)^{L+1} \binom{t}{L} \frac{\binom{n-N_e(r, L, s) }{ d}}{\binom{n}{d}} \frac{1}{N_e(r,L,s)},
\end{align}
where $N_e(r,L,s)$ is defined in \eqref{241209eq1}.
Here we define $\binom{a}b =0$ if $a<b$, and $\tilde{f_e}(n,r,t,s)=0$ when $t=0$.

\begin{lem}\label{241208lem5}
Let $ \left(\Gamma_l(j)\right)_{1\le l\le t,\; 1\le j\le n}$ be a family   of subsets of $[n]$ satisfying the exact $(r,t,s)$-condition.
For  any set $A\subseteq [n] $ with $n-|A| \ge d-1$ and $d-1<n$,
if $n-|A| \ge N_e(r, t, x) $, then
we can find a sequences of numbers $j_1,...,j_N\in [n]\setminus A$, where  
\begin{align} 
N = (d-1) + 
  \left\lceil  (n-|A|) \tilde{f_e}\big(n-|A|,d-1,r,t,s\big)  \right\rceil,
\end{align} 
such that
for every $d\le k\le N$, there exists at least one $\alpha$ satisfying
\begin{align} 
\Gamma_\alpha(j_k) \cap \{j_{1},j_{2},...,j_{k-1}\} = \emptyset.
\end{align} 
\end{lem}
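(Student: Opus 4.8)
The plan is to mimic the probabilistic construction in Theorem \ref{241228thm2} and Lemma \ref{241228lem3}, but to refine the random ordering argument so that we simultaneously control (a) the existence of a ``good'' recovery set avoiding earlier indices and (b) the disjointness from a reserved block of $d-1$ indices, using the random-subset bound $\binom{n-N_e(r,L,s)}{d}/\binom{n}{d}$ as the probability that a union of recovery sets misses a uniformly chosen $d$-element set. Concretely, write $n' = n-|A|$ and work inside $[n]\setminus A$. First I would fix a uniformly random total ordering $\pi$ of $[n]\setminus A$ together with a uniformly random $(d-1)$-subset $D\subseteq [n]\setminus A$ chosen independently (or, equivalently, choose $\pi$ and let $D$ be determined by the $d-1$ largest elements under $\pi$ — whichever makes the independence cleanest). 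For $j\in [n]\setminus(A\cup D)$, declare $j\in U$ iff there exists $\alpha\in[t]$ with $\pi(i)>\pi(j)$ for all $i\in \Gamma_\alpha(j)\setminus\{j\}$ \emph{and} $\Gamma_\alpha(j)\cap D=\emptyset$.

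The key step is the probability estimate $\mathrm{Pr}(j\in U)\ge \tilde f_e(n',d-1,r,t,s)$. For a fixed $j$ and a subset $S\subseteq[t]$, inclusion–exclusion over the events $E_\alpha=\{\pi(i)>\pi(j)\ \forall i\in\Gamma_\alpha(j)\setminus\{j\}\text{ and }\Gamma_\alpha(j)\cap D=\emptyset\}$ gives $\mathrm{Pr}(j\in U)=\sum_{L=1}^t(-1)^{L+1}\sum_{|S|=L}\mathrm{Pr}(\bigcap_{\alpha\in S}E_\alpha)$. By the exact $(r,t,s)$-condition, Lemma \ref{241208lem6} forces $|\bigcup_{\alpha\in S}\Gamma_\alpha(j)| = N_e(r,L,s)$ exactly (this is where the exactness is essential — it pins the union size rather than bracketing it, so the alternating sum is an equality up to the final rounding). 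Conditioning on $D$, the event $\bigcap_{\alpha\in S}E_\alpha$ requires $D$ to avoid $\bigcup_{\alpha\in S}\Gamma_\alpha(j)$, which has probability $\binom{n'-N_e(r,L,s)}{d-1}/\binom{n'-1}{d-1}$ (choosing $D$ among the $n'-1$ elements other than $j$), and, given that, $j$ is the $\pi$-smallest element of the union with probability $1/N_e(r,L,s)$. Matching this against the definition \eqref{241208eq3} of $\tilde f_e$ — with the harmless adjustment of comparing $\binom{n'}{d}$ versus $\binom{n'-1}{d-1}$ and $N_e(r,L,s)$ — yields the stated lower bound; the hypothesis $n'\ge N_e(r,t,x)$ guarantees every $N_e(r,L,s)\le n'$ so the binomials are well-defined and nonzero. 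Then $\mathbb E|U|\ge n'\,\tilde f_e(n',d-1,r,t,s)$, so some ordering $\pi$ and subset $D$ achieve $|U|\ge \lceil n'\,\tilde f_e(n',d-1,r,t,s)\rceil$.

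Finally I would assemble the sequence: set $j_1,\dots,j_{d-1}$ to be the elements of $D$ (in any order), and let $j_d,\dots,j_N$ be $\lceil n'\,\tilde f_e\rceil$ chosen elements of $U$ listed in increasing $\pi$-order. For $d\le k\le N$, the witnessing $\alpha$ for $j_k\in U$ satisfies $\Gamma_\alpha(j_k)\cap\{j_1,\dots,j_{d-1}\}=\Gamma_\alpha(j_k)\cap D=\emptyset$ and $\Gamma_\alpha(j_k)\cap\{j_d,\dots,j_{k-1}\}=\emptyset$ because those indices precede $j_k$ under $\pi$, so $\Gamma_\alpha(j_k)\cap\{j_1,\dots,j_{k-1}\}=\emptyset$ as required. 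I expect the main obstacle to be the bookkeeping in the probability computation: getting the independence (or conditional) structure between $\pi$ and $D$ exactly right, and reconciling the precise binomial coefficients appearing in the conditioning with the normalization $\binom{n}{d}$ written in \eqref{241208eq3} — one must check that the off-by-one shifts ($d$ vs.\ $d-1$, $n'$ vs.\ $n'-1$) either cancel or only strengthen the inequality, so that the clean expression $\tilde f_e$ is a valid lower bound rather than merely an approximation.
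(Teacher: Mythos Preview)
Your overall architecture is right---a random ordering on $[n]\setminus A$, a reserved block of $d-1$ indices, inclusion--exclusion over the $t$ recovery sets, and then assembling the sequence---but there is a genuine gap that the paper's proof has to work to close, and which your proposal overlooks entirely.

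The problem is that $\Gamma_\alpha(j)$ need not lie inside $[n]\setminus A$. Your ordering $\pi$ only lives on $A^c$, so the event ``$\pi(i)>\pi(j)$ for all $i\in\Gamma_\alpha(j)\setminus\{j\}$'' is not even well-defined as stated; and if you restrict to $\Gamma_\alpha'(j):=\Gamma_\alpha(j)\cap A^c$, then $\bigl|\bigcup_{\alpha\in S}\Gamma_\alpha'(j)\bigr|$ is only $\le N_e(r,|S|,s)$, not equal to it. Consequently the probabilities $\Pr\bigl(\bigcap_{\alpha\in S}E_\alpha\bigr)$ are \emph{not} the closed forms $\binom{n'-N_e}{d-1}/\binom{n'}{d-1}\cdot 1/N_e$ you want, and because inclusion--exclusion has alternating signs you cannot simply bound term by term to recover $\tilde f_e$. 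This, not the off-by-one bookkeeping you flag, is the real obstacle; and it is precisely where the hypothesis $n-|A|\ge N_e(r,t,s)$ is used.

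The paper's fix is the following: for each fixed $j$, build a permutation $\phi$ of $[n]$ that swaps the portion $\bigl(\bigcup_\alpha\Gamma_\alpha(j)\bigr)\cap A$ with an equal-sized set in $A^c\setminus\bigcup_\alpha\Gamma_\alpha'(j)$ (the size hypothesis guarantees there is room) and is the identity elsewhere. Then $\bigl(\phi(\Gamma_l(j))\bigr)_{l,j}$ still satisfies the exact $(r,t,s)$-condition, each $\phi(\Gamma_\alpha(j))\subseteq A^c$, and $\Gamma_\alpha'(j)\subseteq\phi(\Gamma_\alpha(j))$. Defining $F_\alpha$ via $\phi(\Gamma_\alpha(j))$ gives $F_\alpha\subseteq E_\alpha$, hence $\Pr\bigl(\bigcup_\alpha E_\alpha\bigr)\ge\Pr\bigl(\bigcup_\alpha F_\alpha\bigr)$, and the right-hand side can now be computed \emph{exactly} by inclusion--exclusion because the unions of the $\phi(\Gamma_\alpha(j))$ have size exactly $N_e(r,L,s)$ inside $A^c$. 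That is where the equality with $\tilde f_e$ comes from.

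A secondary remark: the paper avoids your separate random set $D$ altogether. It works with a single random ordering $\pi$ on $A^c$, declares $j\in U$ when $\pi(j)<d$ \emph{or} some $E_\alpha$ holds, and computes $\Pr(j\in U)=\Pr(\pi(j)<d)+\Pr\bigl(\pi(j)\ge d,\ \bigcup_\alpha E_\alpha\bigr)$. The first $d-1$ elements of $U$ in $\pi$-order are then automatically the ``reserved'' block, and the conditional probability $\Pr\bigl(\mathcal E^c\cap\bigcap_{\alpha\in S}F_\alpha\bigr)=\binom{n'-N_e}{d-1}\big/\binom{n'}{d-1}\cdot 1/N_e$ falls out directly with no independence issues to negotiate. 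This sidesteps the $d$ vs.\ $d-1$ and $n'$ vs.\ $n'-1$ reconciliation you were worried about.
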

\begin{proof}

Denote $A^c=[n]\setminus A$.
Choose uniformly a total ordering $\pi$ in the set $A^c$,  
and let $U $ be the subset of $A^c$ defined as follows:
a number $j\in A^c$ belongs to $U$ if $\pi(j) < d$ or there exists at least one $\alpha $ such that $\pi(i)>\pi(j)$ for every $i\in \Gamma_\alpha(j)\setminus\{j\}$.
Then the numbers in $U$, ordered according to $\pi$ (from small to large), form a  sequence $j_1,...j_m \in A^c$ satisfying the condition: for every $d\le k\le m$, there exists at least one $\alpha$ such that
\[ \Gamma_\alpha(j_k) \cap \{j_{1},j_{2},...,j_{k-1}\} = \emptyset. \]

The next step is to find a $U$ with a large size.
For a given $j\in A^c$, 
let $\CEE$ be the event that $\pi(j) < d$, and
let $E_\alpha$ be the event that $\pi(i)>\pi(j)$ for every  $i\in (\Gamma_\alpha(j)\cap A^c)\setminus\{j\}$.
Hence, for every $j\in A^c$,
\begin{align*}
 \mathrm{Pr}(j\in U) 
=  \mathrm{Pr}\left( \CEE\cup \bigcup_{\alpha=1}^t E_\alpha\right) 
=  \mathrm{Pr}\left( \CEE \right)+ \mathrm{Pr}\left(  \CEE^c \cap \left(\bigcup_{\alpha=1}^t E_\alpha\right)\right) 
= \frac{d-1}{n-|A|}+ \mathrm{Pr}\left(  \CEE^c \cap \left(\bigcup_{\alpha=1}^t E_\alpha\right)\right).
\end{align*}

Now, let us  estimate $\mathrm{Pr}\left(  \CEE^c \cap \left(\bigcup_{\alpha=1}^t E_\alpha\right)\right)$.
Let us denote
\[ \Gamma_\alpha'(j)= \Gamma_\alpha (j)\cap A^c,\]
for each $\alpha$.
Given that  $j$ is fixed, and $N_e(r, t, x) = \left| \bigcup_\alpha \Gamma_\alpha (j) \right|$,
define
\[ \CSS := \bigcup_\alpha \Gamma_\alpha'(j) =  \left(\bigcup_\alpha \Gamma_\alpha (j) \right) \cap A^c, \]
and let $\delta = N_e(r, t, x) - |\CSS| $.
Define $V_2 = \left(\bigcup_\alpha \Gamma_\alpha (j) \right)\cap A$, hence $|V_2| = \delta$.
Since we assume
$n-|A|\ge N_e(r, t, x)$,
we can find a set $V_1\subseteq A^c \setminus \CSS$ such that  $|V_1| = \delta$.
Let $\phi$ be a permutation on $[n]$ such that:

1. $\phi(i) = i$ for $i\in [n]\setminus (V_1\cup V_2)$;

2. $\phi(V_1) = V_2$, $\phi(V_2) = V_1$, and  $\phi^2 = \mathrm{id}$. 

By the construction of $\phi$, we have:

1. $ \left(\phi\big(\Gamma_l(j)\big)\right)_{1\le l\le t,\; 1\le j\le n}$ forms a family   of subsets of $[n]$ that satisfies the exact $(r,t,s)$-condition with $\phi(j) = j$ for each $j$;

2. $\phi\big(\Gamma_\alpha(j)\big) \subseteq A^c$ for every $\alpha$;

3. $\phi $ acts as the identity w on $\Gamma_\alpha'(j)$, and $\Gamma_\alpha'(j)\subseteq \phi\big(\Gamma_\alpha(j)\big)$ for every $\alpha$.

Recall that $E_\alpha$ is the event that $\pi(i)>\pi(j)$ for every $i\in \Gamma'_\alpha(j)\setminus\{j\}$.
Let $F_\alpha$ be the event that $\pi(i)>\pi(j)$ for every $i\in \phi\big(\Gamma_\alpha(j)\big)\setminus\{j\}$,
then $ F_\alpha \subseteq E_\alpha$ for each $\alpha$.
Hence we have
$$\mathrm{Pr}\left(  \CEE^c \cap \left(\bigcup_{\alpha=1}^t E_\alpha\right)\right) \ge \mathrm{Pr}\left(  \CEE^c \cap \left(\bigcup_{\alpha=1}^t F_\alpha\right)\right).$$

 Next, we estimate $\mathrm{Pr}\left(  \CEE^c \cap \left(\bigcup_{\alpha=1}^t F_\alpha\right)\right)$.
According to Lemma \ref{241208lem6},
for any subset $S\subseteq [t]$, the probability of all the events  $F_\alpha$ for $\alpha\in S$ occur simultaneously is given by
\begin{align*}
\mathrm{Pr}\left( \bigcap_{\alpha\in S}F_\alpha\right) = \frac1{N_e(r, |S|, x)},
\end{align*}
and the probability of the events $\CEE^c$ and $F_\alpha$ for $\alpha\in S$ all occur simultaneously is
\begin{align*}
\mathrm{Pr}\left( \CEE^c\cap \bigcap_{\alpha\in S}F_\alpha\right)
= 
\frac{\binom{n-|A|-N_e(r, |S|, x) }{ d-1}}{\binom{n-|A|}{d-1}}  \frac1{N_e(r, |S|, x)},
\end{align*}
which comes from the fact the event $\CEE^c\cap \bigcap_{\alpha\in S} F_\alpha$ happens iff every element $i\in \bigcup_{\alpha\in S} \phi\big(\Gamma_\alpha(j)\big)$ has order $\pi(i)\ge d$ and $j$ is the smallest element in $\bigcup_{\alpha\in S} \phi\big(\Gamma_\alpha(j)\big)$ ordered in $\pi$.

Hence for every $j\in [n]\setminus A$,
\begin{align*}
&\mathrm{Pr}(j\in U)  
=  \mathrm{Pr}\left( \CEE \cup \bigcup_{\alpha=1}^t E_\alpha\right)\\
=& \mathrm{Pr}\left(\CEE \right)+ \mathrm{Pr}\left( \CEE^c \cap \left(\bigcup_{\alpha=1}^t E_\alpha\right)\right)\\
\ge & \mathrm{Pr}\left(\CEE \right)+ \mathrm{Pr}\left( \CEE^c \cap \left(\bigcup_{\alpha=1}^t F_\alpha\right)\right)\\
=& \frac{d-1}{n-|A|} 
+ \sum_{l=1}^t (-1)^{l-1} \sum_{S\subseteq [t],\; |S|=l} \mathrm{Pr}\left(\CEE^c \cap \bigcap_{\alpha\in S}F_\alpha\right) \\
= & \frac{d-1}{n-|A|}+  
 \sum_{1\le l \le t }(-1)^{l-1} \binom{t}{l}   
\frac{\binom{n-|A|-N_e(r, l, x) }{ d-1}}{\binom{n-|A|}{d-1}}  \frac1{N_e(r, l, x)}  \\
=& \frac{d-1}{n-|A|} + \tilde{f_e}(n-|A|,d-1,r,t,s) .
\end{align*}

Let $X = |U|$ be the size of $U$, and let $X_j$ be the indicator random variable for $j\in U$, where $X_j=1$ if $j\in U$, and  $X_j=0$ otherwise.
For each $j\in [n]\setminus A$, 
\[\mathbb{E}(X_j) = \mathrm{Pr}(j\in U) = \frac{d-1}{n-|A|} + \tilde{f_e}(n-|A|,d-1,r,t,s). \]
Thus, the expected size of $U$,
\[\mathbb{E}(X ) = \sum_{j\in [n]\setminus A}  \mathbb{E}(X_j) = (d-1)+ (n-|A|) \tilde{f_e}(n-|A|,d-1,r,t,s).\]
Therefore, there exists a specific ordering $\pi$ with
$|U| \ge   N$.
\end{proof}

\begin{cor}\label{250101cor1}
Let $ \left(\Gamma_l(j)\right)_{1\le l\le t,\; 1\le j\le n}$ be a family   of subsets of $[n]$ satisfying the exact $(r,t,s)$-condition.
Given a number $1\le d<n$, and any set $A\subseteq [n] $ with $n-|A| \ge d$.
Then we can find a sequences of numbers, $j_1,...,j_N\in [n]\setminus A$, where $$N=(d-1)+ \CNN_1(n-|A|,d-1,r,t,s ),$$ and 
\begin{align}\label{241209eq2}
\CNN_1(m,d-1,r,t,s )\triangleq \max_{0\le T \le t   \text{ with }  N_e(r, T, s) \le m} 
  \left\lceil  m \tilde{f_e}\big(m,d-1,r,T,s\big)  \right\rceil,
\end{align} 
such that,
for every $d\le k\le N$ there is at least one $\alpha$ satisfying
\begin{align} 
\Gamma_\alpha(j_k) \cap \{j_{1},j_{2},...,j_{k-1}\} = \emptyset.
\end{align} 
\end{cor}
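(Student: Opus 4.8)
The plan is to reduce Corollary \ref{250101cor1} to Lemma \ref{241208lem5} by optimizing over the number of recovery sets actually used. The key observation is that if a family $\left(\Gamma_l(j)\right)_{1\le l\le t,\; 1\le j\le n}$ satisfies the exact $(r,t,s)$-condition, then for any $T\le t$ the truncated subfamily $\left(\Gamma_l(j)\right)_{1\le l\le T,\; 1\le j\le n}$ satisfies the exact $(r,T,s)$-condition: conditions 1--3 in the definition of the exact $(r,t,s)$-condition are hereditary under restricting the range of the index $l$. Hence Lemma \ref{241208lem5} applies verbatim to each such subfamily.

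First I would fix an arbitrary $T$ with $0\le T\le t$ and $N_e(r,T,s)\le n-|A|$ (for $T=0$ the claim is trivial, taking the $j_1,\dots,j_{d-1}$ alone). For such $T$, applying Lemma \ref{241208lem5} to the subfamily $\left(\Gamma_l(j)\right)_{1\le l\le T}$ with the same set $A$ — whose hypotheses $n-|A|\ge d-1$ (implied by $n-|A|\ge d$), $d-1<n$, and $n-|A|\ge N_e(r,T,x_T)$ where $x_T=\min\{T,\lfloor r/s\rfloor+1\}$ (implied by $N_e(r,T,s)\le n-|A|$ since $x_T\le T$ makes $N_e(r,T,x_T)\le N_e(r,T,s)$) are all met — yields a sequence $j_1,\dots,j_{N_T}\in[n]\setminus A$ with
\[
N_T=(d-1)+\left\lceil (n-|A|)\,\tilde{f_e}\big(n-|A|,d-1,r,T,s\big)\right\rceil,
\]
such that for every $d\le k\le N_T$ there is at least one $\alpha\in[T]\subseteq[t]$ with $\Gamma_\alpha(j_k)\cap\{j_1,\dots,j_{k-1}\}=\emptyset$. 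Since this holds for every admissible $T$, I would then take the $T$ that maximizes $N_T$; this maximum is exactly $(d-1)+\CNN_1(n-|A|,d-1,r,t,s)$ by the definition \eqref{241209eq2}, giving the desired sequence of length $N=(d-1)+\CNN_1(n-|A|,d-1,r,t,s)$.

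One point to verify carefully is the matching of the auxiliary parameter: Lemma \ref{241208lem5} is stated with the condition $n-|A|\ge N_e(r,t,x)$ where $x$ depends on $t$, whereas \eqref{241209eq2} uses the simpler bound $N_e(r,T,s)\le m$. I would note that $x\le L$ always, so $N_e(r,L,x)\le N_e(r,L,s)$ for each relevant $L\le T$ (the function $N_e(r,L,\cdot)=\frac{2r-(L-1)\cdot}{2}L+1$ is nonincreasing in its third argument for $L\ge 1$), and in particular the hypothesis $N_e(r,T,s)\le n-|A|$ suffices to invoke the lemma for the $T$-subfamily. I would also remark that $\tilde{f_e}(m,d-1,r,T,s)$ as defined in \eqref{241208eq3} already incorporates $N_e(r,L,s)$ (not $N_e(r,L,x)$), so the quantity being optimized in \eqref{241209eq2} is internally consistent with what Lemma \ref{241208lem5} produces once the small discrepancy between $x$ and $s$ in the side condition is dispatched.

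The main obstacle is really just bookkeeping rather than a substantive difficulty: ensuring that the side conditions of Lemma \ref{241208lem5} hold for the truncated subfamily for every $T$ in the optimization range, and that the constant $(d-1)$ summand (coming from the "$\pi(j)<d$" part of the event defining $U$) is not double-counted when taking the maximum over $T$. Since that additive $(d-1)$ is the same for all $T$ and sits outside the ceiling, pulling it out of the maximum is immediate, so the proof is essentially a one-line deduction from Lemma \ref{241208lem5} once the hereditary property of the exact condition and the monotonicity of $N_e$ in its third slot are recorded.
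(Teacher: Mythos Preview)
Your proposal is correct and follows essentially the same route as the paper: observe that truncating the family to the first $T\le t$ recovery sets preserves the exact $(r,T,s)$-condition, apply Lemma~\ref{241208lem5} for each admissible $T$, and then maximize over $T$. The paper's own proof is a one-line version of exactly this argument; your extra discussion of the $x$ versus $s$ discrepancy in the hypothesis of Lemma~\ref{241208lem5} is you carefully working around what is in fact a typo (the $x$ there should simply be $s$, as the proof of that lemma invokes Lemma~\ref{241208lem6} which yields $N_e(r,L,s)$ directly), so that side issue is moot.
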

\begin{proof}
Note that when $T\le t$ in \eqref{241209eq2}, a family of subsets of $[n]$ satisfying the exact $(r,t,s)$-condition contains a family of subsets of $[n]$ satisfying the exact $(r,t,s)$-condition. 
Hence the statement comes directly from Lemma \ref{241208lem5}.
\end{proof}

\begin{thm}\label{thm:bnd-rts-2}
If $\CCC$ is an $[n,k,d]$ quantum qudit/qubit exact $(r,t,s)$-LRC,
then
\[k\le   n-2(d-1)- \CNN_1\big(n ,d-1,r,t,s \big)-\CNN_1\big(n-(d-1)-\CNN_1(n ,d-1,r,t,s ),d-1,r,t,s \big),\]
where $\CNN_1$ is defined as in \eqref{241209eq2}.

\end{thm}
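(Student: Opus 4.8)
The plan is to mirror the proof of Theorem~\ref{241228thm4}, but to feed Lemma~\ref{241028lem2} with the two subsets produced by the sharper sequence construction of Corollary~\ref{250101cor1} rather than by the cruder Lemma~\ref{241228lem3}. By Lemma~\ref{241028lem2} it suffices to produce two disjoint subsets $A,B\subseteq[n]$, each of which absorbs a Pauli error of arbitrary support inside it, with
\[
|A| = (d-1)+\CNN_1(n,d-1,r,t,s),\qquad |B| = (d-1)+\CNN_1\big(n-|A|,d-1,r,t,s\big);
\]
substituting $n-|A| = n-(d-1)-\CNN_1(n,d-1,r,t,s)$ then turns $k\le n-|A|-|B|$ into exactly the claimed inequality.

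To build $A$, apply Corollary~\ref{250101cor1} with the empty set (legitimate since $d<n$), obtaining a sequence $j_1,\dots,j_{|A|}\in[n]$ such that for every $d\le k\le|A|$ there is an index $\alpha_k$ with $\Gamma_{\alpha_k}(j_k)\cap\{j_1,\dots,j_{k-1}\}=\emptyset$; set $A=\{j_1,\dots,j_{|A|}\}$ and split it as $A=\{j_1,\dots,j_{d-1}\}\sqcup\{j_d,\dots,j_{|A|}\}$. For a Pauli error $E=E_{j_1}\cdots E_{j_{|A|}}$ the recovery proceeds in two stages. In the first stage one applies $\mathrm{Rec}_{\alpha_{|A|},j_{|A|}},\mathrm{Rec}_{\alpha_{|A|-1},j_{|A|-1}},\dots,\mathrm{Rec}_{\alpha_d,j_d}$ in that order; since $\Gamma_{\alpha_k}(j_k)\setminus\{j_k\}$ meets $A$ only inside $\{j_{k+1},\dots,j_{|A|}\}$, which has already been repaired by the time $\mathrm{Rec}_{\alpha_k,j_k}$ is applied, the commutation computation from the proof of Theorem~\ref{241208thm1} shows that each channel slides past the surviving error operators and strips off $E_{j_k}$; after this stage the state coincides with the codeword on every qudit outside $\{j_1,\dots,j_{d-1}\}$. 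In the second stage the residual error is supported on the fixed $(d-1)$-element set $\{j_1,\dots,j_{d-1}\}$, so by the Knill--Laflamme conditions (a distance-$d$ code corrects any $d-1$ erasures) one further recovery map restores the codeword; hence $\CCC$ corrects every error supported on $A$. The set $B$ is obtained in the same way by applying Corollary~\ref{250101cor1} to the set $A$ — legitimate when $n-|A|\ge d$ — producing a sequence $i_1,\dots,i_{|B|}\in[n]\setminus A$ with the analogous avoidance property, and rerunning the identical two-stage argument inside $[n]\setminus A$.

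The delicate point is the first recovery stage: one must check that composing the local channels $\mathrm{Rec}_{\alpha_k,j_k}$ in decreasing order of $k$ genuinely yields a CPTP map sending the corrupted state to a codeword corrupted only on $\{j_1,\dots,j_{d-1}\}$, i.e.\ that at the instant $\mathrm{Rec}_{\alpha_k,j_k}$ acts every qudit of $\Gamma_{\alpha_k}(j_k)$ other than $j_k$ is already error-free, and then that this output feeds cleanly into the distance-$d$ erasure recovery of $\{j_1,\dots,j_{d-1}\}$ underlying Lemma~\ref{241028lem2}. This is precisely the identity written out in the display of the proof of Theorem~\ref{241208thm1}, now terminated by the distance step instead of being continued all the way down to $j_1$. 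Boundary cases in which $n$ is too small for Corollary~\ref{250101cor1} to build $A$ or $B$ are handled exactly as in Theorem~\ref{241228thm4}.
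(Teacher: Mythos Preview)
Your proposal is correct and follows exactly the approach the paper intends: apply Lemma~\ref{241028lem2} after building the two disjoint correctable sets via Corollary~\ref{250101cor1}, then recover each set by the peeling-plus-distance argument you spell out (the paper's own proof is only the one-line sketch ``similar to Theorem~\ref{241208thm1} using Lemma~\ref{241028lem2} and Corollary~\ref{250101cor1}'', with the reference to Lemma~\ref{241028lem3} there an apparent typo). Your two-stage recovery description is the right way to fill in that sketch, and the boundary remark matches how the paper handles the analogous issue in Theorem~\ref{241228thm4}.
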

\begin{proof}
The proof is similar to that of  Theorem \ref{241208thm1} by using Lemma \ref{241028lem3},
and Corollary \ref{250101cor1}.

\end{proof}

In Figure \ref{fig:comp-22-27}, we compare the upper bounds in Theorem \ref{241228thm4} and Theorem \ref{thm:bnd-rts-2}, where $d,r,t,s$ are fixed.
We can observe that depending on the specific values of the parameters, one theorem may offer a tighter bound than the other. 
\begin{figure}[htbp]
    \centering
    \begin{minipage}{0.48\textwidth}
        \centering
        \includegraphics[width=\textwidth]{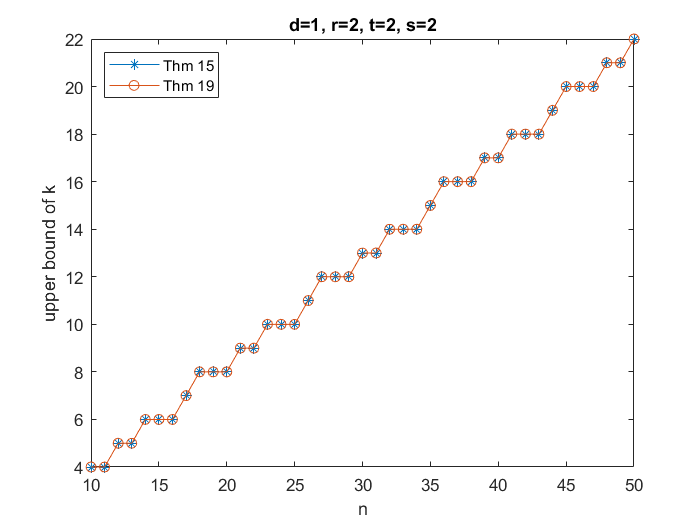}
    \end{minipage}\hfill
    \begin{minipage}{0.48\textwidth}
        \centering
        \includegraphics[width=\textwidth]{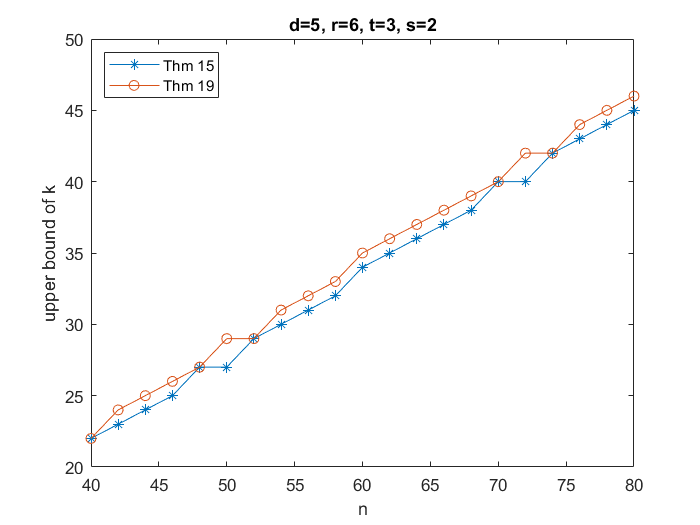}
    \end{minipage}
    
    \vskip\baselineskip

    \begin{minipage}{0.48\textwidth}
        \centering
        \includegraphics[width=\textwidth]{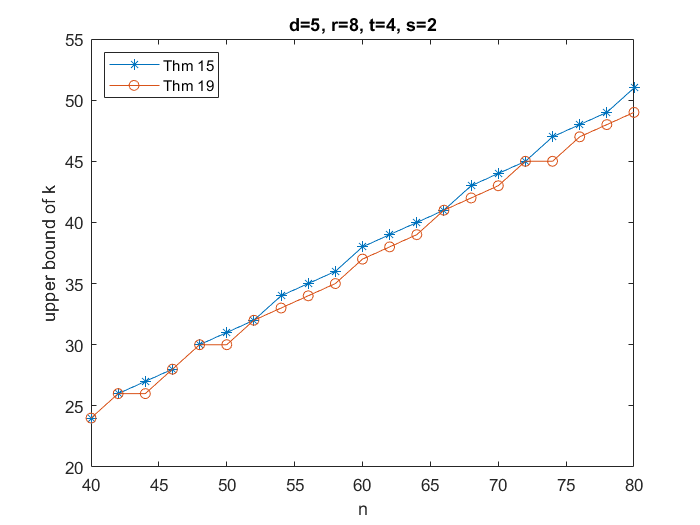}
    \end{minipage}\hfill
    \begin{minipage}{0.48\textwidth}
        \centering
        \includegraphics[width=\textwidth]{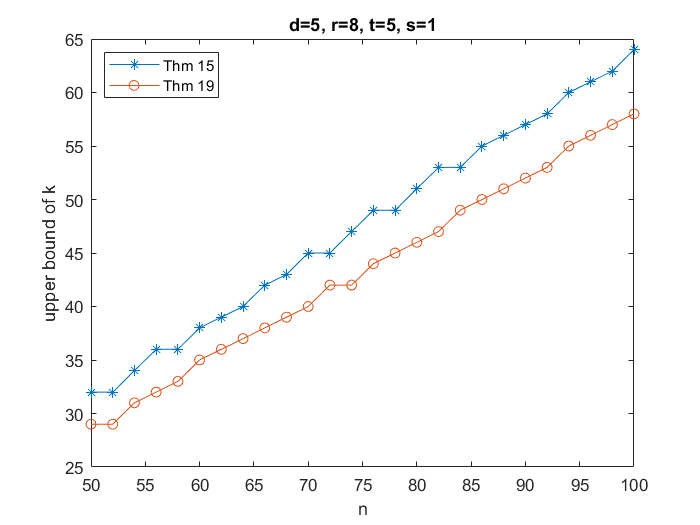}
    \end{minipage}
    
    \caption{Comparison of the upper bounds of $k$ from Theorem \ref{241228thm4} and Theorem \ref{thm:bnd-rts-2} }
    \label{fig:comp-22-27}
\end{figure}

\section{Construction of qLRCs}

In this section, we provide a method to construct exact qLRCs, based on 
the graph product of Tanner graphs. 

Recall that any parity-check matrix $H$ can be represented by a bipartite graph $(V,W,E)$, also referred to as its Tanner graph, which is constructed as follows:
vertices in $V$ (called bit nodes) correspond to columns of $H$,
and vertices in $W$ (called check nodes) correspond to rows of $H$;
a check node and a bit node are connected by an edge in $E$ if and only if there is a nonzero entry at the intersection of the corresponding row and column of $H$.
(See Fig. \ref{fig:tanner} for an example of the Tanner graph and the parity-check matrix. )
\begin{figure}[H]\label{fig:tanner}
    \centering
    \includegraphics[width=0.6\linewidth]{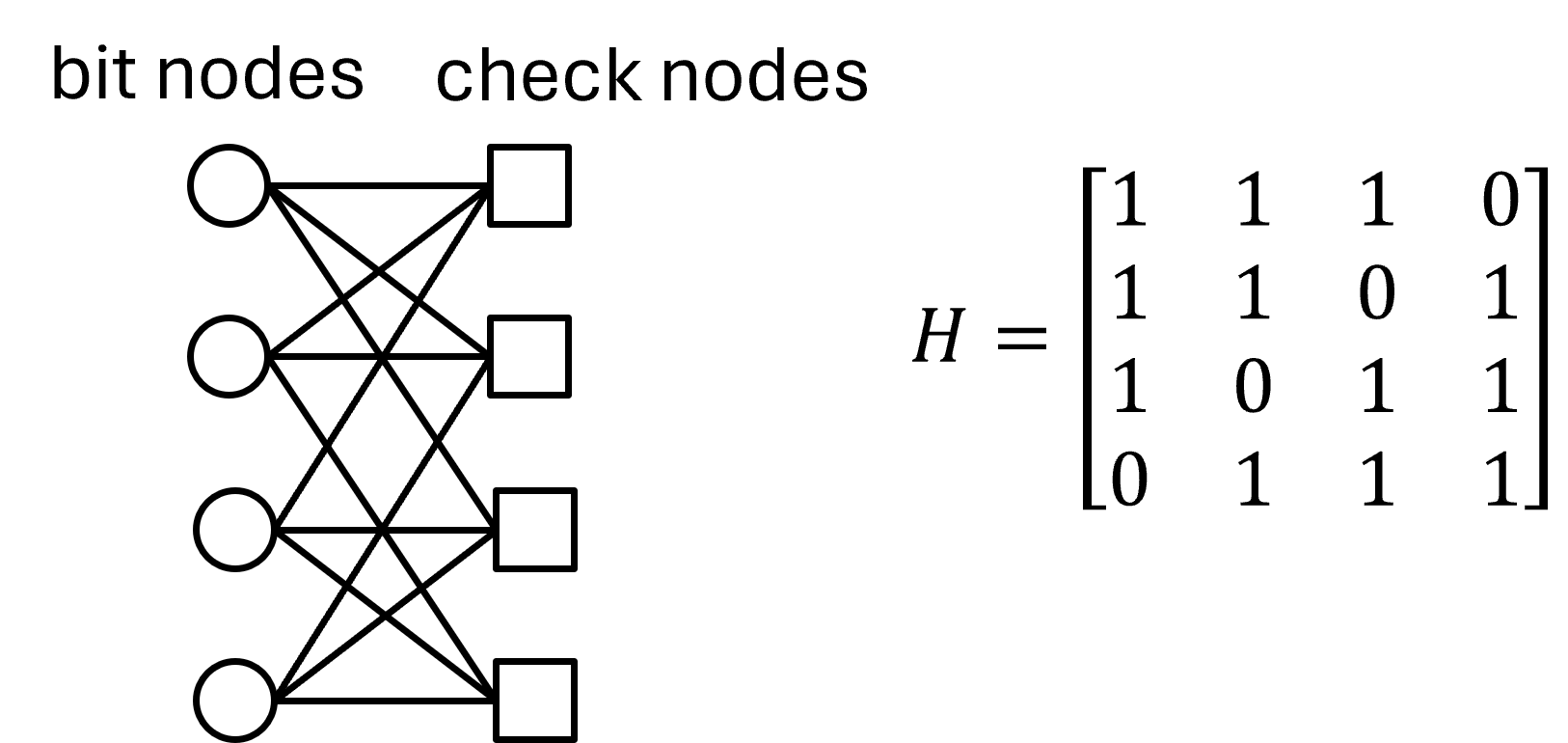}
        \caption{One example of Tanner graph and the corresponding parity-check matrix.}
\end{figure}

\begin{Def}[Graph product \cite{TZ13}]\label{def:graph-prod}
Let $\cG_1 $ and $\cG_2 $ be two graphs. 
The product of $\cG_1 $ and $\cG_2 $, denoted as $\cG_1\boxtimes\cG_2$, has a vertex set consisting of pairs $(x,y)$, where $x$ is a vertex from $\CGG_1$ and $y$ from $\CGG_2$.
Edges in the product graph connect two vertices $(x,y)$ and $(x',y')$ is either $x=x
'$ and $\{y,y'\}$ is an edge in $\CGG_2$ or $y=y'$ and $\{x,x'\}$ is an edge in $\CGG_1$.
\end{Def}
In particular, if $\cG_1= \left(V_1, W_1, E_1\right)$ and $\cG_2=\left(V_2, W_2, E_2\right)$ are two bipartite graphs,
then the product of $\cG_1$ and $\cG_2$ is also a bipartite graph $\cG_1\boxtimes\cG_2=(V,W,E) $,
where
\begin{align}\label{250111eq1}
\begin{aligned}
& V = V_1 \times V_2 \cup W_1 \times W_2,  \\
& W = W_1 \times V_2 \cup V_1 \times W_2, 
\end{aligned}
\end{align}
and
a vertex $(x_1,x_2)\in V$ connects to a vertex $(y_1,y_2)\in W$ iff:

1. $x_1=y_1$, and $x_2$ connects to $y_2$ in $\CGG_2$; or

2. $x_2=y_2$, and $x_1$ connects to $y_1$ in $\CGG_1$.

\begin{figure}[H]
    \includegraphics[width=0.9\linewidth]{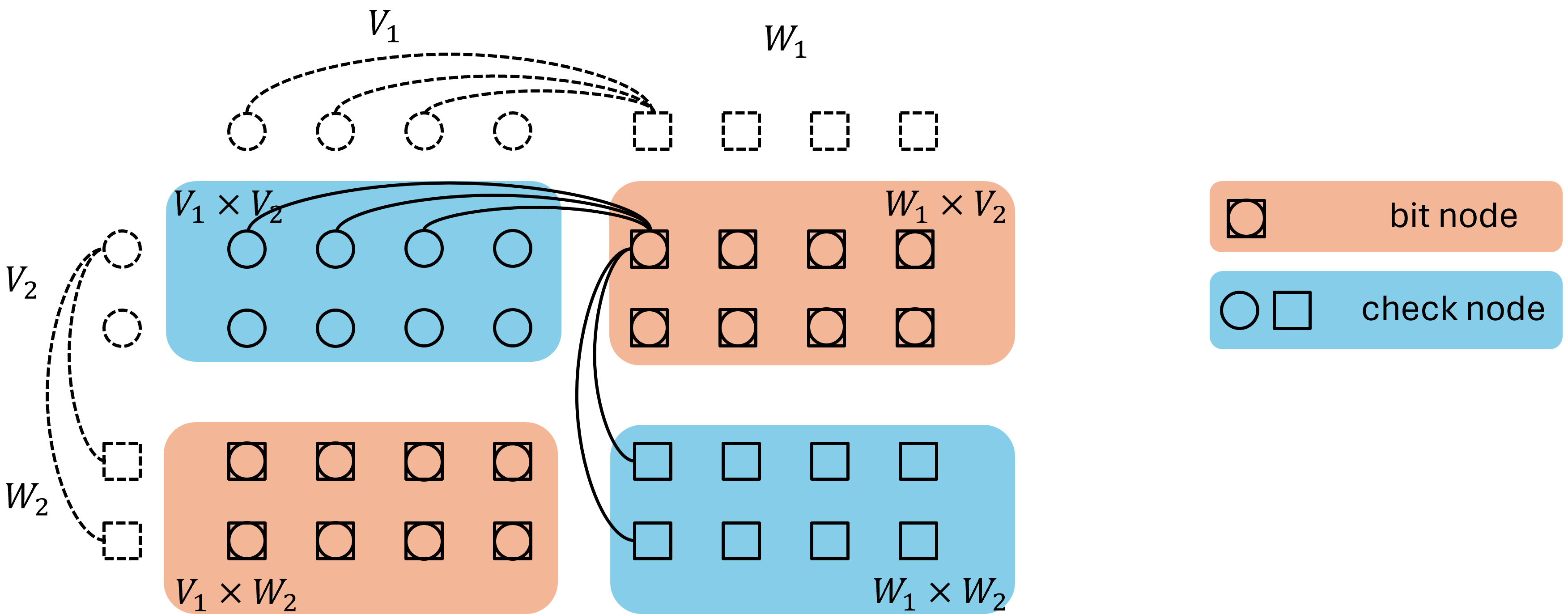}
      \caption{Bipartite graph product.}
\end{figure}

\begin{Def}\label{250101def1}
A bipartite graph $\cG= \left(V , W , E \right)$ is called  $(r,t,s)$-exact  if the following conditions are satisfied:

1. $\CGG$ is $(t,r+1)$-regular, i.e., every vertex in $V$ connects to $t$ vertices in $W$, and every vertex in $W$ connects to $r+1$ vertices in $V$.

2. Every two vertices in $W$ either have no common neighbors or have exactly $s+1$ common neighbors.

3. Every three vertices in $W$ either have no common neighbors or have exactly $1$ common neighbors.
\end{Def}

Figure \ref{241221fig1} gives an  $(1,2,1)$-exact Tanner graph and its parity-check matrix. 
\begin{figure}[H]
\centering
\includegraphics[width=0.6\linewidth]{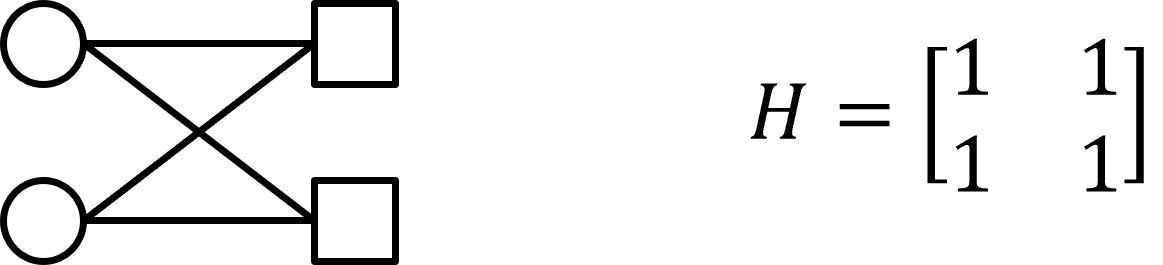}
\caption{An  $(1,2,1)$-exact  Tanner graph and the corresponding parity-check matrix.}\label{241221fig1}
\end{figure}

In fact, the exact Tanner graph corresponds to the exact classical LRC. 
\begin{prop}\label{250111prop1}
Given a classical code $\CCC$ with parity-check matrix $H$. 
If the Tanner graph $\CGG$ of $H$ is $(r,t,s)$-exact, then any subcode of $\CCC$ is an exact $(r,t,s)$-LRC.
\end{prop}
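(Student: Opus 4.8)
The plan is to read the recovery sets off the Tanner graph $\CGG=(V,W,E)$ and then check each clause of the definition mechanically. First I would fix notation: identify the bit nodes $V$ with the coordinate set $[n]$. By the $(t,r+1)$-regularity in Definition \ref{250101def1}, each $j\in[n]$ has exactly $t$ distinct check-node neighbors $w_1(j),\dots,w_t(j)\in W$; for each $l\in[t]$ define $\Gamma_l(j)\subseteq[n]$ to be the set of bit nodes adjacent to $w_l(j)$, i.e.\ the support of the parity check indexed by $w_l(j)$. Since every check node has degree $r+1$ we get $|\Gamma_l(j)|=r+1$, and since $w_l(j)$ is adjacent to $j$ we have $j\in\Gamma_l(j)$; this is Condition 1 of the exact $(r,t,s)$-condition.

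Next I would verify Conditions 2 and 3 of the exact $(r,t,s)$-condition using properties 2 and 3 of an $(r,t,s)$-exact graph. For $l_1\neq l_2$ the check nodes $w_{l_1}(j)$ and $w_{l_2}(j)$ are distinct and both adjacent to $j$, hence have a common neighbor, so by property 2 they have exactly $s+1$ common neighbors; thus $|\Gamma_{l_1}(j)\cap\Gamma_{l_2}(j)|=s+1$. Likewise, for distinct $l_1,l_2,l_3$ the three distinct check nodes $w_{l_1}(j),w_{l_2}(j),w_{l_3}(j)$ share the neighbor $j$, so by property 3 they have exactly one common neighbor, forcing $\Gamma_{l_1}(j)\cap\Gamma_{l_2}(j)\cap\Gamma_{l_3}(j)=\{j\}$. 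Hence the family $\left(\Gamma_l(j)\right)_{1\le l\le t,\,1\le j\le n}$ satisfies the exact $(r,t,s)$-condition.

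It then remains to establish local recoverability from the parity-check equations. Let $\CCC'$ be any subcode of $\CCC$ and take $c_1,c_2\in\CCC'$ with $c_1(j)\neq c_2(j)$. Since $c_1,c_2\in\CCC$, both lie in the kernel of $H$ (so linearity of $\CCC'$ itself is not needed, only that its elements lie in $\CCC$); in particular the row of $H$ indexed by $w_l(j)$ gives $\sum_{i\in\Gamma_l(j)}H_{w_l(j),i}\,(c_1(i)-c_2(i))=0$. As $H_{w_l(j),j}\neq0$ — this is precisely why $j\in\Gamma_l(j)$ — and is invertible in $\BFF_q$, the term $H_{w_l(j),j}(c_1(j)-c_2(j))$ is nonzero, so the remaining sum over $i\in\Gamma_l(j)\setminus\{j\}$ cannot vanish; hence $c_1(i)\neq c_2(i)$ for some $i\in\Gamma_l(j)\setminus\{j\}$, i.e.\ $c_1|_{\Gamma_l(j)\setminus\{j\}}\neq c_2|_{\Gamma_l(j)\setminus\{j\}}$. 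Equivalently, $c(j)=-H_{w_l(j),j}^{-1}\sum_{i\in\Gamma_l(j)\setminus\{j\}}H_{w_l(j),i}\,c(i)$ recovers coordinate $j$ from the coordinates in $\Gamma_l(j)\setminus\{j\}$. This is exactly the defining property of a classical exact $(r,t,s)$-LRC, which finishes the argument.

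The argument is essentially bookkeeping, so I do not expect a genuine obstacle. The only points needing care are that the $t$ check-node neighbors of a fixed $j$ are genuinely distinct — so that properties 2 and 3 of Definition \ref{250101def1}, which are phrased for distinct check nodes, actually apply — which is exactly what $(t,r+1)$-regularity supplies, and that those two properties are always invoked in the "nonempty common neighborhood" branch, which holds because $j$ itself is a common neighbor of all the check nodes in question.
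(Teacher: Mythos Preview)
Your proposal is correct and follows essentially the same approach as the paper's proof: define $\Gamma_l(j)$ as the support of the $t$ parity checks incident to $j$, verify the exact $(r,t,s)$-condition from the graph's exactness, and recover $c(j)$ from the parity-check equation. If anything, you are slightly more careful than the paper in explicitly noting that $j$ is a common neighbor of the relevant check nodes, so that the ``exactly $s+1$'' (resp.\ ``exactly $1$'') branch of Definition~\ref{250101def1} is the one that applies.
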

\begin{proof}
Consider $\CGG=([n],W,E)$.
Let us first figure out the family $ \left(\Gamma_l(j)\right)_{1\le l\le t,\; 1\le j\le n}$ of subsets of $[n]$ that satisfies the exact $(r,t,s)$-condition.
As $\CGG$ is $(r,t,s)$-exact,
every bit node $j$ in $[n]$ connects to $t$ check nodes in $W$. 
Therefore there are $t$ rows of $H$, say $ \vec h_1(j),...,\vec h_t(j)$,
whose $j$-th position is nonzero.
Let $\Gamma_l(j)$ be the support of $\vec h_l(j)$, $\forall l\in \set{1,...,t}$.
In particular, $j\in \Gamma_l(j)$.
Since $\CGG$ is $(r,t,s)$-exact,
each $ |\Gamma_l(j)|=r+1$. 
Moreover, $|\Gamma_{l_1}(j)\cap \Gamma_{l_2}(j)| = s+1$ when $l_1$ and $ l_2$ are different,
and
$\Gamma_{l_1}(j)\cap \Gamma_{l_2}(j) \cap \Gamma_{l_3}(j) = \{j\}$ when $l_1 , l_2$ and $l_3$ are different.
Hence $ \left(\Gamma_l(j)\right)_{1\le l\le t,\; 1\le j\le n}$  satisfies the exact $(r,t,s)$-condition.
 
Now for $1\le l\le t$, $\vec h_l(j)$ is a parity check of the code $\CCC$, hence any code word $\vec c\in \CCC$ satisfies $ \vec c\cdot \vec h_l(j)=0$,
which means
\[\sum_{i\in \Gamma_l(j)} \lambda_i \vec c(i)=0,\]
for some nonzero numbers $\lambda_i$.
In particular, 
\[\vec c(j) = -\frac{1}{\lambda_j} \sum_{i\in \Gamma_l(j) \setminus \{j\}} \lambda_i \vec c(i), \]
which means $\vec c(j)$ is determined by $\vec c|_{\Gamma_l(j) \setminus \{j\}}$.
Hence if $\vec c_1$, $\vec c_2$ are two code words in any subcode $\CCC_1$ of $ \CCC$ and $\vec c_1(j) \neq \vec c_2(j)$, then
\[\vec c_1|_{\Gamma_l(j) \setminus \{j\} } \neq \vec c_2|_{\Gamma_l(j) \setminus \{j\} }, \quad l\in\set{1,2,...,t}.\]
Hence $\CCC_1$ is exact $(r,t,s)$-LRC.
\end{proof}

\begin{prop}\label{250111prop2}
Suppose $H$ is a binary matrix whose Tanner graph $\CGG$ is $(r,t,s)$-exact, where $s+1$ and $r+1$ are both even.
Then the CSS quantum code $\CCC= CSS(H_X,H_Z)$ with $H_X=H_Z= H$ is an $(r,t,s)$-qLRC.
\end{prop}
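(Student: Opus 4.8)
The plan is to exhibit explicitly the recovery sets and the local recovery channels demanded by Definition~\ref{241130def1}, after first checking that $\CCC=CSS(H,H)$ is a legitimate CSS code. For the latter I would verify $HH^T=0$ over $\BFF_2$: the $(i,i)$ entry of $HH^T$ is $|\mathrm{supp}(\vec h_i)|=r+1$, and the $(i,i')$ entry with $i\neq i'$ is $|\mathrm{supp}(\vec h_i)\cap\mathrm{supp}(\vec h_{i'})|\in\set{0,\,s+1}$; both are even by hypothesis, so $HH^T=0$. Equivalently, for any two rows $\vec h,\vec h'$ of $H$ the Pauli operators $X^{\vec h}$ and $Z^{\vec h'}$ commute (in particular $X^{\vec h}$ and $Z^{\vec h}$ commute), so $\{X^{\vec h_i}\}_i\cup\{Z^{\vec h_i}\}_i$ generates an abelian stabilizer group. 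For the recovery sets I would take exactly the family from Proposition~\ref{250111prop1}: for each qubit $j$, let $\vec h_1(j),\dots,\vec h_t(j)$ be the $t$ rows of $H$ with a $1$ in column $j$ — equivalently the $t$ check nodes adjacent to the bit node $j$ in $\CGG$ — and set $\Gamma_l(j):=\mathrm{supp}(\vec h_l(j))$. Since $\CGG$ is $(r,t,s)$-exact these satisfy $j\in\Gamma_l(j)$, $|\Gamma_l(j)|=r+1$, and $|\Gamma_{l_1}(j)\cap\Gamma_{l_2}(j)|=s+1$ for $l_1\neq l_2$, which is conditions~1 and~4 (indeed the full exact $(r,t,s)$-condition).

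Next I would construct $\mathrm{Rec}_{l,j}$, which is just the standard single-erasure decoder of a CSS code restricted to the two local stabilizers. Write $S^X_{l,j}:=X^{\vec h_l(j)}=X_j\widetilde X$ and $S^Z_{l,j}:=Z^{\vec h_l(j)}=Z_j\widetilde Z$, where $\widetilde X=\prod_{i\in\Gamma_l(j)\setminus\{j\}}X_i$ and $\widetilde Z=\prod_{i\in\Gamma_l(j)\setminus\{j\}}Z_i$ are supported on $\Gamma_l(j)\setminus\{j\}$; these are two commuting stabilizers of $\CCC$, both supported on $\Gamma_l(j)$. I would then let $\mathrm{Rec}_{l,j}$ act on an input state $\sigma$ on $[n]\setminus\{j\}$ by: adjoining a fresh qubit at position $j$ in the maximally mixed state $I/2$; jointly measuring $(S^X_{l,j},S^Z_{l,j})$ to obtain $(a,b)\in\{\pm1\}^2$; and applying to qubit $j$ the Pauli $I,X_j,Z_j,Y_j$ according as $(a,b)=(+,+),(+,-),(-,+),(-,-)$. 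Every step touches only qubits in $\Gamma_l(j)$, so the channel is supported on $\Gamma_l(j)$.

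For correctness I would argue as follows. Adjoining a maximally mixed qubit to $\trace_j(\proj\psi)$ yields $\trace_j(\proj\psi)\otimes\tfrac{I}{2}=\tfrac14\sum_{P\in\{I,X,Y,Z\}}P_j\proj\psi P_j$, the output of the completely depolarizing channel on $j$. From $S^X_{l,j}\ket\psi=\ket\psi$ and $S^Z_{l,j}\ket\psi=\ket\psi$ one gets $X_j\ket\psi=\widetilde X\ket\psi$ and $Z_j\ket\psi=\widetilde Z\ket\psi$, from which each branch $P_j\ket\psi$ is again a joint eigenvector of $(S^X_{l,j},S^Z_{l,j})$, with eigenvalue pair $(+,+),(+,-),(-,+),(-,-)$ precisely for $P=I,X,Z,Y$ respectively (a single-qubit Pauli on $j$ commutes or anticommutes with $S^X_{l,j}$ according to its $Z$-component at $j$, and with $S^Z_{l,j}$ according to its $X$-component at $j$). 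Hence the joint measurement returns $(a,b)$ deterministically, identifying which $P_j$ occurred and leaving that branch undisturbed; the subsequent correction sends it back to $\proj\psi$, and averaging over branches gives $\mathrm{Rec}_{l,j}(\trace_j(\proj\psi))=\proj\psi$ — conditions~2 and~3. Thus $\CCC$ is an $(r,t,s)$-qLRC (in fact an exact one).

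The one non-routine point is the conceptual observation that erasure recovery of qubit $j$ localizes onto the two CSS stabilizers supported on $\Gamma_l(j)$ and that this pair already forms a complete syndrome for the four single-qubit Pauli errors at $j$. The parity hypotheses ``$r+1$ even'' and ``$s+1$ even'' are used exactly — and only — here: they are what makes $H_X=H_Z=H$ satisfy $HH^T=0$, equivalently what makes $S^X_{l,j}$ and $S^Z_{l,j}$ (and every pair $X^{\vec h_i},Z^{\vec h_{i'}}$) commute so the joint measurement is well defined. The remaining details — the phase convention making $Y_j^2=I$ yield the right correction, and that a mixture of states with distinct syndromes is undisturbed by measuring the syndrome — are routine. (The same argument goes through verbatim for qudits of prime dimension $p$ once ``even'' is replaced by ``divisible by $p$'' and the four Paulis by the $p^2$ single-qudit Heisenberg--Weyl operators at $j$.)
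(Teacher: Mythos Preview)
Your proposal is correct and follows essentially the same approach as the paper's own proof: both verify $HH^T=0$ from the evenness of $r+1$ and $s+1$, take the recovery sets $\Gamma_l(j)$ to be the supports of the $t$ rows through column $j$, and build $\mathrm{Rec}_{l,j}$ by measuring the two commuting stabilizers $X^{\vec h_l(j)},Z^{\vec h_l(j)}$ and correcting based on the syndrome, using the decomposition $\trace_j(\psi)\otimes\tfrac{I}{2}=\tfrac14\sum_P P_j\psi P_j$. Your write-up is in fact slightly more explicit than the paper's (which defers the recovery-channel construction to a cited proposition), but the substance is identical.
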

\begin{proof}
Since $s+1$ and $r+1$ are both even and $\CGG$ is $(r,t,s)$-exact,
every two rows in $H$ have inner product 0, i.e.,
$H_X H_Z^T=0$,
hence $\CCC$ is a CSS quantum code.
In the following, we show $\CCC$  is an $(r,t,s)$-qLRC.

Say $\CGG=([n],W,E)$.
Define $ \left(\Gamma_l(j)\right)_{1\le l\le t,\; 1\le j\le n}$ as same as in the proof of Proposition \ref{250111prop1}.
Then by the same reason, $ \left(\Gamma_l(j)\right)_{1\le l\le t,\; 1\le j\le n}$ is a family of subsets of $[n]$ satisfying the exact $(r,t,s)$-condition.  
Moreover,
for every qubit $j\in [n]$ there are $t$ rows of $H$, say $ \vec h_1(j),...,\vec h_t(j)$,
whose supports are  $\Gamma_1(j)$,..., $\Gamma_t(j)$ respectively.
To show $\CCC$ is an exact $(r,t,s)$-qLRC,
we need to construct a recovery channel $\mathrm{Rec}_{l,j}$ for every $l\in [t]$ and $j\in [n]$,
such that each $\mathrm{Rec}_{l,j}$ is a quantum channel supported on qubits in $\Gamma_l(j)  $, and for every code state $\psi \in \CCC $ and every qubit $j\in [n]$ we have
$$\mathrm{Rec}_{l,j}  \left(\trace _j(\psi )\right) =\psi,\quad \forall 1\le l\le t. 
$$

As $\CCC$ is a CSS code with $H_X=H_Z= H$,
the stabilizer group of $\CCC$ contains stabilizers $ X^{\vec h_l(j)}$ and $  Z^{\vec h_l(j)}$ for any $j\in [n]$ and $l\in [t]$.
Then we construct the recovery channel $\mathrm{Rec}_{l,j}$ as follows (which is essentially Proposition 33 in \cite{GG23}). 
We first perform syndrome measurements for $X^{\vec h_l(j)} $ and $Z^{\vec h_l(j)}$.
This measurement will collapse the errored code word $\mathrm{Tr}_j (\psi)$ to a common eigenspace of $X^{\vec h_l(j)}$ and $Z^{\vec h_l(j)}$, and the measurement outcomes give the eigenvalues of the projected state for $X^{\vec h_l(j)}$ and $Z^{\vec h_l(j)}$.
Now,
\[\mathrm{Tr}_j (\psi) = \frac14 ( \psi + X_j \psi X_j + Y_j \psi Y_j + Z_j\psi Z_j).\]
Note that  $\psi$, $ X_j \psi X_j$, $ Y_j \psi Y_j $ and $Z_j\psi Z_j$ are all pure common eigenstates of $X^{\vec h_l(j)}$ and $Z^{\vec h_l(j)}$,
with eigenvalues $(1,1)$, $(1,-1)$, $(-1,-1)$ and $(-1,1)$ respectively. 
Therefore, the syndrome measurement outcomes provide the error $E=X_j^a Z_j^b$, and the recovery channel $\mathrm{Rec}_{l,j}$ then applies $E^\dag=Z_j^{-b} X_j^{-a}$ to revert the error and recover the original code state $\psi$.
Hence $\mathrm{Rec}_{l,j}$ is a quantum channel supported on  $\Gamma_l(j)  $  and can recover any code state $\psi \in \CCC $ from $\mathrm{Tr}_j (\psi)$.
Therefore by definition $\CCC$ is an $(r,t,s)$-qLRC.
\end{proof}

\begin{Examp}\label{241231exm1}
The following is a check matrix of the Hamming $(7,4,3)$-code.
\begin{align*}
H=\left[
\begin{matrix}
1 & 1 & 1 & 1 & 0 & 0 & 0\\
1 & 1 & 0 & 0 & 0 & 1 & 1\\
1 & 0 & 1 & 0 & 1 & 0 & 1\\
1 & 0 & 0 & 1 & 1 & 1 & 0\\
0 & 1 & 1 & 0 & 1 & 1 & 0\\
0 & 1 & 0 & 1 & 1 & 0 & 1\\
0 & 0 & 1 & 1 & 0 & 1 & 1\\
\end{matrix}
\right]
\end{align*}
In this matrix, every row has 4 nonzero elements, 
every column has 4 nonzero elements, 
every two rows have exactly 2 common nonzero positions,
and every three rows have either 1 or 0 common nonzero positions.
This configuration establishes a classic exact $(3,4,1)$-LRC.
The corresponding bipartite graph $\CGG$ is also $(3,4,1)$-exact.
Moreover,
one can check that $H$ satisfies $H H^T =\mathbf{0}$, hence we can choose $H_X=H_Z=H$ to construct a CSS code $Q(\CGG)$,
which is known as the quantum Hamming (7,1,3)-code.
From Proposition \ref{250111prop2},
$Q(\CGG)$ is an exact $(3,4,1)$-qLRC.
\end{Examp}

Now, let us proceed to construct larger exact LRC/qLRCs from known codes, supported by the following theorem. Before we delve into that, let us introduce a notation:
for a bipartite graph $\CGG=(V, W, E)$,  we denote $\CGG^T$ to be its transpose, i.e., $\CGG^T $ is the bipartite graph $ (W, V, E)$.

\begin{thm}\label{250101thm2}
Suppose $\CGG_1$ and $\CGG_1^T$ are both  $(r_1,t_1=r_1+1,1)$-exact bipartite graphs,
and $\CGG_2$ and $\CGG_2^T$ are both $(r_2,t_2=r_2+1,1)$-exact bipartite graphs.
Then $\CGG_1 \boxtimes \CGG_2$ and its transpose are both $(r=r_1+r_2+1,t=r_1+r_2+2,1)$-exact bipartite graphs.
\end{thm}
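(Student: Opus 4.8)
The plan is to verify the three defining conditions of Definition \ref{250101def1} directly for $\CGG := \CGG_1\boxtimes\CGG_2$, writing the vertex and edge structure out explicitly using \eqref{250111eq1} and the adjacency rule. Recall $\CGG$ has bit nodes $V = (V_1\times V_2)\cup(W_1\times W_2)$ and check nodes $W = (W_1\times V_2)\cup(V_1\times W_2)$, and that a bit node connects to a check node exactly when they agree in one coordinate and are adjacent in the other factor. The symmetry hypothesis (both $\CGG_i$ and $\CGG_i^T$ are exact) is what makes the argument uniform: it lets us treat the two ``halves'' of $W$ on the same footing, and since $(\CGG_1\boxtimes\CGG_2)^T = \CGG_1^T\boxtimes\CGG_2^T$ (with the roles of bit/check nodes swapped, which is again a product of exact graphs in both orientations), the claim for the transpose will follow from the same computation with $\CGG_i$ replaced by $\CGG_i^T$. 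So it suffices to prove $\CGG$ itself is $(r,t,1)$-exact.

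First I would check regularity (condition 1). A check node $(w_1,x_2)\in W_1\times V_2$ is adjacent to bit nodes $(w_1, x_2')$ with $x_2'\sim x_2$ in $\CGG_2$ — there are $t_2 = r_2+1$ of these since $\CGG_2$ is $(t_2,r_2+1)$-regular and $x_2\in V_2$ has degree $t_2$... wait, I need to be careful which side: a vertex in $V_2$ has $t_2$ neighbors in $W_2$, so $(w_1,x_2)$ has $t_2$ neighbors of the form $(w_1, w_2)\in W_1\times W_2$; and it has $r_1+1$ neighbors of the form $(v_1, x_2)\in V_1\times V_2$ coming from the $r_1+1$ bit-neighbors of $w_1$ in $\CGG_1$. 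Hence $\deg(w_1,x_2) = (r_1+1)+(r_2+1) = r_1+r_2+2$. By symmetry (and using that $\CGG_i^T$ is regular too) every check node has degree $r+1 = r_1+r_2+2$, and every bit node has degree $t = r_1+r_2+2$; note $t = r+1$ as required. This step is routine bookkeeping.

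The substantive part is conditions 2 and 3: counting common neighbors of pairs and triples of check nodes. Given two check nodes, I would split into cases by which ``halves'' of $W$ they lie in. If both lie in $W_1\times V_2$, say $(w_1,x_2)$ and $(w_1',x_2')$: a common bit-neighbor in $V_1\times V_2$ must be adjacent to $w_1$ and $w_1'$ in $\CGG_1$, so there are $|N_{\CGG_1}(w_1)\cap N_{\CGG_1}(w_1')|\in\{0,2\}$ choices for the first coordinate and the second is forced to be... no — a bit node $(v_1,y_2)\in V_1\times V_2$ is adjacent to $(w_1,x_2)$ iff $y_2 = x_2$ and $v_1\sim w_1$; adjacent to $(w_1',x_2')$ iff $y_2 = x_2'$ and $v_1\sim w_1'$. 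So we need $x_2 = x_2' =: y_2$ and $v_1\in N(w_1)\cap N(w_1')$. Common neighbors in $W_1\times W_2$: $(u_1,u_2)$ adjacent to $(w_1,x_2)$ iff $u_1 = w_1$ and $u_2\sim x_2$, so this forces $w_1 = w_1'$, then $u_2\in N_{\CGG_2^T\text{-side}}(x_2)\cap N(x_2')$. Carefully enumerating these subcases — using $s_i+1 = 2$ and the triple-intersection condition for each factor and its transpose — one sees the common-neighbor count is always $0$, $2$, or $1$ depending on how many coordinates coincide, and that $1$ occurs precisely when the two check nodes are ``diagonal neighbors'' forced to share a single bit node; the mixed case (one check node in each half) is a short separate computation giving either $0$ or $1$. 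The triple case is handled the same way: once we intersect three check nodes, enough coordinates are pinned that the intersection is a single vertex or empty, matching condition 3. The main obstacle is purely the case analysis — keeping straight which coordinate is ``matched'' versus ``adjacent'' in each of the two factor graphs and invoking the correct one of the four exactness hypotheses ($\CGG_1,\CGG_1^T,\CGG_2,\CGG_2^T$) in each subcase — rather than any conceptual difficulty; the diagonal/product split of the edge set of a graph product is exactly the structure that forces all the higher intersections down to $\{$single vertex$\}$.
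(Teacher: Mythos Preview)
Your approach is the same as the paper's: check regularity, then case-split pairs and triples of check nodes by which half of $W$ they lie in, and deduce the transpose statement from $(\CGG_1\boxtimes\CGG_2)^T=\CGG_1^T\boxtimes\CGG_2^T$.

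There is one real error in your sketch, though. In the mixed case $c=(c_1,v_2)\in W_1\times V_2$, $c'=(v_1',c_2')\in V_1\times W_2$, a common neighbor in $V_1\times V_2$ is forced (by the coordinate-matching rule) to be $(v_1',v_2)$, and a common neighbor in $W_1\times W_2$ is forced to be $(c_1,c_2')$; each is an actual neighbor iff $v_1'\sim c_1$ in $\CGG_1$ and $v_2\sim c_2'$ in $\CGG_2$, and these are the \emph{same} two conditions. So the mixed case gives $0$ or $2$ common neighbors, not $0$ or $1$. In fact for \emph{every} pair of check nodes the count is $0$ or $2$ --- as it must be, since you are proving $s+1=2$ --- so your statement that ``$1$ occurs precisely when the two check nodes are diagonal neighbors'' is wrong at the pair level. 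The value $1$ only appears when you pass to triples; in the mixed triple case the paper observes that the two forced common neighbors $(c_1,c_2')$ and $(v_1',v_2)$ admit no third check node adjacent to both (the coordinate constraints pin any such $c''$ to equal $c$ or $c'$), so the triple intersection is automatically $\le 1$. Once you correct the mixed-pair count, the rest of your plan matches the paper.
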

\begin{proof}
Say $\CGG_1= (V_1, W_1,E_1) $, $\CGG_2= (V_2, W_2,E_2) $, and denote $\CGG:= \CGG_1 \boxtimes \CGG_2= (V, W,E) $.
To show $\CGG$ and $\CGG^T$ are both $(r,t,1)$-exact, we need to check the conditions in Definition \ref{250101def1}.

First we show $\CGG$ is $(t,r+1)$-regular.
Pick any vertex $v\in V$, then either $v\in V_1\times V_2$ or $v\in W_1\times W_2$.
If $v=(v_1,v_2)\in V_1\times V_2$, then the neighboors of $v$ are the vertices $(v_1,c_2) $ and $(c_1,v_2)$, where $c_1$ connects $v_1$ in $\CGG_1$ and $c_2$ connects $v_2$ in $\CGG_2$. 
Since $\CGG_1$ is $(r_1,t_1,1)$-exact and $\CGG_2$ is  $(r_2,t_2,1)$-exact,
the vertex $v$ connects $t_1+t_2 = t$ vertices in $W$.
If $v=(c_1,c_2)\in W_1\times W_2$,
similarly the vertex $v$ connects $r_1+1+r_2+1 = t$ vertices in $W$.
On the other hand,
pick any vertex $c\in W$, then either $c\in V_1\times W_2$ or $c\in W_1\times V_2$.
If $c=(v_1,c_2)\in V_1\times W_2$, then the neighboors of $c$ are the vertices $(v_1,v_2) $ and $(c_1,c_2)$, where $c_1$ connects $v_1$ in $\CGG_1$ and $c_2$ connects $v_2$ in $\CGG_2$.
Since $\CGG_1$ is $(r_1,t_1,1)$-exact and $\CGG_2$ is  $(r_2,t_2,1)$-exact,
the vertex $c$ connects $t_1+r_2+1 = r+1$ vertices in $V$.
If $c=(c_1,v_2)\in W_1\times V_2$,
similarly the vertex $v$ connects $r_1+1+t_2 = r+1$ vertices in $V$.
Hence $\CGG$ is $(t,r+1)$-regular.

Next, we show that for every pair of vertices $c$ and $c'$ in $W$,
if they have common neighbors,
then they must have exactly $2$ common neighbors.
We have 4 cases.

Case 1, if $c=(c_1,v_2)$ and $c'=(c_1',v_2')$ are both from $W_1\times V_2$. 
If  $c_1= c_1'$ and $v_2\neq v_2'$,
from the condition that $\CGG_2^T$ is $ (r_2, t_2, 1)$-exact we have that, either $v_2$ and $v_2'$ have no common  neighbors in $\CGG_2$, which implies that $c$ and $c'$ have no common neighbors;
or $v_2$ and $v_2'$ have exactly 2 common  neighbors, say $c_+$ and $c_-$,
which implies that $c$ and $c'$ also have exactly 2 common  neighbors, $ (c_1,c_+)$ and $(c_1,c_-)$.
If  $c_1\neq c_1'$ and $v_2= v_2'$,
from the condition that $\CGG_1$ is $ (r_1, t_1, 1)$-exact,
we also have that $c$ and $c'$ must have exactly 2 common  neighbors.
Finally, if $c_1\neq c_1'$ and $v_2\neq v_2'$,
then based on the definition of the graph product, it is direct to see that $c$ and $c'$ have no common neighbors.

Case 2, if $c=(v_1,c_2)$ and $c'=(v_1',c_2')$ are both from $V_1\times W_2$.
This case is similar to Case 1. From the conditions that $\CGG_1^T$ is $ (r_1, t_1, 1)$-exact and $\CGG_2$ is $ (r_2, t_2, 1)$-exact,
we derive that $c$ and $c'$  either have no common neighbors or have exactly $2$ common neighbors.

Case 3, if $c=(c_1,v_2)\in W_1\times V_2$ and $c'=(v_1',c_2')\in V_1\times W_2$. 
Then from the definition of graph product,
if $c$ and $c'$ has at least one common neighbor,
then they must have exactly 2 common neighbors, which are $(c_1,c_2')$ and $(v_1',v_2)$.

Case 4, if $c=(v_1,c_2)\in V_1\times W_2$ and $c'=(c_1',v_2')\in W_1\times V_2$.
This case is similar to Case 3.

Hence every pair of vertices in $W$ either have no common neighbors or have exactly $2$ common neighbors in $\CGG$.

Finally, we show that every triple of different vertices $c$, $c'$, and $c''$ in $W$ have at most $1$ common neighbor.
If $c$ and $c'$ do not have common neighbors, then we are done.
So in the following, we assume $c$ and $c'$ have exactly $2$ common neighbors in $\CGG$.
Again we divide it into 4 cases.

Case 1', if $c=(c_1,v_2)$ and $c'=(c_1',v_2')$ are both from $W_1\times V_2$.
If  $c_1= c_1'$ and $v_2\neq v_2'$,
by assumption $v_2$ and $v_2'$ must have exactly 2 common  neighbors, say $c_+$ and $c_-$, and the 2 common  neighbors of $c$ and $c'$ are $ (c_1,c_+)$ and $(c_1,c_-)$.
Now if $c'' = (c_1'',v_2'')$ connects to both $ (c_1,c_+)$ and $(c_1,c_-)$,
then $c_1''=c_1$, and $v_2''$ also connects to both $c_+$ and $c_-$ in $\CGG_2$.
Hence $v_2$ $v_2'$ and $v_2''$ have 2 common  neighbors, contradicts to the $ (r_2, t_2, 1)$-exact condition for $\CGG_2^T$.
Hence $c$, $c'$ and $c''$ have at most $1$ common neighbor.
If  $c_1\neq c_1'$ and $v_2= v_2'$,
from the condition that $\CGG_1$ is $ (r_1, t_1, 1)$-exact,
we also have that $c$, $c'$ and $c''$ have at most $1$ common neighbor.

Case 2', if $c=(v_1,c_2)$ and $c'=(v_1',c_2')$ are both from $V_1\times W_2$.
This case is similar to Case 1'.  

Case 3', if $c=(c_1,v_2)\in W_1\times V_2$ and $c'=(v_1',c_2')\in V_1\times W_2$. 
From the discussion in case 3,
the 2 common neighbors of $c$ and $c'$ must be $(c_1,c_2')$ and $(v_1',v_2)$.
Now by the definition of graph product, there does not exist another vertex $c''\in W$ that can connect to both $(c_1,c_2')$ and $(v_1',v_2)$.
Hence $c$, $c'$ and $c''$ have at most $1$ common neighbor.

Case 4', if $c=(v_1,c_2)\in V_1\times W_2$ and $c'=(c_1',v_2')\in W_1\times V_2$.
This case is similar to Case 3'.

Therefore every triple of vertices in $W$ either have no common neighbors or have exactly 1 common neighbors,
so $\CGG$ is $(r,t,1)$-exact.
Finally,
$\CGG^T= \CGG_1^T \boxtimes \CGG_2^T$,
hence our assumptions also imply that $\CGG^T$ is $(r,t,1)$-exact.
\end{proof}

\begin{Rem}
Here let us compare our construction of quantum codes in Proposition \ref{250111prop2} 
with the hypergraph product code introduced in \cite{TZ13}.
Recall the hypergraph product code (HPC) is a kind of quantum CSS code constructed from bipartite graph products.
Using the notations in equation \eqref{250111eq1},
the HPC corresponding to the product of $\CGG_1$ and $\CGG_2$ is a CSS code $CSS(H_X',H_Z')$,
where the Tanner graph of $H_X'$ is the subgraph of $\CGG_1\boxtimes \CGG_2$ with set of bit nodes $V$ and set of check nodes $W_1\times V_2$,
and the Tanner graph of $H_Z'$ is the subgraph of $\CGG_1\boxtimes \CGG_2$ with set of bit nodes $V$ and set of check nodes $V_1\times W_2$.
The graph product ensures that $H_X' H_Z'^T=0$,
hence $CSS(H_X',H_Z')$ is a CSS code.
In Proposition \ref{250111prop2},
we consider the CSS code $CSS(H_X,H_Z)$ with $H_X=H_Z=H$ whose Tanner graph is $(r,t,s)$-exact.
By the $(r,t,s)$-exact condition of the Tanner graph (with even $s+1$ and even $r+1$), we have that $CSS(H_X,H_Z)$ is a CSS code.
By Theorem \ref{250101thm2},
the needed Tanner graph can be constructed from a graph product of two bipartite graphs with certain $(r,t,s)$-exact condition.
In particular, when the Tanner graph of $H$ is $\CGG_1\boxtimes \CGG_2$,
we have
\begin{align*}
H_X=H_Z=H=\left(
\begin{aligned}
H_X'\\
H_Z'
\end{aligned}\right).
\end{align*}

\end{Rem}

\begin{thm}\label{250112thm1}
For $i=1,2$,
suppose $r_i$ is odd, $t_i$ is even, and $\cG_i=(V_i, W_i, E_i)$ is an  $(r_i,t_i,1)$-exact Tanner graph whose transpose $\CGG_i^T$ is $(t_i-1,r_i+1,1)$-exact.
Let $\cC_1=(n_1, k_1, d_1)$, $\cC_2=(n_2,k_2,d_2)$,  $\cC_1^T=(n_1^T, k_1^T, d_1^T)$, and $\cC_2^T=(n_2^T,k_2^T,d_2^T)$ be binary classic codes whose parity check matrices correspond to the Tanner graphs $\cG_1$, $\cG_2$, $\cG_1^T$ and $\cG_2^T$ respectively.
Let   $\cG=\cG_1\boxtimes\cG_2$ 
and assume $Q(\CGG)$ is an  $[[n, k, d]]$-code.
Then we have the following result:  

(1). $n=|W_1||W_2|+|V_1||V_2|$.

(2). $k\ge (2k_1-n_1)(2k_1^T-n_1^T)+(2k_2-n_2)(2k_2^T-n_2^T)$.

(3). $d \ge \min \left\{d_1, d_2, d_1^T, d_2^T\right\}$.

(4). If $t_i=r_i+1$,
then $Q(\CGG)$ is an exact $(r=r_1+r_2+1,t=r_1+r_2+2,1)$-qLRC.
\end{thm}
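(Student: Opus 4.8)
Here is how I would approach the four parts; (2) and (3) carry the real content.

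Claim (1) is immediate from \eqref{250111eq1}: the bit nodes of $\CGG=\CGG_1\boxtimes\CGG_2$ are $V_1\times V_2\cup W_1\times W_2$, and $n$ is just the number of bit nodes of $\CGG$, so $n=|V_1||V_2|+|W_1||W_2|=n_1n_2+|W_1||W_2|$, where $|W_i|$ is the number of rows of the parity-check matrix attached to $\CGG_i$. Claim (4) follows from results already in the paper: when $t_i=r_i+1$ the hypotheses say that $\CGG_i$ is $(r_i,r_i+1,1)$-exact and $\CGG_i^{T}$ is $(t_i-1,r_i+1,1)=(r_i,r_i+1,1)$-exact, so Theorem~\ref{250101thm2} gives that $\CGG$ and $\CGG^{T}$ are $(r,t,1)$-exact with $r=r_1+r_2+1$ and $t=r_1+r_2+2$; since every $r_i$ is odd, $r+1$ and $s+1=2$ are even, so Proposition~\ref{250111prop2} applies and $Q(\CGG)=CSS(H,H)$ is an exact $(r,t,1)$-qLRC.

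For Claim (2), write $H_i$ for the parity-check matrix of $\CCC_i$ (size $m_i\times n_i$, $m_i:=|W_i|$) and $H$ for the one attached to $\CGG$. Reading off the adjacencies of the product graph gives the block form
\[
H=\begin{pmatrix} H_1\ot I & I\ot H_2^{T}\\ I\ot H_2 & H_1^{T}\ot I\end{pmatrix},
\]
with column blocks $V_1\times V_2$, $W_1\times W_2$ and row blocks $W_1\times V_2$, $V_1\times W_2$. Since $Q(\CGG)=CSS(H,H)$ we have $k=n-2\rank H$. The exactness hypotheses force the four identities $H_iH_i^{T}=0$ and $H_i^{T}H_i=0$ over $\BFF_2$: rows of $H_i$ have (even) weight $r_i+1$ and (even) pairwise overlap $0$ or $s+1=2$ because $\CGG_i$ is $(r_i,t_i,1)$-exact, and columns of $H_i$ have (even) weight $t_i$ and (even) pairwise overlap $0$ or $2$ because $\CGG_i^{T}$ is $(t_i-1,r_i+1,1)$-exact. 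Hence each $H_i$ determines a $2$-periodic ($\Z/2$-graded) chain complex $\mathbf C_i=(\cdots\xrightarrow{H_i}\BFF_2^{m_i}\xrightarrow{H_i^{T}}\BFF_2^{n_i}\xrightarrow{H_i}\BFF_2^{m_i}\to\cdots)$, and the block form shows that the $2$-periodic complex attached to $CSS(H,H)$ is the total complex of $\mathbf C_1\ot\mathbf C_2$, with the $n$ physical qudits occupying the degree whose two summands are $\BFF_2^{n_1}\ot\BFF_2^{n_2}=\BFF_2^{V_1\times V_2}$ and $\BFF_2^{m_1}\ot\BFF_2^{m_2}=\BFF_2^{W_1\times W_2}$. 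The Künneth formula over $\BFF_2$ then gives
\[
k=\dim H_0(\mathbf C_1)\,\dim H_0(\mathbf C_2)+\dim H_1(\mathbf C_1)\,\dim H_1(\mathbf C_2),
\]
and a direct count gives $\dim H_0(\mathbf C_i)=\dim(\ker H_i/\operatorname{im}H_i^{T})=n_i-2\rank H_i=2k_i-n_i$ and $\dim H_1(\mathbf C_i)=\dim(\ker H_i^{T}/\operatorname{im}H_i)=m_i-2\rank H_i=2k_i^{T}-n_i^{T}$; substituting yields the stated lower bound on $k$. (One can also avoid the homological language and instead exhibit enough explicit elements of $\ker H$ built from codewords of $\CCC_i$ and $\CCC_i^{T}$, but the Künneth route is cleanest.)

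For Claim (3), I would run the Tillich--Zemor weight-reduction argument adapted to $CSS(H,H)$. Since $H_X=H_Z=H$, it suffices to show that every $v\in\ker H\setminus\operatorname{im}H^{T}$ has weight at least $\delta:=\min\{d_1,d_2,d_1^{T},d_2^{T}\}$. Writing $v=(X,Y)$ with $X\in\BFF_2^{n_1\times n_2}$, $Y\in\BFF_2^{m_1\times m_2}$, the equation $Hv=0$ reads $H_1X+YH_2=0$ and $XH_2^{T}+H_1^{T}Y=0$. If $|X|+|Y|<\delta$, only few rows/columns are active, and one cleans one of the two blocks to zero by adding a suitable element of $\operatorname{im}H^{T}$ (a combination of the generating rows $H_1^{T}\ot I$, $I\ot H_2^{T}$, $H_1\ot I$, $I\ot H_2$); the surviving block, read by rows or by columns, is then a nonzero codeword of one of $\CCC_1,\CCC_2,\CCC_1^{T},\CCC_2^{T}$ of weight $\le|X|+|Y|<\delta$, a contradiction, so $d\ge\delta$. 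I expect Claim (3) to be the main obstacle: unlike a standard hypergraph product, here the single matrix $H$ plays the role of both $H_X$ and $H_Z$, so the cleaning step must carefully track how adding a row of $H$ redistributes weight between the two blocks, and the several positional cases (support concentrated over a $V_1\times V_2$ region, over a $W_1\times W_2$ region, straddling a common row or column, and so on) have to be handled separately. Claim (2), by contrast, is routine once the four orthogonality identities and the identification of the $CSS(H,H)$ differential with that of $\mathbf C_1\ot\mathbf C_2$ have been checked.
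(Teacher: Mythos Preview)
Your treatments of (1) and (4) match the paper. For (2), your K\"unneth argument is correct and is genuinely different from the paper's proof. The paper works by hand: it checks $HH^{T}=0$ (your $H_iH_i^{T}=0$ and $H_i^{T}H_i=0$ are exactly Lemma~\ref{lem:241221-1} applied to $\CGG_i$ and $\CGG_i^{T}$), then chooses complements $U_i$ with $\row(H_i)\oplus U_i=\ker(H_i)$ and $U_i^{T}$ with $\row(H_i^{T})\oplus U_i^{T}=\ker(H_i^{T})$, embeds $U_1\otimes U_2$ and $U_1^{T}\otimes U_2^{T}$ into $\ker H$ via the obvious inclusions into the two column blocks, and finally checks that these images meet $\row H$ trivially (Lemmas~\ref{lem:241221-3} and~\ref{lem:241221-4}). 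Your identification of $H$ with the differential of the total complex of the $\Z/2$-graded tensor $\mathbf C_1\otimes\mathbf C_2$, together with K\"unneth over $\BFF_2$, packages all of this in one line and in fact yields the \emph{equality} $k=(2k_1-n_1)(2k_2-n_2)+(2k_1^{T}-n_1^{T})(2k_2^{T}-n_2^{T})$ rather than merely $\ge$; the paper's explicit subspaces are precisely representatives of the K\"unneth summands. What the paper's route buys is that it stays entirely within elementary linear algebra and never invokes homological machinery; what yours buys is brevity and the sharper conclusion.

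For (3) you are working harder than necessary. The paper does not redo the Tillich--Z\'emor weight-reduction at all; it simply records (Lemma~\ref{250101lem2}) that the bound is a corollary of the hypergraph-product distance bound in~\cite{TZ13}. The reason is the one already noted in the Remark following Theorem~\ref{250101thm2}: writing $H=\bigl(\begin{smallmatrix}H_X'\\ H_Z'\end{smallmatrix}\bigr)$ with $H_X',H_Z'$ the two row blocks, we have $\ker H\subseteq\ker H_{Z}'$ and $\row H\supseteq\row H_{X}'$, so every nontrivial $X$-logical of $Q(\CGG)=CSS(H,H)$ is a nontrivial $X$-logical of the ordinary hypergraph product $CSS(H_X',H_Z')$, and symmetrically for $Z$. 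Hence $d\ge d_{\mathrm{HPC}}\ge\min\{d_1,d_2,d_1^{T},d_2^{T}\}$ by Theorem~15 of~\cite{TZ13}. Your direct cleaning argument could be made to work, but the ``several positional cases'' you anticipate are exactly what the reduction to the HPC avoids.
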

\begin{proof}
Part (4) comes directly from Theorem \ref{250101thm2}.
The proof of (1)-(3) involves several technical mathematical lemmas; therefore, we place the complete proof in Appendix~\ref{appendix1} for detailed reference.
\end{proof}

\begin{Rem}
Here we remark that $2k_1-n_1$, $2k_1^T-n_1^T$, $2k_2-n_2$, $2k_2^T-n_2^T$ are all non-negative.
For instance for $2k_1-n_1$, 
from the exact conditions of $\cG_1 $, the parity check matrices $H_1$ of $\CCC_1$ is self-orthogonal, i.e., $H_1H_1^T=0$. 
Hence $\mathrm{rank}(H_1) \le n_1/2$,
and $k_1= n_1- \mathrm{rank}(H_1) \ge n_1/2$.
\end{Rem}
\begin{Rem}
Here we remark that Theorem \ref{250112thm1} is different from  Proposition 14 and Theorem 15  in \cite{TZ13},
which compute the dimension $k_Q$ and minimal distance $d_Q$ of a HPC.
The quantum code $Q(\CGG)$ considered in Theorem \ref{250112thm1} is not a HPC.
The dimension of $Q(\CGG)$  is $k\ge (2k_1-n_1)(2k_1^T-n_1^T)+(2k_2-n_2)(2k_2^T-n_2^T)$,
where $2k_1-n_1$, $2k_1^T-n_1^T$, $ 2k_2-n_2$ and $2k_2^T-n_2^T$ are respectively the numbers of logical qubits in four quantum codes that related to the given Tanner graphs,
while in \cite{TZ13},
the dimension of the HPC  is $k\ge k_1'k_1'^T+ k_2'k_2'^T$,
where $k_1'$, $k_1'^T$, $ k_2'$, and $k_2'^T$ are dimensions of four classic codes that relate to the given Tanner graphs.
\end{Rem}

In the following example, we construct binary exact $(r,t,1)$-LRC/qLRC from the bipartite graph product of two exact bipartite graphs,
where $n$ can be arbitrarily large.

\begin{Examp}
Consider the Tanner graph of the parity-check matrix $H$, denoted as $\CGG_1$ in Example \ref{241231exm1}.
Recall that the corresponding classic code has parameters $n=7$, $k=4$, $d=3$, $r=3$, $t=4$ and $s=1$,
and $\mathrm{rank}(H)=3$.
We inductively define $\CGG_m := \CGG_{m-1} \boxtimes \CGG_1$ for $m\ge 2$.
According to Theorem \ref{250101thm2},  
$\CGG_m$ forms $(r_m, r_m+1,1)$-exact bipartite graphs,
where $r_m = 2^{m+1}-1$.

Let $Q(\CGG_m)$ denote the quantum CSS code $ \mathrm{CSS}(\CCC_X,\CCC_Z)$  for $\CCC_X=\CCC_Z = \ker H_m $, where $H_m$ is the parity-check matrix whose Tanner graph is $\CGG_m$.
Then $\CGG_m$ is an exact $(2^{m+1}-1, 2^{m+1},1)$-qLRC.
Moreover,
the parameters of $\CGG_m$ are
$n = 7^{2^{m-1}}2^{2^{m-1}-1},\quad k \ge  2^{2^{m-1}-1}, \quad d\ge 3.$

\end{Examp}

\section{Implication on classical LRCs}

In this section, we will apply the method developed for quantum locally recoverable codes (qLRCs) to derive improved bounds for classical locally recoverable codes (LRCs). Specifically, we will provide an upper bound on the code dimension 
$k$ as a function of the parameters 
$n,r,t$, and the code distance $d$. To accomplish this, we introduce the following lemma.

First, let us denote
\[p_1(r,t) =  1-\frac{1}{\prod_{l=1}^t (1+\frac1{lr})} .\]

\begin{lem}\label{241028lem3}
Let $n , t ,r,d$ be positive integers with $n\ge \max\{ r+1,d-1\}$.
Suppose  every $j\in [n]$ is assigned with $t$ subsets $\Gamma_1(j),..., \Gamma_t(j) \subseteq [n]$, such that $j\in \Gamma_\alpha(j)$, $|\Gamma_\alpha(j)|\le r+1$ and $\Gamma_{\alpha }(j)\cap \Gamma_{\beta}(j) = \{j\} $  for different $\alpha, \beta$.
	Then we can find a sequence of numbers $j_1,...,j_M\in [n]$ with 
	\[M =(d-1) + N_1(n,r,d, \left\lceil n p_1(r,t) \right\rceil) \]
	such that,
	for every $d\le k\le M$, there is at least one $\alpha$ satisfying
	\begin{align}\label{241023eq1}
		\Gamma_\alpha(j_k) \cap \{j_{1},j_{2},...,j_{k-1}\} = \emptyset.
	\end{align}
\end{lem}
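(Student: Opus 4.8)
The plan is to mimic the probabilistic argument from the proof of Theorem~\ref{241208thm1} (and its refinement in Lemma~\ref{241228lem3}), now taking advantage of the hypothesis that the recovery sets are pairwise ``almost disjoint'' in the sense $\Gamma_\alpha(j)\cap\Gamma_\beta(j)=\{j\}$ for $\alpha\neq\beta$. Under this hypothesis the sets $\Gamma_1(j),\dots,\Gamma_t(j)$ overlap only at $j$, so $\bigl|\bigcup_{l=1}^{L}\Gamma_l(j)\bigr| \le 1 + \sum_{l=1}^{L}(|\Gamma_l(j)|-1) \le 1+Lr$, while at the other extreme the events I will consider become much more favorable than in the general $(r,t,s)$ case. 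Concretely, fix $j\in[n]$, choose a uniformly random total ordering $\pi$ of $[n]$, and for each $\alpha$ let $E_\alpha$ be the event that $\pi(i)>\pi(j)$ for every $i\in\Gamma_\alpha(j)\setminus\{j\}$. The key computation is $\mathrm{Pr}\bigl(\bigcup_{\alpha=1}^{t}E_\alpha\bigr)$: here I would \emph{not} use inclusion–exclusion but instead compute the complement directly, exploiting that the sets $\Gamma_\alpha(j)\setminus\{j\}$ are pairwise disjoint. Conditioning on the rank of $j$ among the union, or more cleanly building the ordering greedily, one gets that the probability that $j$ is \emph{not} a local-minimum witness for any $\alpha$ telescopes into $\prod_{l=1}^{t}\bigl(1-\tfrac{1}{1+lr}\bigr)=\prod_{l=1}^{t}\tfrac{lr}{1+lr}$, hence $\mathrm{Pr}\bigl(\bigcup_\alpha E_\alpha\bigr) = 1-\prod_{l=1}^{t}\tfrac{lr}{1+lr} = 1 - \dfrac{1}{\prod_{l=1}^{t}(1+\frac1{lr})} = p_1(r,t)$.

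The rest follows the template of Lemma~\ref{241228lem3}. First, define $U\subseteq[n]$ to consist of those $j$ for which some $E_\alpha$ holds; by linearity of expectation $\mathbb{E}|U| \ge n\,p_1(r,t)$, so there is an ordering $\pi$ with $|U|\ge \lceil n\,p_1(r,t)\rceil =: M'$. Ordering the elements of $U$ along $\pi$ yields $i_1,\dots,i_{M'}$ with the property that for each $k$ there is $\alpha_k$ with $\Gamma_{\alpha_k}(i_k)\cap\{i_1,\dots,i_{k-1}\}=\emptyset$. Second, I apply Lemma~\ref{241227lem2} to the sets $A_k\supseteq\Gamma_{\alpha_k}(i_k)$ padded to size exactly $r+1$ inside $[n]$, with $N:=N_1(n,r,d,M')$; since $M'\ge \lceil n p_1(r,t)\rceil \ge n/(r+1)$ (because $p_1(r,t)\ge \mathrm{Pr}(E_1)=\tfrac1{r+1}$), the quantity $N_1$ is well defined and by its definition there is a subset $S\subseteq[M']$ with $|S|=N$ and $\bigl|\bigcup_{k\in S}A_k\bigr|\le n-(d-1)$. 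Third, pick $j_1,\dots,j_{d-1}$ to be any $(d-1)$ elements of $[n]\setminus\bigcup_{k\in S}A_k$, and let $j_d,\dots,j_M$ enumerate $\{i_k:k\in S\}$ in the $\pi$-order, where $M=(d-1)+N$. Then for $d\le k\le M$ the witness $\alpha$ for $i_k$ still works against $\{j_1,\dots,j_{k-1}\}$: the new elements $j_1,\dots,j_{d-1}$ avoid every $A_k$ hence every $\Gamma_{\alpha_k}(i_k)$, and the earlier $i$'s are handled by the construction of $U$. This is exactly the claimed sequence with $M=(d-1)+N_1(n,r,d,\lceil n p_1(r,t)\rceil)$.

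The main obstacle is the probability computation $\mathrm{Pr}\bigl(\bigcup_\alpha E_\alpha\bigr)=p_1(r,t)$. Unlike Lemma~\ref{241228lem6}, where the union-size is a fixed number and inclusion–exclusion gives a clean closed form, here one must argue the product formula for the complement. The cleanest route is to reveal the ordering incrementally: think of assigning i.i.d.\ uniform labels in $[0,1]$ to all $n$ points and condition on the label of $j$; then ``$E_\alpha$ fails'' means $\Gamma_\alpha(j)\setminus\{j\}$ contains a point with smaller label than $j$, and because the sets $\Gamma_\alpha(j)\setminus\{j\}$ are \emph{disjoint} of sizes $|\Gamma_\alpha(j)|-1\le r$, a short computation (e.g.\ integrating $\prod_\alpha\bigl(1-(1-u)^{|\Gamma_\alpha(j)|-1}\bigr)$-type expressions, or more elementarily counting orderings) yields the telescoping product $\prod_{l=1}^{t}\tfrac{lr}{1+lr}$ as an upper bound for $\mathrm{Pr}(\bigcap_\alpha E_\alpha^c)$, with equality when all $|\Gamma_\alpha(j)|=r+1$; since we only need the lower bound $\mathrm{Pr}(\bigcup_\alpha E_\alpha)\ge p_1(r,t)$, the inequality direction is exactly what the size bounds $|\Gamma_\alpha(j)|\le r+1$ provide. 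Everything after this is a routine transcription of the arguments already in the paper, using Lemma~\ref{241227lem1}, Lemma~\ref{241227lem2}, and the definition of $N_1$.
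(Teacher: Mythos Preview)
Your proposal is correct and follows the same two-stage architecture as the paper's proof: first produce a sequence $i_1,\dots,i_{M'}$ of length $M'=\lceil n\,p_1(r,t)\rceil$ with the nested-avoidance property, then apply Lemma~\ref{241227lem2} with $N=N_1(n,r,d,M')$ to carve out room for the first $d-1$ ``free'' indices. The only difference is that the paper outsources the first stage to Lemma~1 of \cite{TB16}, whereas you re-derive the bound $\mathrm{Pr}\bigl(\bigcup_\alpha E_\alpha\bigr)\ge p_1(r,t)$ from scratch via i.i.d.\ uniform labels. Your derivation is sound: conditional on the label $u$ of $j$, the events $E_\alpha^c$ are independent (disjoint supports), giving $\mathrm{Pr}\bigl(\bigcap_\alpha E_\alpha^c\bigr)=\int_0^1\prod_\alpha\bigl(1-(1-u)^{|\Gamma_\alpha(j)|-1}\bigr)\,du\le\int_0^1(1-v^r)^t\,dv$, and the latter integral equals $\prod_{l=1}^t\frac{lr}{lr+1}$ by the recursion $I_t=\frac{tr}{tr+1}I_{t-1}$ obtained from one integration by parts---this is the ``telescoping'' you allude to. The monotonicity in $|\Gamma_\alpha(j)|\le r+1$ indeed goes the right way, as you note. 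After that, your steps 2--3 match the paper verbatim.
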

\begin{proof}
By Lemma 1 in \cite{TB16},
we can find a sequence of numbers
$i_1,...,i_{M'}\in [n]$ with 
$M' = \left\lceil n p_1(r,t) \right\rceil$
	such that,
	for every $1\le k\le M'$, there is an $\alpha_k$ satisfying
	\begin{align}\label{241023eq1}
		\Gamma_{\alpha_k}(i_k) \cap \{i_{1},i_{2},...,i_{k-1}\} = \emptyset.
	\end{align}
Now we apply Lemma \ref{241227lem2} on the sets $\Gamma_{\alpha_k}(i_k)$, $k=1,...,M'$, with $N := N_1(n,r,d,M')$.
Since $\Gamma_{\alpha_k}(i_k)$ may have size smaller than $r+1$, we denote $A_k$ to be a subset of $[n]$ such that $\Gamma_{\alpha_k}(i_k)\subseteq A_k $ and $|A_k|=r+1$ for each $k$. 
Then by Lemma \ref{241227lem2},
we can find a subset $S\subseteq [M']$ with size $|S|=N$ such that
\[\left| \bigcup_{m\in S} A_m\right| \le N(r+1) - \frac{N(N-1)}{M'(M'-1)}(M'(r+1)-n) \le n-(d-1),\]
where the last inequality is from the definition of $N_1(n,r,d,M')$.
Now let $j_1,...,j_{d-1}$ be $(d-1)$ elements in $[n]\setminus \bigcup_{m\in S} A_m$,
then for every $k\in S$, 
$$\Gamma_{\alpha_k}(i_k) \cap \{j_{1},j_{2},...,j_{d-1}\} = \emptyset.$$
Moreover, let $ j_d,j_{d+1},..., j_M$ be all the elements in $S$ (ordered as in the sequence $i_1,...,i_{M'}$.)
Then for every $d\le k\le M$, there is at least one $\alpha$ satisfying
	\begin{align*} 
		\Gamma_\alpha(j_k) \cap \{j_{1},j_{2},...,j_{k-1}\} = \emptyset.
	\end{align*}
The proof is completed.
\end{proof}

In the classic case, we have the following result.
\begin{thm}\label{250101thm1}
Suppose $n,k,d,r,t\ge 1$ are given numbers with $n\ge \max\{r+1,d-1\}$.
If
$\CCC$ is a classic $(r,t)$-locally recoverable $(n,k,d)$-code,
then
\[k\le   n-(d-1) - N_1(n,r,d, \left\lceil n p_1(r,t) \right\rceil). \]
\end{thm}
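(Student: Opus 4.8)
The plan is to combine the two results the excerpt has already set up: the combinatorial covering argument of Lemma~\ref{241028lem3}, which produces a long ``recoverable sequence'' of coordinates, and the classical analogue of Lemma~\ref{241028lem2} for codes with distance. So first I would recall the classical fact corresponding to Lemma~\ref{241028lem2}: if $\CCC$ is an $(n,k,d)$-code, then erasures in any set $E\subseteq[n]$ with $|E|\le d-1$ can be corrected (i.e.\ the remaining coordinates determine the codeword), and more generally if a set $A$ can be ``corrected'' in the sense that each coordinate of $A$ is determined by the others, then $k\le n-|A|$. The key point is that the bound $k\le n-(d-1)-N_1(n,r,d,\lceil np_1(r,t)\rceil)$ has exactly the shape $k\le n-|A|$ for a single set $A$ of size $(d-1)+N_1(n,r,d,\lceil np_1(r,t)\rceil)$, which is precisely the length $M$ of the sequence produced by Lemma~\ref{241028lem3}.

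The main step is therefore to show that the sequence $j_1,\dots,j_M$ guaranteed by Lemma~\ref{241028lem3} gives a set $A=\{j_1,\dots,j_M\}$ whose erasure is correctable. I would argue by reverse induction along the sequence. Peeling off $j_M$: since there is an $\alpha_M$ with $\Gamma_{\alpha_M}(j_M)\cap\{j_1,\dots,j_{M-1}\}=\emptyset$, the recovery set $\Gamma_{\alpha_M}(j_M)$ meets $A$ only in $j_M$; by the local recoverability property (condition (b) in the definition of a classical $(r,t,s)$-LRC, here with the $\Gamma_\alpha(j)\cap\Gamma_\beta(j)=\{j\}$ structure of Lemma~\ref{241028lem3}), the value $c(j_M)$ is a function of $c|_{\Gamma_{\alpha_M}(j_M)\setminus\{j_M\}}$, and all those coordinates lie outside $A$. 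So recovering $j_M$ reduces the erased set to $\{j_1,\dots,j_{M-1}\}$, and we repeat down to $k=d$. Finally the residual erased set $\{j_1,\dots,j_{d-1}\}$ has size $d-1$ and is corrected directly by the minimum-distance property. Hence every codeword is determined by its restriction to $[n]\setminus A$, so the projection map $\CCC\to\BFF_q^{[n]\setminus A}$ is injective, giving $q^k=|\CCC|\le q^{n-|A|}$, i.e.\ $k\le n-|A|=n-(d-1)-N_1(n,r,d,\lceil np_1(r,t)\rceil)$.

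One technical wrinkle I would handle carefully: Lemma~\ref{241028lem3} only asserts the disjointness condition \eqref{241023eq1} for $d\le k\le M$, not for all $k$, and the first $d-1$ coordinates $j_1,\dots,j_{d-1}$ are chosen precisely so they avoid $\bigcup_{m\in S}A_m$. This is exactly what makes the peeling work down to index $d$ and then hands the last $d-1$ coordinates to the distance argument; I should make sure the two phases of the erasure-correction argument (local-recovery peeling, then global distance) are stitched together in the right order, and that the hypothesis $n\ge\max\{r+1,d-1\}$ is invoked so that Lemma~\ref{241028lem3} applies. The hard part is not any single step but getting the bookkeeping of the reverse induction precisely right — in particular that at stage $k$ the already-recovered coordinates $j_{k+1},\dots,j_M$ together with $[n]\setminus A$ cover $\Gamma_{\alpha_k}(j_k)\setminus\{j_k\}$, which follows from \eqref{241023eq1} but deserves to be stated explicitly. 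With that in hand the theorem is immediate.

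\begin{proof}
Since $n\ge\max\{r+1,d-1\}$, Lemma~\ref{241028lem3} applies (note an $(r,t)$-LRC in particular has recovery sets with $\Gamma_\alpha(j)\cap\Gamma_\beta(j)=\{j\}$ for $\alpha\ne\beta$, or can be reduced to this case), and yields a sequence $j_1,\dots,j_M\in[n]$ with
\[M=(d-1)+N_1\big(n,r,d,\lceil np_1(r,t)\rceil\big)\]
such that for every $d\le k\le M$ there is an $\alpha_k$ with $\Gamma_{\alpha_k}(j_k)\cap\{j_1,\dots,j_{k-1}\}=\emptyset$. Set $A=\{j_1,\dots,j_M\}$ and $W=[n]\setminus A$; we claim the restriction map $\CCC\to\BFF_q^{W}$, $c\mapsto c|_W$, is injective.

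Let $c_1,c_2\in\CCC$ with $c_1|_W=c_2|_W$. We show $c_1(j_k)=c_2(j_k)$ for all $k$, by downward induction on $k$ from $M$ to $d$, and then handle $k\le d-1$. For $d\le k\le M$, suppose $c_1(j_i)=c_2(j_i)$ for all $k<i\le M$ (the base case $k=M$ being vacuous). Every coordinate in $\Gamma_{\alpha_k}(j_k)\setminus\{j_k\}$ lies either in $W$ or in $\{j_1,\dots,j_{k-1}\}$; but by choice of $\alpha_k$ the latter intersection is empty, so $\Gamma_{\alpha_k}(j_k)\setminus\{j_k\}\subseteq W\cup\{j_{k+1},\dots,j_M\}$. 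On $W$ we have $c_1=c_2$, and on $\{j_{k+1},\dots,j_M\}$ we have $c_1=c_2$ by the inductive hypothesis; hence $c_1|_{\Gamma_{\alpha_k}(j_k)\setminus\{j_k\}}=c_2|_{\Gamma_{\alpha_k}(j_k)\setminus\{j_k\}}$. By the local recoverability property (condition (b)), if $c_1(j_k)\ne c_2(j_k)$ then these restrictions would differ; therefore $c_1(j_k)=c_2(j_k)$. This completes the induction, so $c_1$ and $c_2$ agree on $W\cup\{j_d,\dots,j_M\}$, i.e.\ they can differ only on $\{j_1,\dots,j_{d-1}\}$, a set of at most $d-1$ coordinates. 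Since $\CCC$ has minimum distance $d$, two codewords agreeing outside a set of size $d-1$ must coincide, so $c_1=c_2$.

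Thus $c\mapsto c|_W$ is injective, giving $q^k=|\CCC|\le q^{|W|}=q^{n-|A|}$, hence
\[k\le n-|A|=n-M=n-(d-1)-N_1\big(n,r,d,\lceil np_1(r,t)\rceil\big),\]
as claimed.
\end{proof}
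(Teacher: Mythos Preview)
Your proof is correct and follows essentially the same route as the paper's: invoke Lemma~\ref{241028lem3} to build the sequence $j_1,\dots,j_M$, argue that the restriction to $[n]\setminus\{j_1,\dots,j_M\}$ is injective (peeling via local recovery down to index $d$, then using minimum distance for the last $d-1$ coordinates), and conclude $k\le n-M$. One small wording slip: the sentence ``every coordinate in $\Gamma_{\alpha_k}(j_k)\setminus\{j_k\}$ lies either in $W$ or in $\{j_1,\dots,j_{k-1}\}$'' should read ``\dots or in $A\setminus\{j_k\}$''; your subsequent conclusion $\Gamma_{\alpha_k}(j_k)\setminus\{j_k\}\subseteq W\cup\{j_{k+1},\dots,j_M\}$ is nonetheless correct.
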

\begin{proof}
Since $\CCC$ is a  classic $(r,t)$-locally recoverable code and $n\ge \max\{r+1,d-1\}$,
from Lemma \ref{241028lem3},
we can find a sequences of numbers, $j_1,...,j_{M}\in [n]$, where  
\[M =(d-1) + N_1(n,r,d, \left\lceil n p_1(r,t) \right\rceil), \]
such that,
for every $d\le m\le M$ there is an $\alpha_m$ satisfying
\begin{align*} 
\Gamma_{\alpha_m}(j_m) \cap \{j_{1},j_{2},...,j_{m-1}\} = \emptyset.
\end{align*}
Hence any error on the bits $j_1,...,j_{M}$ can be recovered,
that is,
when two code words $\vec c_1$ and $\vec c_2$ are the same on $[n]\setminus \{ j_1,...,j_{M}\} $,
then $\vec c_1= \vec c_2$.
Denote $A = [n]\setminus \{ j_1,...,j_{M}\}$.
From the above analysis,
for every $\vec v\in \Z_q^A $  there is at most one code word $\vec c\in \CCC$ such that $\vec c|_A = \vec v$.
Hence $|\CCC| \le |\Z_q^A |$,
which implies that
\[k = \log_q|\CCC| \le |A| = n-(d-1) - N_1(n,r,d, \left\lceil n p_1(r,t) \right\rceil).\]
\end{proof}

Let us compare our bound of classic LRCs from Theorem \ref{250101thm1} with some known bounds.
Let $\mathcal{C}$ be a classic $(n, k, d, r)$-locally recoverable code.
A generalized Singleton bound was proposed in \cite{GHSY12}:
\begin{equation}\label{eqn:250101-1}
d \leq n-k-\left\lceil\frac{k}{r}\right\rceil+2.
\end{equation}
Furthermore, an improved bound for classic $(n, k, d, r)$-LRC with availability $t$ is given in \cite{BTV17}:
\begin{equation}\label{eqn:250101-2}
d \leq n-\sum_{i=0}^t\left\lfloor\frac{k-1}{r^i}\right\rfloor .
\end{equation}

In Figure \ref{fig:comp-thm10}, we compare the upper bounds of $k$ among Theorem \ref{250101thm1}, generalized Singleton bound (i.e., Equation~\eqref{eqn:250101-1}), and the bound in \cite{BTV17} (i.e., Equation~\eqref{eqn:250101-2}).
Based on the figures, it is direct to see that Theorem \ref{250101thm1}
provides a tighter upper bound than the previous results. 

\begin{figure}[h]
    \centering
        \begin{minipage}{0.48\textwidth}
        \centering
        \includegraphics[width=\textwidth]{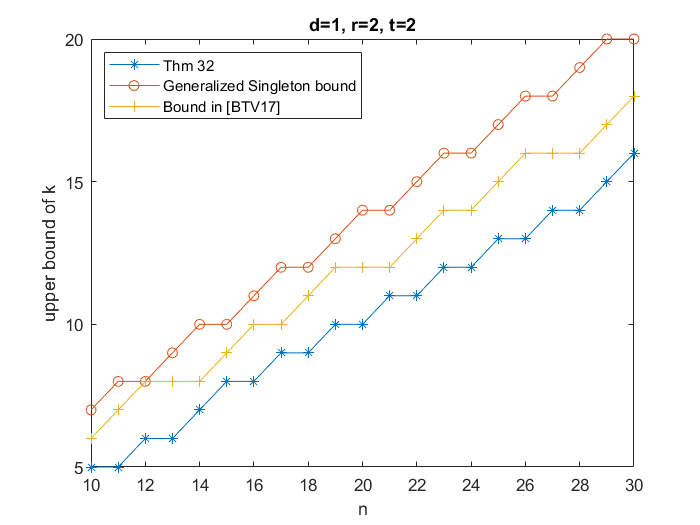}
    \end{minipage}\hfill
    \begin{minipage}{0.48\textwidth}
        \centering
        \includegraphics[width=\textwidth]{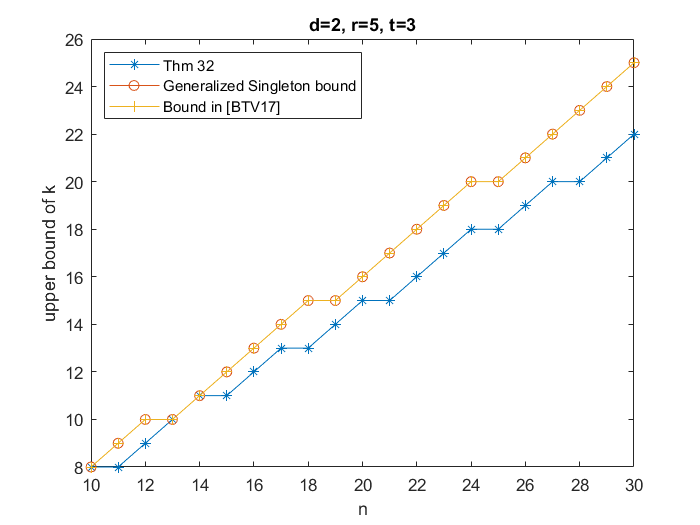}
    \end{minipage}
    
    \vskip\baselineskip

    \begin{minipage}{0.48\textwidth}
        \centering
        \includegraphics[width=\textwidth]{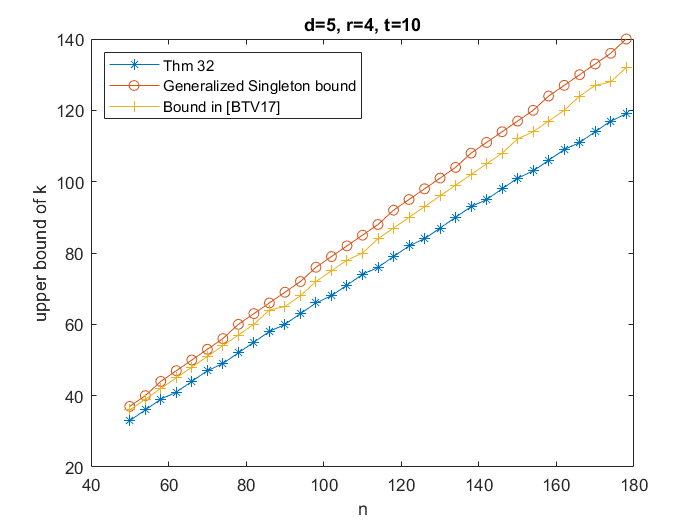}
    \end{minipage}\hfill
    \begin{minipage}{0.48\textwidth}
        \centering
        \includegraphics[width=\textwidth]{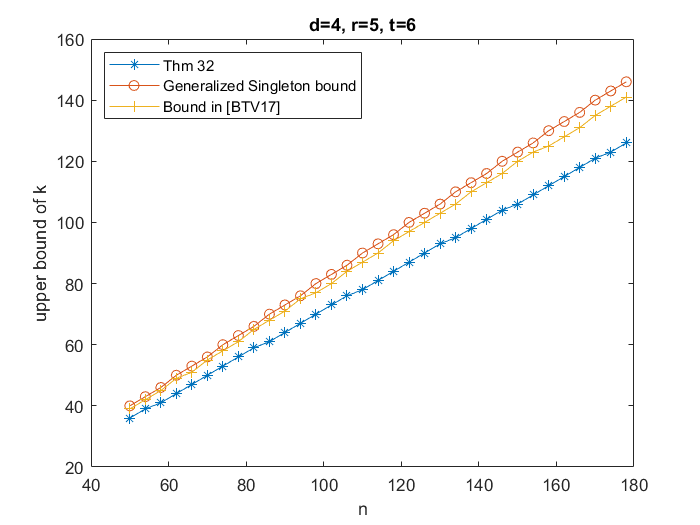}
    \end{minipage}
    \caption{Comparison of the upper bounds of $k$ from Theorem \ref{250101thm1}, Equation \eqref{eqn:250101-1}, and Equation \eqref{eqn:250101-2}.The orange line represents   upper bounds of $k$ computed using Equation \eqref{eqn:250101-1} as $n$ varies;
the yellow line represents the upper bounds of $k$ computed using Equation \eqref{eqn:250101-2} as $n$ varies;
and the blue line represents  the upper bounds of $k$ given by Theorem \ref{250101thm1}.}
    \label{fig:comp-thm10}
\end{figure}

\section{Conclusion}
In this work, we have investigated qLRCs with intersecting recovery sets, a class of codes that offers enhanced flexibility in local recovery procedures. We derived new Singleton-like upper bounds on the parameters of these codes, establishing fundamental trade-offs between code length, dimension, and local recovery capabilities. Furthermore, we presented a construction method for qLRCs with intersecting recovery sets by introducing a variation of the hypergraph product, providing a concrete approach for realizing these codes.
In addition, we use our method to improve the results for the classical LRCs.

Several compelling avenues for future research emerge from this work.
A primary direction involves exploring the optimality of the derived bounds. Identifying quantum codes that saturate these bounds would not only demonstrate their tightness but also provide valuable insights into the limits of local recoverability in quantum information processing. This will be helpful in 
the development of 
new code design techniques and analytical tools.

Additionally, investigating the connection between qLRCs and physical systems is a promising direction. Specifically, exploring the ground states of local Hamiltonians that exhibit local recoverability properties could lead to new physical implementations of these codes.

Finally, 
as the quantum codes are usually related to the classification of quantum phases of matters, 
it would be valuable to explore the role of local recoverability within the context of quantum phases. 
This could potentially uncover new topological phases or provide novel characterizations of existing ones, enhancing our understanding of the role of locality in both QECC and quantum phases.

\section{Appendix}\label{appendix1}

Let us focus on calculating the lower bound of the number of logical qubits $k$ of the CSS code constructed by $\cG_1\boxtimes\cG_2$, where $\cG_1$ and $\cG_2$ are two given exact Tanner graphs.

\begin{lem}\label{lem:241221-1}
Given an $(r, t, 1)-$exact bipartite graph $\cG$ with odd $r$, let 
$H$ be the check matrix whose Tanner graph is $\cG$. Then, we have
$$\row(H)\subseteq \ker(H).$$
\end{lem}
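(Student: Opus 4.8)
The statement $\row(H)\subseteq\ker(H)$ is equivalent to $HH^T=0$, i.e., any two rows of $H$ are orthogonal over $\BFF_2$. So the plan is to directly compute the mod-$2$ inner product of two rows using the combinatorial data of the $(r,t,1)$-exact condition. First I would handle the diagonal entries of $HH^T$: the $i$-th diagonal entry is the mod-$2$ weight of row $i$, which is the degree of the corresponding check node, namely $r+1$; since $r$ is odd, $r+1$ is even, so every diagonal entry vanishes. Then for an off-diagonal entry, indexed by two distinct check nodes $c,c'$, the entry is the number of common neighbors of $c$ and $c'$ taken mod $2$. By the $(r,t,1)$-exact condition (Definition \ref{250101def1}, clause 2), two distinct check nodes either share no neighbor or share exactly $s+1$ common neighbors, and here $s=1$, so this count is either $0$ or $2$ — in both cases even, hence zero mod $2$.

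Putting these together gives $HH^T=0$ over $\BFF_2$, which is exactly the assertion that every row of $H$ is orthogonal to every row of $H$ (including itself), i.e., $\row(H)\subseteq\row(H)^\perp=\ker(H)$, using that $\ker(H)=\row(H)^\perp$ for a matrix over a field.

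The argument is essentially a parity count, so there is no serious obstacle; the only point requiring a little care is making sure the two hypotheses $r$ odd and $s=1$ are each used in the right place — $r$ odd kills the diagonal (self-inner-products of rows), and $s=1$ (so $s+1=2$) kills the off-diagonal (pairwise inner products). One could alternatively phrase it via the Tanner graph language already set up in the excerpt: the $j$-th column of $H$ has support equal to the set of check nodes adjacent to bit node $j$, and $(HH^T)_{c,c'}=|\{j : j\sim c \text{ and } j\sim c'\}| \bmod 2$, but the matrix computation is the cleanest route. I would close by noting this lemma is exactly what makes $CSS(H,H)$ a valid CSS code, tying it back to Proposition \ref{250111prop2} and the even-$r+1$, even-$s+1$ hypothesis used there (here $s+1=2$ is automatically even, and $r+1$ even is the same as $r$ odd).
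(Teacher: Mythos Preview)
Your proposal is correct and follows essentially the same approach as the paper: compute $HH^T$ over $\BFF_2$ directly, using that each row has weight $r+1$ (even, since $r$ is odd) for the diagonal and that distinct rows share either $0$ or $s+1=2$ common support positions for the off-diagonal. Your write-up is in fact slightly more careful than the paper's, which omits the ``either $0$ or $s+1$'' alternative and writes the self-inner-product as $r$ rather than $r+1$, though the conclusion is unaffected.
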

\begin{proof}
Let $H_i$ be the $i$-th row in $H$. For any $i,j$, $i\neq j$, $\ep{H_i, H_j}=s+1=0 \mod 2$. For any $i$, $\ep{H_i, H_i} = r = 0 \mod 2$. Thus, $\row(H)\subseteq \ker(H)$.
\end{proof}

In the following part of this section, we will focus on the product graph constructed from exact bipartite graphs, where $s=1$ and $r$ is odd.

\begin{prop}\label{prop:lb-rank-H}
Suppose $\cG_1=(V_1, W_1, E_1)$ and $\cG_2=(V_2,W_2,E_2)$ are $(r_i, t_i, 1)-$exact bipartite graphs, $\cG_1^T$ and $\cG_2^T$ are $(t_i-1, r_i+1, 1)-$exact bipartite graphs with odd $r_i$ and even $t_i$ for $i=1,2$.
Let $H_1$, $H_2$ and $H$ be the parity-check matrices whose Tanner graphs are $\cG_1$, $\CGG_2$ and $\CGG:=\cG_1\boxtimes\cG_2$ respectively.
Then we have 
\begin{align*}
    &|V|-2 \rank(H)\\
    \geq&\left(|V_1|-2\rank(H_1)\right)\left(|W_1|-2\rank(H_1^T)\right)+
\left(|V_2|-2\rank(H_2)\right)\left(|W_2|-2\rank(H_2^T)\right).
\end{align*}

\end{prop}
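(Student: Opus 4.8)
My plan begins by rewriting the target quantity. Because $s=1$ and $r_1$ is odd, Lemma~\ref{lem:241221-1} applied to $\cG_1$ gives $\row(H_1)\subseteq\ker(H_1)$, i.e.\ $H_1H_1^{T}=0$; because $t_1$ is even, $t_1-1$ is odd, so the same lemma applied to the $(t_1-1,r_1+1,1)$-exact graph $\cG_1^{T}$ gives $H_1^{T}H_1=0$, and similarly $H_2H_2^{T}=H_2^{T}H_2=0$. From Definition~\ref{def:graph-prod} and the Tanner-graph correspondence (cf.\ the remark following Theorem~\ref{250101thm2}), the parity-check matrix of $\cG=\cG_1\boxtimes\cG_2$ has the Kronecker block form
\[
H=\begin{pmatrix} H_1\otimes I_{|V_2|} & I_{|W_1|}\otimes H_2^{T}\\ I_{|V_1|}\otimes H_2 & H_1^{T}\otimes I_{|W_2|}\end{pmatrix},
\]
the two block-columns being indexed by $V_1\times V_2$ and $W_1\times W_2$ (together by $V$) and the two block-rows by $W_1\times V_2$ and $V_1\times W_2$. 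A short Kronecker computation with the four relations above yields $HH^{T}=0$ (and, likewise, $H^{T}H=0$), hence $\row(H)\subseteq\ker(H)$; since $\rank(H)=\rank(H^{T})$, the quantity to bound below is $|V|-2\rank(H)=\dim\ker(H)-\dim\row(H)=\dim\!\bigl(\ker(H)/\row(H)\bigr)$, the $X$-logical space of $Q(\cG)$. So it suffices to exhibit enough vectors in $\ker(H)$ that stay linearly independent modulo $\row(H)$.

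The construction reads off the block form: $\binom{a\otimes c}{0}\in\ker(H)$ for every $a\in\ker(H_1)$, $c\in\ker(H_2)$ (the first block-row evaluates to $(H_1a)\otimes c=0$, the second to $a\otimes(H_2c)=0$), and symmetrically $\binom{0}{b\otimes d}\in\ker(H)$ for every $b\in\ker(H_1^{T})$, $d\in\ker(H_2^{T})$. I would fix representatives of bases of the quotients $A_i:=\ker(H_i)/\row(H_i)$ and $B_i:=\ker(H_i^{T})/\row(H_i^{T})$ and let $P\subseteq\ker(H)$ be the span of all the vectors $\binom{a\otimes c}{0}$ (over the chosen $a,c$) together with all the vectors $\binom{0}{b\otimes d}$ (over the chosen $b,d$). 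Representatives of a basis of a quotient are themselves linearly independent, so these tensors are linearly independent and $\dim P=\dim A_1\cdot\dim A_2+\dim B_1\cdot\dim B_2$; since $\dim A_i=|V_i|-2\rank(H_i)$ and $\dim B_i=|W_i|-2\rank(H_i^{T})$, this equals the bound asserted in the proposition, and the proof reduces to showing $P\cap\row(H)=0$.

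That last point is the crux. Dually, $H^{T}$ has the transposed block form, so every element of $\row(H)=\mathrm{im}(H^{T})$ has first coordinate in $(\row H_1)\otimes\mathbb{F}_2^{|V_2|}+\mathbb{F}_2^{|V_1|}\otimes(\row H_2)$ and second coordinate in $\mathbb{F}_2^{|W_1|}\otimes(\row H_2^{T})+(\row H_1^{T})\otimes\mathbb{F}_2^{|W_2|}$. Given $\binom{x}{y}\in P\cap\row(H)$, by construction of $P$ we also have $x\in\ker H_1\otimes\ker H_2$ and $y\in\ker H_1^{T}\otimes\ker H_2^{T}$; intersecting the two descriptions and using $\row H_i\subseteq\ker H_i$ and $\row H_i^{T}\subseteq\ker H_i^{T}$ (from the first paragraph) through the elementary identity $(\row H_1\otimes\mathbb{F}_2^{|V_2|})\cap(\ker H_1\otimes\ker H_2)=\row H_1\otimes\ker H_2$ and its three variants, one finds that $x$ maps to $0$ in $A_1\otimes A_2$ and $y$ maps to $0$ in $B_1\otimes B_2$. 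Since the defining tensors of $P$ map to the natural bases of $A_1\otimes A_2$ and of $B_1\otimes B_2$, this forces $\binom{x}{y}=0$, whence $\dim(\ker H/\row H)\ge\dim P$, which is the claim.

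A more conceptual but more machinery-heavy route to the same estimate is to note that $\ker(H)/\row(H)$ is a homology group of the tensor product of the two-periodic chain complexes $\cdots\to\mathbb{F}_2^{|V_i|}\xrightarrow{H_i}\mathbb{F}_2^{|W_i|}\xrightarrow{H_i^{T}}\mathbb{F}_2^{|V_i|}\to\cdots$ attached to $\cG_1$ and $\cG_2$, and to apply the Künneth formula over $\mathbb{F}_2$ (no Tor terms), reading off the summands built from the $A_i$'s and $B_i$'s. There the work lies in verifying block-by-block that the product complex is precisely this tensor product and in truncating the periodic complexes so that Künneth applies to genuinely finite-dimensional complexes; I expect this verification, together with the independence-modulo-$\row(H)$ step above, to be the only nontrivial parts.
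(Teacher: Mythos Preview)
Your proposal is correct and follows essentially the same approach as the paper: choose complements $U_i$ of $\row(H_i)$ in $\ker(H_i)$ and $U_i^{T}$ of $\row(H_i^{T})$ in $\ker(H_i^{T})$ (your ``representatives of bases of the quotients'' span exactly such complements), embed $U_1\otimes U_2$ and $U_1^{T}\otimes U_2^{T}$ into $\ker(H)$ via the two block-coordinates, and show the resulting subspace meets $\row(H)$ trivially by projecting onto each block and observing that the projection of $\row(H)$ lands in $(\row H_1\otimes\BFF_2^{V_2})+(\BFF_2^{V_1}\otimes\row H_2)$ (respectively the transposed version). The only cosmetic difference is that you package the computations via the Kronecker block form of $H$, whereas the paper writes out the rows of $H$ coordinate-by-coordinate; both arguments yield $\dim A_1\dim A_2+\dim B_1\dim B_2$, which is also what the paper's proof in fact derives.
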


To prove this proposition, we first introduce some notations and state a few lemmas. 
Assume $\cG_i=(V_i, W_i, E_i)$ and $\CGG=(V,W,E)$,
where 
\begin{equation*}
V_i = \{v_1^i, v_2^i, \dots, v_{m_i}^i\},\quad
W_i = \{w_1^i, w_2^i, \dots, w_{n_i}^i\},
\end{equation*} 
with $m_i = |V_i|$ and $n_i = |W_i|$, for $i=1,2$. 
Let $\cC_1$ and $\cC_2$ be the binary codes specified by $\cG_1$ and $\cG_2$, respectively, 
with $H_1$ and $H_2$ being the corresponding check matrices.
The rows of $H_i$ are indexed by elements of $W_i$, and the columns of $H_i$ are indexed by elements of $V_i$, for $i=1,2$.

Given a set $S=\{s_1, \cdots,s_{|S|}\}$, let $\{e_1^S, e_2^S, \cdots, e_{|S|}^S\}$ be a set of standard orthonormal basis vectors for $\BFF_2^S$, 
where $e_j^S$ is an $|S|$-dimensional vector with $1$ in the $j$-th position and $0$ in all other positions. 
For convenience, let $V_{WW}=W_1\times W_2$, 
$V_{VV}=V_1\times V_2$, 
$W_{VW}=V_1\times W_2$, 
and $W_{WV}=W_1\times V_2$. 
Then, $V=V_{WW}\bigcup V_{VV}$ and $W=W_{VW}\bigcup W_{WV}$.

We can now describe the rows of the parity-check matrix $H$. 
The rows of $H$ are indexed by the elements of $W_{VW}\bigcup W_{WV}$, 
and the columns of $H$ are indexed by the elements of $V_{WW}\bigcup V_{VV}$. 
Note that 
$$\BFF_2^V=\BFF_2^{V_{WW}}\oplus\BFF_2^{V_{VV}}\cong\left(\BFF_2^{W_1}\otimes\BFF_2^{W_2}\right)\oplus\left(\BFF_2^{V_1}\otimes\BFF_2^{V_2}\right).$$
For example, if a check node $w\in W$ is only connected to the node $(w_i^1, w_j^2)\in V_{WW}$, then the corresponding row of the check node in $H$ is 
$$\vec{H}(w)=\left(e_i^{W_1}\otimes e_j^{W_2}, \;\;\mathbf{0}\right),$$
where 
$e_i^{W_1}\otimes e_j^{W_2}\in\BFF_2^{W_1}\otimes\BFF_2^{W_2}$, and $\mathbf{0}\in\BFF_2^{V_1}\otimes\BFF_2^{V_2}$.
Similarly, if a check node $w\in W$ is only connected to a node $(v_i^1, v_j^2)\in V_{VV}$, then the corresponding row of the check node in $H$ is
$$\vec{H}(w)=\left(\mathbf{0}, \;\;e_i^{V_1}\otimes e_j^{V_2}\right),$$
where 
$\mathbf{0}\in\BFF_2^{W_1}\otimes\BFF_2^{W_2}$, and $e_i^{V_1}\otimes e_j^{V_2}\in\BFF_2^{V_1}\otimes\BFF_2^{V_2}$.
Thus, for a given $(v_i^1, w_j^2)\in W_{VW}$, the corresponding row in $H$ is the vector
\begin{equation}
\left(\vec{H_1}(v_i^1)\otimes e_{j}^{W_2}, \;\;
e_{i}^{V_1}\otimes \vec{H_2}(w_j^2)\right),
\end{equation}
where $\vec{H_i}(w_j^i)$, $w_j^i\in W^i$ is the $j$-th row in $H_i$, and $\vec{H_i}(v_j^i)$, $v_j^i\in V^i$ is the $j$-th column in $H_i$.
Similarly, 
for a given $(w_i^1, v_j^2)\in W_{WV}$, the corresponding row in $H$ is the vector
\begin{equation}
\left(e_{i}^{W_1}\otimes \vec{H_2}(v_j^2), \;\;
\vec{H_1}(w_i^1)\otimes e_{j}^{V_2}\right).
\end{equation}

\begin{lem}\label{lem:241221-2}
With the notations in Proposition \ref{prop:lb-rank-H},
we have
$\row(H)\subseteq\ker(H)$, $\row(H_i)\subseteq\ker(H_i)$ and $\row(H_i^{T})\subseteq\ker(H_i^{T})$ for $i=1,2$.
\end{lem}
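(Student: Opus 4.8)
The plan is to prove each inclusion separately, starting with the ones for the factor graphs $\cG_1$ and $\cG_2$ and then leveraging the explicit row description of $H$ recorded just above. For $\row(H_i)\subseteq\ker(H_i)$: since $\cG_i$ is $(r_i,t_i,1)$-exact with $r_i$ odd, Lemma~\ref{lem:241221-1} applies directly and gives the claim. For $\row(H_i^T)\subseteq\ker(H_i^T)$: here $H_i^T$ is the check matrix whose Tanner graph is $\cG_i^T$, which by hypothesis is $(t_i-1, r_i+1, 1)$-exact; the relevant ``$r$''-parameter for this graph is $t_i-1$, which is odd because $t_i$ is even, so Lemma~\ref{lem:241221-1} again applies and yields $\row(H_i^T)\subseteq\ker(H_i^T)$.

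The remaining inclusion $\row(H)\subseteq\ker(H)$ is the part that needs the product structure, and I expect it to be the main (though still routine) obstacle. The strategy is to show $H H^T = 0$ over $\BFF_2$ by computing the pairwise inner product of any two rows of $H$ using the explicit formulas
\[
\left(\vec{H_1}(v_i^1)\otimes e_{j}^{W_2},\; e_{i}^{V_1}\otimes \vec{H_2}(w_j^2)\right)
\quad\text{and}\quad
\left(e_{i}^{W_1}\otimes \vec{H_2}(v_j^2),\; \vec{H_1}(w_i^1)\otimes e_{j}^{V_2}\right)
\]
for rows indexed by $W_{VW}$ and $W_{WV}$ respectively. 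The inner product splits according to the direct sum $\BFF_2^V = (\BFF_2^{W_1}\otimes\BFF_2^{W_2})\oplus(\BFF_2^{V_1}\otimes\BFF_2^{V_2})$, so each computation reduces to a sum of products of inner products of rows/columns of $H_1$ and $H_2$ (together with $e$-vector inner products, which are just Kronecker deltas). When both rows come from $W_{VW}$, one gets $\langle \vec{H_1}(v_i^1), \vec{H_1}(v_{i'}^1)\rangle\,\delta_{jj'} + \delta_{ii'}\,\langle \vec{H_2}(w_j^2),\vec{H_2}(w_{j'}^2)\rangle$; using $\row(H_1^T)\subseteq\ker(H_1^T)$ (i.e. columns of $H_1$ are pairwise orthogonal, each of even weight $t_1$) and $\row(H_2)\subseteq\ker(H_2)$ (rows of $H_2$ pairwise orthogonal, each of even weight $r_2+1$), every term vanishes mod $2$. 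The $W_{WV}$–$W_{WV}$ case is symmetric. For the mixed case, a row from $W_{VW}$ paired with a row from $W_{WV}$ gives a cross term of the form $\langle \vec{H_1}(v_i^1), e_{i'}^{W_1}\rangle\cdot(\text{something in }\BFF_2^{W_2})$ plus $(\text{something in }\BFF_2^{V_1})\cdot\langle e_j^{W_2}, \dots\rangle$ — one has to match indices carefully, but each such term is a single entry of $H_1$ times a single entry of $H_2$, and the two summands over the two direct-sum blocks cancel in pairs exactly because a $(1,1)$-entry of $H_1$ (resp.\ $H_2$) appears symmetrically. This is precisely the standard CSS/hypergraph-product cancellation, just written for the combined matrix $H=\binom{H_X'}{H_Z'}$; since $H_X'(H_Z')^T=0$ by the graph-product property and both $H_X'(H_X')^T=0$, $H_Z'(H_Z')^T=0$ by the orthogonality of rows/columns of the factor matrices, we get $HH^T=0$, i.e. $\row(H)\subseteq\ker(H)$.

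Thus the write-up would be: (i) invoke Lemma~\ref{lem:241221-1} twice, once for $\cG_i$ (parameter $r_i$ odd) and once for $\cG_i^T$ (parameter $t_i-1$ odd), to dispatch $\row(H_i)\subseteq\ker(H_i)$ and $\row(H_i^T)\subseteq\ker(H_i^T)$; (ii) record that consequently the rows of $H_1,H_2$ and the columns of $H_1,H_2$ all have even weight and are pairwise orthogonal; (iii) compute $\langle \vec H(w),\vec H(w')\rangle=0$ for all $w,w'\in W$ by the case analysis above, concluding $\row(H)\subseteq\ker(H)$. The only real care needed is bookkeeping in the mixed $W_{VW}$–$W_{WV}$ case; everything else is a direct application of the two factor-level inclusions.
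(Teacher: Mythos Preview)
Your proposal is correct and follows essentially the same route as the paper: invoke Lemma~\ref{lem:241221-1} for $\cG_i$ (odd $r_i$) and for $\cG_i^T$ (odd $t_i-1$) to get the factor-level inclusions, then verify $HH^T=0$ by the three-case inner-product computation on rows indexed by $W_{VW}$ and $W_{WV}$, using column/row orthogonality of $H_1,H_2$ for the like--like cases and the observation that both summands in the mixed case equal the same product of matrix entries. Your additional framing via $H=\binom{H_X'}{H_Z'}$ with $H_X'(H_Z')^T=0$, $H_X'(H_X')^T=0$, $H_Z'(H_Z')^T=0$ is a nice conceptual repackaging of exactly these three cases.
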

\begin{proof}
From Lemma \ref{lem:241221-1}, $\row(H_i)\subseteq\ker(H_i)$ and $\row(H_i^{T})\subseteq\ker(H_i^{T})$ for $i=1,2$.
To prove $\row(H)\subseteq\ker(H)$, we consider the following three cases.

Case 1. If $(v_i^1, w_j^2), (v_k^1, w_l^2)\in W_{VW}$. 
The inner product of the corresponding two rows in $H$ is
\begin{equation*}
\begin{array}{rl}
& \ep{\vec{H}(v_i^1, w_j^2), \vec{H}(v_k^1, w_l^2)} \\
=& \ep{\vec{H_1}(v_i^1), \vec{H_1}(v_k^1)}
\ep{e_{j}^{W_2}, e_{l}^{W_2}}
+ \ep{e_{i}^{V_1}, e_{k}^{V_1}}
\ep{\vec{H_2}(w_j^2), \vec{H_2}(w_l^2)}\\
=&0.
\end{array}
\end{equation*}

Case 2. If $(w_i^1, v_j^2), (w_k^1, v_l^2)\in W_{WV}$.
The inner product of the corresponding two rows in $H$ is
\begin{equation*}
\begin{array}{rl}
& \ep{\vec{H}(w_i^1, v_j^2), \vec{H}(w_k^1, v_l^2)} \\
=& \ep{e_{i}^{W_1}, e_{k}^{W_1}}
\ep{\vec{H_2}(v_j^2), \vec{H_2}(v_l^2)}
+ \ep{\vec{H_1}(w_i^1),\vec{H_1}(w_k^1)}
\ep{e_{j}^{V_2}, e_{l}^{V_2}} \\
=&0.
\end{array}
\end{equation*}

Case 3. If $(v_i^1, w_j^2)\in W_{VW}$ and $(w_k^1, v_l^2)\in W_{WV}$. 
Note that 
$\vec{H_1}(v_i^1)\cdot e_{k}^{W_1}$ and $e_i^{V_1}\cdot \vec{H_1}(w_k^1)$
are either both $0$ or both $1$.
Also, $e_j^{W_2}\cdot \vec{H_2}(v_l^2)$ and $\vec{H_2}(w_j^2)\cdot e_{l}^{V_2}$
are either both $0$ or both $1$.
The inner product of the corresponding two rows in $H$ is
\begin{equation*}
\begin{array}{rl}
& \ep{\vec{H}(v_i^1, w_j^2), \vec{H}(w_k^1, v_l^2)} \\
=& \ep{\vec{H_1}(v_i^1), e_{k}^{W_1}}
\ep{e_j^{W_2}, \vec{H_2}(v_l^2)}
+ \ep{e_i^{V_1}, \vec{H_1}(w_k^1)}
\ep{\vec{H_2}(w_j^2), e_{l}^{V_2}} \\
=&0.
\end{array}
\end{equation*}
Hence any two rows in $H$ have inner product 0.
So $\row(H)\subseteq\ker(H).$
\end{proof}

For matrices $H_1$ and $H_2$ defined in Proposition \ref{prop:lb-rank-H}, since $\row(H_j)\subseteq\ker(H_j)$ and $\row(H_j^{T})\subseteq\ker(H_j^{T})$ for $j=1,2$, one can choose $U_j$ and $U_j^{T}$ such that 
$$U_j\oplus\row(H_j)=\ker(H_j), \quad j=1,2,$$
$$U_j^{T}\oplus\row(H_j^{T})=\ker(H_j^{T}), \quad j=1,2.$$
We also naturally define two mappings:
\begin{equation}
\begin{aligned}
\phi:\; \BFF_2^{V_1}\otimes\BFF_2^{V_2}&\rightarrow \BFF_2^{V}=\left(\BFF_2^{W_1}\otimes\BFF_2^{W_2}\right)\oplus\left(\BFF_2^{V_1}\otimes\BFF_2^{V_2}\right)\\
\mathbf{a} &\mapsto (\mathbf{0}, \mathbf{a})
\end{aligned}   
\end{equation}
\begin{equation}
\begin{aligned}
\phi^{T}:\; \BFF_2^{W_1}\otimes\BFF_2^{W_2}&\rightarrow \BFF_2^{V}=\left(\BFF_2^{W_1}\otimes\BFF_2^{W_2}\right)\oplus\left(\BFF_2^{V_1}\otimes\BFF_2^{V_2}\right)\\
\mathbf{b} &\mapsto ( \mathbf{b},\mathbf{0})
\end{aligned}   
\end{equation}

With the notations above, we have the following lemma. 
\begin{lem}\label{lem:241221-3}
(1). $\phi(U_1\otimes U_2)\subseteq\ker(H)$\\
(2). 
$\phi^{T}(U_1^{T}\otimes U_2^{T})\subseteq\ker(H)$.
\end{lem}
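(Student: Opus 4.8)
\textbf{Proof plan for Lemma \ref{lem:241221-3}.}
The plan is to verify directly, using the row description of $H$ from the paragraphs preceding Lemma \ref{lem:241221-2}, that every generator of $\phi(U_1\otimes U_2)$ and of $\phi^T(U_1^T\otimes U_2^T)$ is annihilated by $H$. It suffices to check this on a spanning set, so I would take $\mathbf a\in U_1$ and $\mathbf b\in U_2$ and show $H\cdot\phi(\mathbf a\otimes\mathbf b)=\mathbf 0$, and symmetrically $\mathbf c\in U_1^T$, $\mathbf d\in U_2^T$ with $H\cdot\phi^T(\mathbf c\otimes\mathbf d)=\mathbf 0$; the full statement follows by bilinearity.

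For part (1): a generic row of $H$ is indexed either by $(v_i^1,w_j^2)\in W_{VW}$ or by $(w_i^1,v_j^2)\in W_{WV}$. Since $\phi(\mathbf a\otimes\mathbf b)=(\mathbf 0,\ \mathbf a\otimes\mathbf b)$ lives in the $\BFF_2^{V_1}\otimes\BFF_2^{V_2}$ component, only the second coordinate of each row contributes. For a row indexed by $(v_i^1,w_j^2)$ that second coordinate is $e_i^{V_1}\otimes\vec{H_2}(w_j^2)$, so the pairing equals $\ep{e_i^{V_1},\mathbf a}\,\ep{\vec{H_2}(w_j^2),\mathbf b}$, which vanishes because $\mathbf b\in U_2\subseteq\ker(H_2)$ makes $\ep{\vec{H_2}(w_j^2),\mathbf b}=0$ for every $j$. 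For a row indexed by $(w_i^1,v_j^2)$ the second coordinate is $\vec{H_1}(w_i^1)\otimes e_j^{V_2}$, so the pairing equals $\ep{\vec{H_1}(w_i^1),\mathbf a}\,\ep{e_j^{V_2},\mathbf b}$, which vanishes because $\mathbf a\in U_1\subseteq\ker(H_1)$. Hence $H\cdot\phi(\mathbf a\otimes\mathbf b)=\mathbf 0$.

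Part (2) is the mirror image: $\phi^T(\mathbf c\otimes\mathbf d)=(\mathbf c\otimes\mathbf d,\ \mathbf 0)$ lies in the $\BFF_2^{W_1}\otimes\BFF_2^{W_2}$ component, so now only the first coordinate of each row matters. For a row indexed by $(v_i^1,w_j^2)$ the first coordinate is $\vec{H_1}(v_i^1)\otimes e_j^{W_2}$, and the pairing is $\ep{\vec{H_1}(v_i^1),\mathbf c}\,\ep{e_j^{W_2},\mathbf d}$, which vanishes since $\mathbf c\in U_1^T\subseteq\ker(H_1^T)$ so $\ep{\vec{H_1}(v_i^1),\mathbf c}=0$ (the columns of $H_1$ are the rows of $H_1^T$). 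For a row indexed by $(w_i^1,v_j^2)$ the first coordinate is $e_i^{W_1}\otimes\vec{H_2}(v_j^2)$, and the pairing is $\ep{e_i^{W_1},\mathbf c}\,\ep{\vec{H_2}(v_j^2),\mathbf d}$, which vanishes because $\mathbf d\in U_2^T\subseteq\ker(H_2^T)$. Thus $H\cdot\phi^T(\mathbf c\otimes\mathbf d)=\mathbf 0$, completing the proof.

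There is essentially no hard step here; the only thing to be careful about is bookkeeping — matching the two tensor-factor components of $\BFF_2^V$ against which factor ($U_j$ versus $U_j^T$, i.e.\ $\ker H_j$ versus $\ker H_j^T$) each coordinate of a row pairs with, and recalling that a column of $H_i$ is a row of $H_i^T$ so that $\ker(H_i^T)$ is exactly what kills those columns. This is really just the same computation as in Lemma \ref{lem:241221-2}, but with one of the two vectors in each pairing replaced by an element of $U_1$, $U_2$, $U_1^T$ or $U_2^T$.
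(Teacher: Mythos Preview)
Your proposal is correct and follows essentially the same approach as the paper's proof: both check that each row of $H$ (split into the two cases $(v_i^1,w_j^2)\in W_{VW}$ and $(w_i^1,v_j^2)\in W_{WV}$) pairs to zero with a simple tensor from $U_1\otimes U_2$ or $U_1^T\otimes U_2^T$, using exactly the same vanishing factors coming from $U_j\subseteq\ker(H_j)$ and $U_j^T\subseteq\ker(H_j^T)$. Your write-up is slightly more explicit about the bilinearity reduction and the identification of columns of $H_i$ with rows of $H_i^T$, but the argument is the same.
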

\begin{proof}
(1). For any row $\vec{H}(*,*)$ in the matrix $H$, we need to show that $\ep{\vec{H}(*,*), (\mathbf{0}, u_1\otimes u_2)}=0$ for any $u_1\in U_1$ and $u_2\in U_2$.

For every $(v_i^1, w_j^2)\in W_{VW}$, the corresponding row in $H$ is $\vec{H}(v_i^1, w_j^2)$. 
Since $u_2\in U_2$, $u_2\in \ker(H_2)$, we know that $\ep{u_2, \vec{H_2}(w_j^2)}=0.$ 
Thus, 
\begin{equation*}
\begin{array}{rl}
& \ep{\vec{H}(v_i^1, w_j^2), (\mathbf{0}, u_1\otimes u_2)}\\
=& 0 + \ep{e_{i}^{V_1}, u_1}
\ep{\vec{H_2}(w_j^2), u_2}\\
=&0.
\end{array}
\end{equation*}
Similarly, for every $(w_i^1, v_j^2)\in W_{WV}$, the corresponding row in $H$ is $\vec{H}(w_i^1, v_j^2)$. 
Since $u_1\in U_1$, $u_1\in \ker(H_1)$, we know that $\ep{u_1, \vec{H_1}(w_i^1)}=0.$ 
Thus, 
\begin{equation*}
\begin{array}{rl}
& \ep{\vec{H}(w_i^1, v_j^2), (\mathbf{0}, u_1\otimes u_2)}\\
=& 0 + \ep{\vec{H_1}(w_i^1), u_1}
\ep{e_{j}^{V_2}, u_2}\\
=&0.
\end{array}
\end{equation*}

(2). For any row vector $\vec{H_r}$ in $H$, we need to show that $\ep{\vec{H_r}, (u_1^{T}\otimes u_2^{T}), \mathbf{0}}=0$ for any $u_1^{T}\in U_1^{T}$ and $u_2^{T}\in U_2^{T}$.

For every $(v_i^1, w_j^2)\in W_{VW}$, the corresponding row in $H$ is $\vec{H}(v_i^1, w_j^2)$. 
Since $u_1^{T}\in U_1^{T}$, $u_1^{T}\in \ker(H_1^{T})$, we know that $\ep{u_1^{T}, \vec{H_1}^{T}(w_i^1)}=\ep{u_1^{T}, \vec{H_1}(v_i^1)}=0.$ 
Thus, 
\begin{equation*}
\begin{array}{rl}
& \ep{\vec{H}(v_i^1, w_j^2), ( u_1^T\otimes u_2^T, \mathbf{0})}\\
=& \ep{\vec{H_1}(v_i^1), u_1^T}
\ep{e_{j}^{U_2}, u_2^T} + 0\\
=&0.
\end{array}
\end{equation*}
Similarly, for every $(w_i^1, v_j^2)\in W_{WV}$, the corresponding row in $H$ is $\vec{H}(w_i^1, v_j^2)$. 
Since $u_2^{T}\in U_2^{T}$, $u_2^{T}\in \ker(H_2^{T})$, we know that $\ep{u_2^{T}, \vec{H_2}^{T}(w_j^2)}=\ep{u_2^{T}, \vec{H_2}(v_j^2)}=0.$ 
Thus, 
\begin{equation*}
\begin{array}{rl}
& \ep{\vec{H}(w_i^1, v_j^2), (u_1^T\otimes u_2^T), \mathbf{0}}\\
=& \ep{e_{i}^{U_1}, u_1^T}
\ep{\vec{H_2}(v_j^2), u_2^T} + 0\\
=&0.
\end{array}
\end{equation*}
\end{proof}

Note that $\phi(U_1\otimes U_2)\bigcap \phi^{T}(U_1^{T}\otimes U_2^{T})=\{\mathbf{0}\}$,
so $\phi(U_1\otimes U_2)\oplus\phi^T(U_1^T\otimes U_2^T)$ is a direct sum,
and $\phi(U_1\otimes U_2)\oplus\phi^T(U_1^T\otimes U_2^T)\subseteq \ker(H)$.
Moreover, we have the following lemma.
\begin{lem}\label{lem:241221-4}
$\left( 
\phi(U_1\otimes U_2)\oplus\phi^T(U_1^T\otimes U_2^T)
\right) \bigcap \row(H)=\{\mathbf{0}\}$.
\end{lem}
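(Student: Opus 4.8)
The plan is to show that any element lying in both $\phi(U_1\otimes U_2)\oplus\phi^T(U_1^T\otimes U_2^T)$ and in $\row(H)$ must be zero, by exploiting the block decomposition $\BFF_2^V = \left(\BFF_2^{W_1}\otimes\BFF_2^{W_2}\right)\oplus\left(\BFF_2^{V_1}\otimes\BFF_2^{V_2}\right)$ together with the structure of the rows of $H$ described above. A generic element of $\row(H)$ is a sum $\sum_{(v_i^1,w_j^2)} \lambda_{ij}\,\vec H(v_i^1,w_j^2) + \sum_{(w_i^1,v_j^2)} \mu_{ij}\,\vec H(w_i^1,v_j^2)$; writing this out in the two blocks, the $\BFF_2^{W_1}\otimes\BFF_2^{W_2}$-component is $\sum_{ij}\lambda_{ij}\,\vec H_1(v_i^1)\otimes e_j^{W_2} + \sum_{ij}\mu_{ij}\, e_i^{W_1}\otimes \vec H_2(v_j^2)$, i.e. it lies in $\row(H_1^T)\otimes\BFF_2^{W_2} + \BFF_2^{W_1}\otimes\row(H_2^T)$, and similarly the $\BFF_2^{V_1}\otimes\BFF_2^{V_2}$-component lies in $\BFF_2^{V_1}\otimes\row(H_2) + \row(H_1)\otimes\BFF_2^{V_2}$, with the \emph{same} coefficients $\lambda_{ij},\mu_{ij}$ tying the two blocks together.

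First I would suppose $z = \phi(a) + \phi^T(b)$ with $a\in U_1\otimes U_2$, $b\in U_1^T\otimes U_2^T$, and $z\in\row(H)$ as above. Comparing the $\BFF_2^{W_1}\otimes\BFF_2^{W_2}$-block gives
\[
b = \sum_{ij}\lambda_{ij}\,\vec H_1(v_i^1)\otimes e_j^{W_2} + \sum_{ij}\mu_{ij}\, e_i^{W_1}\otimes \vec H_2(v_j^2) \in \big(\row(H_1^T)\otimes\BFF_2^{W_2}\big) + \big(\BFF_2^{W_1}\otimes\row(H_2^T)\big),
\]
while comparing the $\BFF_2^{V_1}\otimes\BFF_2^{V_2}$-block gives
\[
a = \sum_{ij}\lambda_{ij}\, e_i^{V_1}\otimes \vec H_2(w_j^2) + \sum_{ij}\mu_{ij}\,\vec H_1(w_i^1)\otimes e_j^{V_2} \in \big(\BFF_2^{V_1}\otimes\row(H_2)\big) + \big(\row(H_1)\otimes\BFF_2^{V_2}\big).
\]
Since $a\in U_1\otimes U_2$ with $U_j$ a complement of $\row(H_j)$ in $\ker(H_j)$, and $U_1\otimes U_2$ meets $\BFF_2^{V_1}\otimes\row(H_2) + \row(H_1)\otimes\BFF_2^{V_2}$ only in $\{\mathbf 0\}$ (this is a standard tensor-product/complement fact: $U_1\otimes U_2$ intersects trivially with $(\BFF_2^{V_1}\otimes\row H_2)+(\row H_1\otimes\BFF_2^{V_2})$ because $U_1\cap\row(H_1)=\{\mathbf 0\}$ and $U_2\cap\row(H_2)=\{\mathbf 0\}$), we conclude $a=\mathbf 0$. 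The identical argument with $T$-superscripts, using $U_j^T\cap\row(H_j^T)=\{\mathbf 0\}$, forces $b=\mathbf 0$, hence $z=\mathbf 0$.

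The main obstacle I anticipate is establishing cleanly that $U_1\otimes U_2$ intersects $(\BFF_2^{V_1}\otimes\row H_2)+(\row H_1\otimes\BFF_2^{V_2})$ only in $\mathbf 0$; this is the one genuinely linear-algebraic point and needs the right framing. The way to handle it is to pick a basis of $\BFF_2^{V_1}$ adapted to the decomposition $\BFF_2^{V_1} = U_1 \oplus \row(H_1) \oplus (\text{complement of }\ker H_1)$, and likewise for $\BFF_2^{V_2}$; then $U_1\otimes U_2$ is the span of the pure tensors of ``$U$-type $\otimes$ $U$-type'' basis vectors, whereas $(\BFF_2^{V_1}\otimes\row H_2)+(\row H_1\otimes\BFF_2^{V_2})$ is spanned by pure tensors having at least one ``$\row$-type'' factor — these two sets of basis pure tensors are disjoint, so the subspaces they span meet trivially. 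Everything else is bookkeeping with the explicit row formulas already recorded before Lemma~\ref{lem:241221-2}, so once this intersection lemma is in place the proof closes immediately.
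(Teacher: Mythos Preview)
Your proposal is correct and follows essentially the same route as the paper's proof. Both arguments project an element of $\row(H)$ onto the two blocks $\BFF_2^{V_1}\otimes\BFF_2^{V_2}$ and $\BFF_2^{W_1}\otimes\BFF_2^{W_2}$, identify the projection on the first block as lying in $(\row(H_1)\otimes\BFF_2^{V_2})+(\BFF_2^{V_1}\otimes\row(H_2))$ (and analogously for the second), and then use a direct-sum decomposition $\BFF_2^{V_i}=\row(H_i)\oplus U_i\oplus\Gamma_i$ to see that this subspace meets $U_1\otimes U_2$ trivially; your explicit use of coefficients $\lambda_{ij},\mu_{ij}$ and the remark that they tie the two blocks together are not actually needed (the paper also treats the two blocks independently), but they do no harm.
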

\begin{proof}
Any vector in $\phi(U_1\otimes U_2)\oplus\phi^T(U_1^T\otimes U_2^T)$ can be represented as $(\xi_2, \xi_1)$, where $\xi_1\in U_1\otimes U_2$ and $\xi_2\in U_1^T\otimes U_2^T$.
We claim that $\row(H)\big|_{\BFF_2^{V_{VV}}}\bigcap U_1\otimes U_2=\{\mathbf{0}\}$.

Let $R_V:=\row(H)\big|_{\BFF_2^{V_{VV}}}$. 
We know that $\row(H)$ is spanned by
\begin{equation*}
\left\{\begin{array}{r}
\left(e_{i}^{W_1}\otimes \vec{H_2}(v_j^2), \;\;
\vec{H_1}(w_i^1)\otimes e_{j}^{V_2}\right),
\left(\vec{H_1}(v_k^1)\otimes e_{l}^{W_2}, \;\;
e_{k}^{V_1}\otimes \vec{H_2}(w_l^2)\right)\bigg|
\text{for any }v_k^1\in V_1,v_j^2\in V_2,\\
w_i^1\in W_1, w_l^2\in W_2
\end{array}\right\}.
\end{equation*}
So, $R_V$ is spanned by 
\begin{equation*}
\left\{
\left(\vec{H_1}(w_i^1)\otimes e_{j}^{V_2}\right),
\left(e_{k}^{V_1}\otimes \vec{H_2}(w_l^2)\right)\bigg|
\text{for any }v_k^1\in V_1, w_i^1\in W_1
\right\}.
\end{equation*}
Thus, 
$$R_V=\left(\row(H_1)\otimes\BFF_2^{V_2}\right)+\left(\BFF_2^{V_1}\otimes \row(H_2)\right).$$
Since 
\begin{equation*}
\begin{aligned}
\BFF_2^{V_1}\otimes\BFF_2^{V_2}
=&\left(\row(H_1)\oplus U_1\oplus \Gamma_1\right)\otimes\left(\row(H_2)\oplus U_2\oplus \Gamma_2\right)\\
=&(\row(H_1)\otimes\row(H_2))\oplus\cdots\oplus(\Gamma_1\otimes\Gamma_2)\oplus(U_1\otimes U_2),
\end{aligned}
\end{equation*}
and $R_V\subseteq(\row(H_1)\otimes\row(H_2))\oplus\cdots\oplus(\Gamma_1\otimes\Gamma_2)$, we have that $R_V\bigcap U_1\otimes U_2=\{\mathbf{0}\}$.

Let $R_W:=\row(H)\big|_{\BFF_2^{V_{WW}}}$.
Similarly, we have that $R_W\bigcap U_1^T\otimes U_2^T=\{\mathbf{0}\}$.
Therefore, we have proven this lemma.
\end{proof}

With the lemmas above, we are now able to prove the proposition.
\begin{proof}[Proof of Proposition \ref{prop:lb-rank-H}]
By Lemmas \ref{lem:241221-2} and \ref{lem:241221-4}, we have 
\begin{equation*}
\begin{aligned}
&\dim\ker(H)-\dim\row(H)\\
=&|V|-2\rank(H)\\
\geq&\dim(U_1\otimes U_2)+\dim(U_1^T\otimes U_2^T)\\
=&\left(\dim\ker(H_1)-\dim\row(H_1)\right)\left(\dim\ker(H_2)-\dim\row(H_2)\right)\\
&+\left(\dim\ker(H_1^T)-\dim\row(H_1^T)\right)\left(\dim\ker(H_2^T)-\dim\row(H_2^T)\right)\\
=&\left(|V_1|-2\rank(H_1)\right)\left(|W_1|-2\rank(H_1^T)\right)+
\left(|V_2|-2\rank(H_2)\right)\left(|W_2|-2\rank(H_2^T)\right).
\end{aligned}
\end{equation*}
\end{proof}

Finally let us estimate the minimal distance of $Q\left(\mathcal{G}_1 \boxtimes \mathcal{G}_2\right)$. 
The following lemma is a corollary of Theorem 15 in \cite{TZ13}.

\begin{lem}\label{250101lem2}
For $i \in\{1,2\}$, let $d_i$ be the minimum distance of the classic code with Tanner graph $\mathcal{G}_i$ and let $d_i^T$ denote the minimum distance of the classic code specified by the transpose Tanner graph $\mathcal{G}_i^T$. 
The minimum distance $d_Q$ of the quantum code $\mathrm{Q}\left(\mathcal{G}_1 \boxtimes \mathcal{G}_2\right)$ satisfies
$$
d_Q \geq \min \left(d_1, d_2, d_1^T, d_2^T\right).
$$
\end{lem}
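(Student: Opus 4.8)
The plan is to deduce this from the known minimum-distance estimate for hypergraph product codes in \cite{TZ13} (Theorem 15 there). Recall from the Remark following Proposition \ref{250111prop2} that $\mathrm{Q}(\CGG_1\boxtimes\CGG_2)=CSS(H,H)$ with $H=\binom{H_X'}{H_Z'}$, where $CSS(H_X',H_Z')$ is exactly the hypergraph product code associated with $\CGG_1$ and $\CGG_2$. In particular the rows of $H$ are the rows of $H_X'$ together with the rows of $H_Z'$, so that $\ker H=\ker H_X'\cap\ker H_Z'$ and $\row H=\row H_X'+\row H_Z'$; equivalently, the stabilizer group of $\mathrm{Q}(\CGG_1\boxtimes\CGG_2)$ contains that of $CSS(H_X',H_Z')$, so the former is a subcode of the latter.

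First I would use that $\row H\subseteq\ker H$ (Lemma \ref{lem:241221-2}) together with $H_X=H_Z=H$, so that the minimum distance of $\mathrm{Q}(\CGG_1\boxtimes\CGG_2)$ is
\[
d_Q=\min\bigl\{\mathrm{wt}(v):\ v\in\ker H\setminus\row H\bigr\}
\]
(and if this set is empty there is nothing to prove). Fix such a $v$. Since $\ker H\subseteq\ker H_Z'$ and $\row H_X'\subseteq\row H$, the vector $v$ lies in $\ker H_Z'$ but not in $\row H_X'$, so it represents a nonzero class of $\ker H_Z'/\row H_X'$, i.e.\ a nontrivial $X$-type logical operator of the hypergraph product code $CSS(H_X',H_Z')$. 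Hence $\mathrm{wt}(v)\ge d_{\mathrm{HPC}}$, where $d_{\mathrm{HPC}}$ is the minimum distance of $CSS(H_X',H_Z')$ (using that $d_{\mathrm{HPC}}$ is the minimum, over all nonzero cosets, of the least-weight representative); taking the minimum over $v$ gives $d_Q\ge d_{\mathrm{HPC}}$.

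Finally, Theorem 15 of \cite{TZ13} applied to the Tanner graphs $\CGG_1,\CGG_2$ gives $d_{\mathrm{HPC}}\ge\min\{d_1,d_2,d_1^{T},d_2^{T}\}$, where $d_i$ and $d_i^{T}$ are the minimum distances of the classical codes with Tanner graphs $\CGG_i$ and $\CGG_i^{T}$. Combining the two inequalities yields $d_Q\ge\min\{d_1,d_2,d_1^{T},d_2^{T}\}$, which is the claim. I expect the only delicate point to be bookkeeping: making the identification $\mathrm{Q}(\CGG_1\boxtimes\CGG_2)=CSS(H,H)$ with $H=\binom{H_X'}{H_Z'}$ fully explicit under the labelling of bit nodes $V=(W_1\times W_2)\cup(V_1\times V_2)$ and check nodes $W=(V_1\times W_2)\cup(W_1\times V_2)$, and quoting \cite{TZ13} in precisely the form needed (its Theorem 15 bounds the hypergraph product distance below by $\min\{d_1,d_2,d_1^{T},d_2^{T}\}$, with equality when the relevant transpose codes are nontrivial).
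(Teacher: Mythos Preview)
Your proposal is correct and takes essentially the same approach as the paper: the paper simply declares the lemma to be ``a corollary of Theorem~15 in \cite{TZ13}'' without further argument, and your proof spells out precisely how that corollary goes, via the observation that the stabilizer group of $Q(\CGG_1\boxtimes\CGG_2)=CSS(H,H)$ contains that of the hypergraph product code $CSS(H_X',H_Z')$, so every nontrivial logical of the former is already a nontrivial logical of the latter. One minor bookkeeping point: the Remark you cite (identifying $H=\binom{H_X'}{H_Z'}$) actually follows Theorem~\ref{250101thm2} rather than Proposition~\ref{250111prop2}.
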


\bibliographystyle{unsrt}
\bibliography{reference}{}

\end{document}